\newcommand{\bbr}{\mathbb{R}}
\newcommand{\bbd}{\mathbb{D}}
\newcommand{\bbi}{\mathbb{I}}
\newcommand{\E}{\mathbb{E}}
\renewcommand{\P}{\mathbb{P}}
\newcommand{\bbt}{\mathbb{T}}
\renewcommand{\vec}[1]{\mathbf{#1}}
\newcommand{\vx}{\vec{x}}
\newcommand{\vL}{\vec{L}}
\newcommand{\vp}{\vec{p}}
\newcommand{\vP}{\vec{P}}
\newcommand{\vK}{\vec{K}}
\newcommand{\vV}{\vec{V}}
\newcommand{\vR}{\vec{R}}
\newcommand{\vF}{\vec{F}}
\newcommand{\va}{\boldsymbol{\alpha}}
\newcommand{\vt}{\boldsymbol{\tau}}
\newcommand{\vi}{\boldsymbol{\iota}}
\newcommand{\fcal}{\mathcal{F}}
\newcommand{\lcal}{\mathcal{L}}
\newcommand{\dt}{\Delta t}
\renewcommand{\succ}{{\cal S}}
\newcounter{modcount}
\newcommand{\modulo}[2]{%
\setcounter{modcount}{#1}\relax
\ifnum\value{modcount}<#2\relax
\else\relax
\addtocounter{modcount}{-#2}\relax
\modulo{\value{modcount}}{#2}\relax
\fi}
\newcommand{\tablepictures}[4][c]{\begin{tabular}[#1]{@{}c@{}}#2\vspace{0.5cm}\\(\alph{#4}) #3\end{tabular}}
\newcounter{gridsearch}
\newcommand{\tabpic}[2]{
    \stepcounter{gridsearch}
    \modulo{\thegridsearch}{2}
    \ifnum\value{modcount}=0
        \tablepictures[t]{#1}{#2}{gridsearch}\\[2.0cm]
    \else
        \tablepictures[t]{#1}{#2}{gridsearch}&~&
    \fi
}
\newtheorem{lemma}{Lemma}[section]
\newtheorem{proposition}[lemma]{Proposition}
\newtheorem{theorem}[lemma]{Theorem}
\newtheorem{corollary}[lemma]{Corollary}
\newtheorem{example1}[lemma]{Example}
\newtheorem{rem1}[lemma]{Remark}
\newtheorem{alg1}[lemma]{Algorithm}
\newtheorem{me1}[lemma]{Mechanism}
\newenvironment{remark}{\begin{rem1}\rm}{\end{rem1}}
\newenvironment{example}{\begin{example1}\rm}{\end{example1}}
\newcommand{\T}{\top}
\newcommand{\diag}[1]{\operatorname{diag}(#1)}
\DeclareMathOperator*{\FIX}{FIX}
\newcommand\ind[1]{\mathbbm{1}_{\{#1\}}}
\begin{document}

\title{\vspace{-20pt}Endogenous distress contagion in a dynamic interbank model: how possible future losses may spell doom today
 }
\author{
Zachary Feinstein\thanks{Stevens Institute of Technology, School of Business, Hoboken, USA. {\tt zfeinste@stevens.edu}.} 
\and 
Andreas S{\o}jmark\thanks{London School of Economics, Department of Statistics, London, UK. {\tt a.sojmark@lse.ac.uk}.}
}
\date{\today}
\maketitle
\abstract{
We introduce a dynamic and stochastic interbank model with an endogenous notion of distress contagion, arising from rational worries about future defaults and ensuing losses. This entails a mark-to-market valuation adjustment for interbank claims, leading to a forward-backward approach to the equilibrium dynamics whereby future default probabilities are needed to determine today’s balance sheets. Distinct from earlier models, the resulting distress contagion acts, endogenously, as a stochastic volatility term that exhibits clustering and down-market spikes. Furthermore, by incorporating multiple maturities, we provide a novel framework for constructing systemic interbank term structures, reflecting the intertemporal risk of contagion. We present the analysis in two parts: first, the simpler single maturity setting that extends the classical interbank network literature and, then, the multiple maturity setting for which we can examine how systemic risk materialises in the shape of the resulting term structures.\\[4pt]
{\bf Keywords:} systemic risk, distress contagion, dynamic network model, multiple maturities, valuation adjustment, volatility effects, term structure, yield curves.
}
\\[28pt]
\noindent {\bf Correspondence}\\Andreas S{\o}jmark, Department of Statistics, London School of Economics,  69 Aldwych, London, WC2B 4RR, UK. Email: a.sojmark@lse.ac.uk.
\\[8pt]
\noindent  {\bf Funding information} \\ Part of this work was funded by the OeNB anniversary fund, project number 17793.

\newpage

\section{Introduction}\label{sec:intro}

Since the 2008 financial crisis, it is widely accepted that systemic risk can cause outsized losses within the financial system due to complex feedback mechanisms. Arguably, the most classical channel for this is direct solvency contagion at default events. In this paper, we study instead a dynamic form of distress contagion that can arise endogenously, today, due to banks' (rational) expectations about potential future defaults and contagion within the network.

The fact that systemic risk remains a very real issue was highlighted by the significant stress to the global financial system in the run-up to and following the recent collapse of Silicon Valley Bank (SVB). Beyond a period of sharp increases in volatility and ailing stock prices along with surging CDS and interbank funding spreads (\cite{IMF_2024}), we saw the outright failings of Silvergate Bank, Signature Bank, and First Republic, as well as the global banking behemoth Credit Suisse. Concerning the latter, Andrew Kenningham, Chief European Economist at Capital Economics, noted that:~\emph{``There's sort of contagion---not because they [SVB] had any connection with Credit Suisse, but because investor sentiment changed and people were scrutinising other banks more carefully''}.\footnote{Quote taken from the article ``How SVB Triggered Credit Suisse’s Latest Mess—and Sparked Fears of a Financial Crisis'', Barron's, March 15, 2023} As Credit Suisse was already suffering, they may have been more susceptible to concerns about the future states of the financial system.

In the Fed's report on SVB, Michael S.\ Barr (\cite{Barr_Fed}), Vice Chair for Supervision at the Fed, stressed that \emph{``contagion from the firm’s failure posed systemic consequences not contemplated by the Federal Reserve’s tailoring framework''}. To capture the potential for such contagion, three main points will be at the center of the framework we develop in this paper. Firstly, the above are all instances of early defaults (e.g., due to capital requirements) in a dynamic setting. Secondly, \cite{Barr_Fed} highlights that:~\emph{``This experience has emphasized why strong bank capital matters. While the proximate cause of SVB’s failure was a liquidity run, the underlying issue was concern about its solvency''}. In our model, capital is the key quantity, and dynamic concerns about future solvency (as determined by the capital) will drive a form of distress contagion in the present. Finally, \cite{Barr_Fed} notes that \emph{``concerns about one firm spread to other firms---even if the firm is not extremely large, highly connected to other financial counterparties, or involved in critical financial services''}. This relates closely to the `crisis of confidence' interpretation of interbank contagion advocated by \cite{glasserman2016contagion}. To capture losses of confidence, they incorporate a mark-to-market valuation of interbank claims into the (static and deterministic) Eisenberg--Noe framework through an ad-hoc valuation function (see Appendix \ref{sect:existing_works} for details). Here, we propose a canonical mechanism for such a mark-to-market valuation in a dynamic and stochastic equilibrium model with early defaults and multiple maturities.

\subsection{Primary contributions}

The central contribution of this paper is the construction of a dynamic interbank model with a mark-to-market valuation of interbank claims that accounts, dynamically, for potential future default contagion within the network, measured by the banks' interconnected conditional (risk-neutral) probabilities of default. First, we shall treat the case of a single maturity in Section~\ref{sec:1maturity}. Then, the extension to multiple maturities is the topic of Section \ref{sec:Kmaturity}. Numerical case studies are presented throughout to demonstrate the financial implications.

Our mark-to-market accounting of interbank assets is in contrast to prior works on dynamic default contagion in interbank networks which---implicitly or explicitly---consider historical price accounting, i.e., all debts are assumed to be paid in full until a default is realized (causing a downward jump in capital). For a brief literature review, see Section \ref{sec:intro-lit}. Focusing on systemic risk, mark-to-market accounting will generally yield a worse outcome for the financial system, as compared to historical price accounting. From a regulatory perspective, if the system is healthy after a valuation adjustment based on  our mark-to-market model, it is therefore reasonable to consider it resilient to shocks, as this represents a worst-case scenario for standard accounting rules. Within our framework, this comparison is invariably true in the single maturity setting (see Appendix \ref{sec:hpa-1}). In the multiple maturity setting, however, we have to account for the impact of defaults on liquidity and rebalancing of assets which complicates the picture. Nevertheless, the comparison remains valid globally in time if we assume zero recovery and take the rebalancing strategies to be independent of bank performance (see Appendix \ref{sec:hpa-K}).

As we will see, a key consequence of marking-to-market is that it creates an avenue for interconnectedness to endogenously affect the volatility of equity returns (beyond the exogenous correlation structure of the banks' assets). In this regard, the so-called `leverage effect' and the related notion of `volatility clustering' are two important stylized facts (\cite{cont_stylized}) that remain debated in the financial literature. Despite the name, \cite{Hasanhodzic_Lo} show that the `leverage effect' is very much the norm even amongst unlevered firms. Moreover, \cite{Figlewski_Wang} conclude that it \emph{``is really a `down-market effect’ that may have little direct connection to firm leverage''}, stressing also that there is \emph{``striking asymmetry of the `leverage effect' between up and down markets''} with \emph{``a strong impact on volatility when stock prices fall and a much weaker effect, or none at all, when they rise''}. Our model provides a new take on this, highlighting how systemic risk may naturally materialise in volatility clustering and a `down market effect'. This arises from dynamically adjusting levels of distress contagion, regulated by an additional endogenous stochastic component to the volatility of bank equity determined in equilibrium, accurately reflecting the current probability of future solvency for all members of the system. Due to network effects, the spikes in volatility impact firms regardless of the riskiness of their book, so long as their counterparties have a chance of defaulting.

Finally, by incorporating multiple maturities, our framework leads to an endogenous interbank term structure, whose shape reflects the dynamic nature and severity of systemic risk. Recently, there has been a significant interest in the possibility of predicting \emph{future} financial crises from current data (see \cite{greenwood2022predictable, Richter_2021}). In relation to this, \cite{Bluwstein_2023} (see also \cite{babecky2014banking}) document how the slope of yield curves can have strong predictive power, emphasizing a materially higher risk of a financial crisis when yield curves are negatively sloped even after controlling for recessions (for which yield curves have well-documented predictive power). Holding all else fixed, we demonstrate how an inverted shape of the yield curve for one institution can arise solely from the riskiness of other banks in the system, due to the imminent potential for loss-of-confidence driven contagion. As far as the authors are aware, our proposed multiple maturity model is the first that incorporates a notion of interbank term structures directly within the modeling of systemic risk.

\subsection{Related literature on interbank networks}\label{sec:intro-lit}

The network of interbank obligations, and the resulting default contagion, has been studied in the seminal works of~\cite{EN01, GK10, RV13} in deterministic one-period systems. These static systems have been extended in a number of directions and this remains an active area of research. We refer the interested reader to~\cite{glasserman2016contagion,AW_15} for surveys of the literature. 

Dynamic extensions of the works of~\cite{EN01,RV13,GK10} have been studied in~\cite{CC15,ferrara16,KV16} in discrete-time settings and~\cite{Lipton2016, Lipton2017, sonin2020continuous, feinstein2021dynamic, BBF18} in continuous-time frameworks. All of these cited works, however, implicitly use a historical price accounting rule for interbank assets. That is, all interbank assets are either marked as if the payment will be made in full (prior to maturity or a default) or marked based on the realized payment (after debt maturity or a default).  This is in contrast to mark-to-market accounting (generally used for tradable assets) whereby the value of interbank assets would depend on the expected payments prior to maturity or a default. 
Such re-marking of interbank assets, ahead of or without actual defaults, leads to a form of distress contagion. Distress contagion in interbank networks is studied in the work of~\cite{veraart2020distress} within an Eisenberg--Noe framework. Due to the static nature, the distress contagion imposed in that work comes from exogenous rules, while our goal is to endogenize the distress contagion by linking it to the current conditional probabilities of future defaults in a dynamic setting. The fact that we rely on a mark-to-market mechanism is reminiscent of the literature on indirect, price-mediated contagion, see in particular \cite{CS16} and \cite{CW13, CW14}. In those works, however, the mark-to-market losses result from fire sales in stress test scenarios, specifically in relation to the selling of common holdings, while we develop an equilibrium model for the endogenous valuation of interbank obligations without any selling or related actions in response to a shock.

A natural point of view for our model is that of a central bank or other financial actor wishing to perform a tractable valuation adjustment to understand, today, the full potential for informational contagion within the network going forward, in case (rational) worries about future insolvency and illiquidity were to set in motion a spiral of devaluations. We thus contribute to the literature on network valuation adjustments (e.g.,~\cite{CS07network, fischer2014,bardoscia_2019, barucca2020network,banerjee2022pricing}) which has focused on a single maturity and which has so far not considered a fully dynamic framework.

Arguably, the work of \cite{bardoscia_2019} comes closest to the framework we develop here. Similarly to our model, in the single maturity setting, they consider a forward-looking notion of contagion with early defaults. However, their default rule only involves the realized capital at the initial time together with exogenous dynamics of the external assets for each bank on its own. Thus, very differently from our framework, asset correlations are irrelevant and the equilibrium valuations are only solved for at the initial time, ignoring dynamic interactions. In fact, their model can be seen to reduce to the static one-period framework of \cite{glasserman2016contagion} with a particular justification for the choice of valuation functions. Details of this are given in Appendix~\ref{sect:existing_works}. Finally, we wish to stress that the inclusion of multiple maturities is new and that, unlike earlier works, this requires a treatment of not only insolvency but also illiquidity for the endogenous determination of defaults, both present and future.

\section{The single maturity setting}\label{sec:1maturity}

In this section, we let all interbank and external obligations have the same given maturity $T$. This allows for a simpler presentation than the multiple maturity setting of Section~\ref{sec:Kmaturity} and it separates out those effects that arise already with a single maturity, due to the endogenous notion of defaults and the dynamic valuation adjustment for the interbank network.

In Sections \ref{sec:setting-network} and \ref{sec:1maturity-bs}, we introduce our model. In Section \ref{sec:1maturity-model} we establish the existence of clearing solutions and provide a characterization in terms of stochastic volatility. In Section \ref{sect:fwd_bwd_dynprog}, we derive useful properties and equivalent representations of the clearing solutions. To have a directly implementable model, we focus our analysis on a tractable multinomial tree structure for the randomness in the financial system (see Section \ref{sec:setting-tree}). Based on this, Section \ref{sec:1maturity-cs} presents numerical case studies that highlight key features of the model.

\begin{remark}As briefly discussed above, we note that the related work of \cite{bardoscia_2019} initially considers a continuous time setting, but their model then simplifies to a static problem (see Appendix~\ref{sect:existing_works}). In particular, there is no system of stochastic processes to simulate and therefore the relevance of a tractable tree structure is not a concern in that work.
\end{remark}

\subsection{A static benchmark model of default contagion}\label{sec:setting-network}

To briefly provide the financial context and to fix ideas, we first discuss a static setting with default contagion due to less than full recovery at defaults, akin to \cite{GK10}. Consider a financial system comprised of $n$ banks (or other financial institutions) labeled $i=1,2,...,n$. The balance sheet of each bank is made up of both interbank and external assets and liabilities.
Specifically, on the asset side of the balance sheet, bank $i$ holds external assets $x_i \geq 0$ and interbank assets $L_{ji} \geq 0$ for each potential counterparty $j \neq i$ (in this case, $L_{ii} = 0$ so as to avoid self-dealing).
On the other side of the balance sheet, bank $i$ has liabilities $\bar p_i := \sum_{j = 1}^n L_{ij} + L_{i0}$ with external liabilities $L_{i0} \geq 0$.
We will often denote $\vx := (x_1,x_2,...,x_n)^\T \in \bbr^n_+$, $\vL := (L_{ij})_{i,j = 1,...,n} \in \bbr^{n \times n}_+$, $\vL_0 := (L_{ij})_{i = 1,...,n;~j=0,1,...,n} \in \bbr^{n \times (n+1)}_+$, and $\bar\vp := \vL_0\vec{1} \in \bbr^n_+$.

We can now formulate the default contagion problem as follows.  Let $P_i \in \{0,1\}$ be the indicator of whether bank $i$ is solvent ($P_i = 1$) or in default on its obligations ($P_i = 0$).  Following a notion of recovery of liabilities, if a bank is in default, it will repay a fraction $\beta \in [0,1]$ of its obligations. We note that the Rogers--Veraart model~(\cite{RV13}) is comparable, but with recovery of assets instead.
Mathematically, $\vP \in \{0,1\}^n$ solves the fixed point problem
\begin{equation}\label{eq:GK}
\vP = \psi(\vP) := \ind{\vx + \vL^\T [\vP + \beta(\vec{1}-\vP)] \geq \bar\vp} = \ind{\vx + \vL^\T[\beta + (1-\beta)\vP] \geq \bar\vp},
\end{equation}
and Tarski's fixed point theorem yields the following result about its solutions.
\begin{proposition}\label{prop:GK}
The set of clearing solutions to~\eqref{eq:GK}, i.e.,  $\{\vP^* \in \{0,1\}^n \; | \; \vP^* = \psi(\vP^*)\}$, forms a lattice in $\{0,1\}^n$ with greatest and least solutions $\vP^\uparrow \geq \vP^\downarrow$.
\end{proposition}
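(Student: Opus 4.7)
The plan is to apply Tarski's fixed point theorem directly, as the authors themselves indicate. The statement is essentially a structural consequence of monotonicity once one notes that $\{0,1\}^n$, equipped with the componentwise partial order, is a finite (hence complete) lattice with meet and join given by componentwise $\min$ and $\max$.

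First, I would verify that the self-map $\psi:\{0,1\}^n \to \{0,1\}^n$ is order-preserving. Suppose $\vP \leq \vP'$ componentwise. Since $\vL \geq 0$ entrywise and $1-\beta \geq 0$, one has $\vL^\T[\beta \vec{1} + (1-\beta)\vP] \leq \vL^\T[\beta \vec{1} + (1-\beta)\vP']$ componentwise, and adding the constant vector $\vx$ preserves the inequality. The indicator function $\ind{\cdot \geq \bar\vp}$ is coordinatewise non-decreasing in its argument, so $\psi(\vP) \leq \psi(\vP')$ in $\{0,1\}^n$. This is the only real calculation required, and it is immediate from the sign conventions; there is no genuine obstacle here.

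Second, since $\{0,1\}^n$ is a complete lattice and $\psi$ is an order-preserving self-map on it, Tarski's fixed point theorem implies that the fixed point set $\{\vP^* \in \{0,1\}^n : \vP^* = \psi(\vP^*)\}$ is a non-empty complete lattice under the inherited order. In particular, this set admits a greatest element $\vP^\uparrow$ and a least element $\vP^\downarrow$, satisfying $\vP^\uparrow \geq \vP^\downarrow$, which is precisely the conclusion of the proposition. Non-emptiness can also be seen concretely: iterating $\psi$ from $\vec{1}$ downward produces a non-increasing sequence in a finite lattice, which stabilizes at $\vP^\uparrow$, and iterating from $\vec{0}$ upward yields $\vP^\downarrow$.

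In summary, the proof is a two-line application of Tarski: (i) observe monotonicity of $\psi$ via non-negativity of $\vL$ and $1-\beta$, and (ii) invoke the theorem on the complete lattice $\{0,1\}^n$. There is no step I anticipate as being delicate; the only thing to be careful about is writing $\psi$ in its manifestly monotone form $\ind{\vx + \vL^\T[\beta \vec{1} + (1-\beta)\vP] \geq \bar\vp}$, which the authors have already done in \eqref{eq:GK}.
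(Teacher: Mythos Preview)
Your proposal is correct and matches the paper's own approach: the authors simply state that Tarski's fixed point theorem yields the result, and the later analogous proofs (e.g., Theorem~\ref{thm:1maturity-exist} and Proposition~\ref{prop:hpa-1maturity-exist}) confirm that the intended argument is exactly ``$\psi$ is monotone on the complete lattice $\{0,1\}^n$, so Tarski applies.'' Your added verification of monotonicity via non-negativity of $\vL$ and $1-\beta$ is the natural way to flesh this out.
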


To demonstrate the generic non-uniqueness of the clearing solutions, we provide a very simple financial system that admits two distinct solutions. We have set this up so that it can serve as the foundation for various later examples aimed at illustrating the more nuanced aspects of the framework we introduce below, both in the single and multiple maturity settings.
\begin{example}\label{ex:running-static}
Consider the $n = 2$ bank system with external assets $\vx = (1.9,1.5)^\T$, interbank assets/liabilities $L_{ij} = \ind{i \neq j}$ for $i,j \in \{1,2\}$, and external liabilities $L_{i0} = 1$ for $i \in \{1,2\}$. As in~\cite{GK10}, we set the recovery rate $\beta = 0$.
Then, \eqref{eq:GK} admits the following two clearing solutions $\vP^\uparrow = \vec{1}$ and $\vP^\downarrow = \vec{0}$.
\end{example}

To conclude our discussion of the static system, let $\vP^*$ be an arbitrary clearing solution of~\eqref{eq:GK}.  The resulting net worths $\vK^* = \vx + \vL^\T [\beta + (1-\beta)\vP^*] - \bar\vp$ then provide the difference between realized assets and liabilities.  The cash account $\vV^*$, given by $ \vV^*= (\vK^*)^+$, provides the assets-on-hand for each institution immediately after liabilities are paid; notably, the cash account equals the net worths if, and only if, the bank is solvent, otherwise it is zero.
When measuring systemic risk, often the value of payments to the external (or societal) node is desired; herein that value is provided by $K_0^* = V_0^* = \sum_{i = 1}^n L_{i0} [\beta + (1-\beta)P_i^*]$.

\begin{remark}
As mentioned above, throughout this work we consider the case of recovery of liabilities. This notion corresponds to the `recovery of face value' accounting rule in the corporate bond literature (see, e.g.,~\cite{guo2008distressed,hilscher2021valuation}). The recovery of assets case, as in~\cite{RV13}, could be considered instead. We restrict ourselves to the recovery of liabilities because this formulation has been shown to \emph{``provide a better approximation to realized recovery rates''} (\cite{hilscher2021valuation}) compared to the recovery of assets formulation.
\end{remark}

\subsection{The dynamic model and its balance sheet construction}\label{sec:1maturity-bs}

Departing from the static setting of the previous subsection, we now consider the problem of deriving an appropriate \emph{dynamic} model with a mark-to-market valuation of interbank claims.

Consider, first, the banking book for bank $i$. As in Section~\ref{sec:setting-network}, the bank holds two types of assets: \emph{interbank assets} $\sum_{j = 1}^n L_{ji}$ where $L_{ji} \geq 0$ is the total obliged from bank $j$ to $i$, and  \emph{external assets} $x_i$. Furthermore, the bank has liabilities $\bar p_i = \sum_{j = 1}^n L_{ij} + L_{i0}$ due at a future time $T$, where $L_{i0} \geq 0$ is the \emph{external obligations} of bank $i$. The external assets are held in liquid and marketable assets so that, distinct from Section~\ref{sec:setting-network}, their current value $x_i(t)$ fluctuates over time. We thus model $x_i=(x_i(t))_{t \in \bbt}$ as a stochastic process adapted to a given filtration $(\fcal_t)_{t\in \bbt}$ on some probability space $(\Omega ,\fcal,\P)$. The precise probabilistic setup is introduced in Section \ref{sec:1maturity-model}. Throughout, we assume a constant risk-free rate $r \geq 0$ used for discounting all obligations. Before any concerns about default risk, the current book value of capital (with appropriate discounting) is then
\[x_i(t) + e^{-r(T-t)}\sum_{j = 1}^n L_{ji} - e^{-r(T-t)}\bar p_i.\]
While we refer to this as the book value, one may wish to think of the external assets $x_i$ as arising from some combination of historical and market values according to their composition. We abstract this away by assuming that $x_i$ evolves according to some stochastic process.

As regards the interbank assets $L_{ji}$, depending on the probabilities of default, these will not be valued at their (discounted) face value, but instead at some intrinsic mark-to-market value that we denote by $p_{ji}(t)$ at time $t$. Figure~\ref{fig:balance_sheet} provides a visual depiction of both the book values and the realized balance sheet for an arbitrary bank $i$ in the financial system.

\begin{figure}[h!]
	\centering
	\begin{tikzpicture}
		\draw[draw=none] (0,6.5) rectangle (6.5,7) node[pos=.5]{\bf Book Value at Time $t$};
		\draw[draw=none] (0,6) rectangle (3.25,6.5) node[pos=.5]{\bf Assets};
		\draw[draw=none] (3.25,6) rectangle (6.5,6.5) node[pos=.5]{\bf Liabilities};
		
		\filldraw[fill=blue!20!white,draw=black] (0,4) rectangle (3.25,6) node[pos=.5,style={align=center}]{External \\ $x_i(t)$};
		\filldraw[fill=yellow!20!white,draw=black] (0,0) rectangle (3.25,4) node[pos=.5,style={align=center}]{Interbank \\ $e^{-r(T-t)}\sum_{j = 1}^n L_{ji}$};
		
		\filldraw[fill=purple!20!white,draw=black] (3.25,3) rectangle (6.5,6) node[pos=.5,style={align=center}]{Total \\ $e^{-r(T-t)}\bar p_i$};
		\filldraw[fill=orange!20!white,draw=black] (3.25,0) rectangle (6.5,3) node[pos=.5,style={align=center}]{Capital \\ $x_i(t) - e^{-r(T-t)}\bar p_i$ \\ $+ e^{-r(T-t)}\sum_{j = 1}^n L_{ji}$};
		
		\draw[->,line width=1mm] (7,3) -- (8,3);
		
		\draw[draw=none] (8.5,6.5) rectangle (15,7) node[pos=.5]{\bf Realized Balance Sheet at Time $t$};
		\draw[draw=none] (8.5,6) rectangle (11.75,6.5) node[pos=.5]{\bf Assets};
		\draw[draw=none] (11.75,6) rectangle (15,6.5) node[pos=.5]{\bf Liabilities};
		
		\filldraw[fill=blue!20!white,draw=none] (8.5,4) rectangle (11.75,6) node[pos=.5,style={align=center}]{External \\ $x_i(t)$};
		\filldraw[fill=yellow!20!white,draw=black] (8.5,1) rectangle (11.75,4) node[pos=.5,style={align=center}]{Interbank \\ $\sum_{j = 1}^n p_{ji}(t)$};
		\filldraw[fill=yellow!20!white,draw=black] (8.5,0) rectangle (11.75,1);
		
		\filldraw[fill=purple!20!white,draw=black] (11.75,3) rectangle (15,6) node[pos=.5,style={align=center}]{Total \\ $e^{-r(T-t)}\bar p_i$};
		\filldraw[fill=orange!20!white,draw=black] (11.75,1) rectangle (15,3) node[pos=.5,style={align=center}]{Capital \\ $x_i(t) - e^{-r(T-t)}\bar p_i$ \\ $+ \sum_{j = 1}^n p_{ji}(t)$};
		\filldraw[fill=orange!20!white,draw=black] (11.75,0) rectangle (15,1);
		\draw (8.5,0) rectangle (15,6);
		\draw (11.75,0) -- (11.75,6);
		
		\begin{scope}
			\clip (8.5,0) rectangle (15,1);
			\foreach \x in {-8.5,-8,...,15}
			{
				\draw[line width=.5mm] (8.5+\x,0) -- (15+\x,6);
			}
		\end{scope}
	\end{tikzpicture}
	\caption{Stylized book and balance sheet for a firm at time $t$ before maturity of interbank claims.}
	\label{fig:balance_sheet}
\end{figure}

Let  $P_j(T,\omega)\in \{0,1\}$, for $\omega \in \Omega$, denote the realized indicator of solvency for bank $j$ at the maturity time $T$. Within our stylized balance sheet construction, the key modelling choice is then to let the mark-to-market values of the obligations from bank $j$ to $i$ be given by $p_{ji}(t) := e^{-r(T-t)} L_{ji} (\beta + (1-\beta)\E[P_j(T) \; | \; \fcal_t]) $,
where $\beta \in [0,1]$ is the recovery rate. Defining $P_j(t) := \E[P_j(T) \; | \; \fcal_t] $ as bank $j$'s (conditional) probability of solvency at maturity, as measured at time $t$, we can write these  mark-to-market values as 
\begin{equation}\label{eq:MtM_values}
p_{ji}(t) = e^{-r(T-t)} L_{ji}(\beta+(1-\beta)P_j(t)),
\end{equation}
which we note belong (almost surely) to the intervals $e^{-r(T-t)} L_{ji}\times[\beta,1]$. In this way, the realized balance sheet for bank $i$ has (possible) write-downs in the value of assets and, correspondingly, the realized capital of bank $i$ at time $t$ is in turn given by
\begin{equation}\label{eq:realized_capital}
	K_i(t) = x_i(t) + e^{-r(T-t)} \sum_{j = 1}^n L_{ji} \bigl(\beta+(1-\beta)P_j(t) \bigr) - e^{-r(T-t)}\bar p_i.
	\end{equation}

It remains to clarify how insolvency is determined. As in the static setting of Section \ref{sec:setting-network}, we equate solvency to positive capital. In our dynamic setup, bank $i$ will thus default on their obligations at the first time their realized capital drops below zero, i.e., at the stopping time
\begin{equation}\label{eq:default}
\tau_i := \inf\bigl\{t \in \bbt \; \left| \; K_i(t) < 0 \right.\bigr\},
\end{equation}
for $K_i$ given by \eqref{eq:realized_capital}, wherein we now have $P_j(t)= \E[P_j(T) \; | \; \fcal_t] =\P(\tau_j > T \; | \; \fcal_t)$. In particular, defaults can occur before obligations are due. This may be due to regulatory constraints, as in the introductory discussion, or it can be the result of (positive net worth) safety covenants in the lending agreements (\cite{black1976valuing, leland1994corporate}). More generally, a default barrier of the form \eqref{eq:default} may serve as an abstract indicator of any form of financial distress that leads to (immediate) default (\cite{longstaff1995}).
\begin{remark}\label{rem:dynamics}
		Though we only consider a single maturity $T$ within this section, we stress that the model is inherently dynamic on account of the possibility of early defaults. These can occur due to the dynamics of the external assets ($\vx$) and the \emph{probability} of future losses ($\vec{1}-\vP$). Since we mark interbank assets based on expected payments, the probability of solvency ($\vP$) and the capital ($\vK$) of the banks evolve dynamically in time in such a way that the default times ($\vt$) \emph{cannot} solely be determined from the realization at $T$, in contrast with a one-period model. The dynamics are illustrated by the simulated sample paths in Figure~\ref{fig:path} of Section~\ref{sec:1maturity-cs-path}.
\end{remark}
\begin{remark} We note that \eqref{eq:realized_capital}-\eqref{eq:default} amounts to a fixed point problem in the pricing of $n$ coupled digital down-and-out barrier options: $P_i(t)$ is then the current value of the barrier option with maturity $T$ and a payoff of $1$ if the capital $K_i$ never crosses zero or $0$ otherwise.
\end{remark}

The equilibrium pricing scheme \eqref{eq:realized_capital}-\eqref{eq:default} may be seen as a network valuation adjustment that can be performed by a regulator or supervisor to obtain a notion of a lower bound on the health of the firms in the financial system. Under normal market conditions, the probability of contagion is vanishingly low, so firms would operate essentially without any adjustments. In particular, we are not suggesting that the system would operate according to our model in normal times. However, following a financial shock or a sudden drop in confidence, the contagious effects, captured by spiralling mark-to-market losses, could be detrimental. Our framework makes this precise, showing how losses in confidence concerning future solvency can propagate through a given financial system, and it quantifies how severe the expected outcome may be today (or at any later time conditional on a realisation of the external assets). That this yields a natural lower bound for the health of the system is established in Appendix~\ref{sec:hpa-1}.

When comparing our mark-to-market methodology to historical price accounting, as we do in Appendix~\ref{sec:hpa-1}, it should be stressed that these two regimes co-exist in bank balance sheets. The former applies to the trading book, while the latter applies to the banking book. Herein, we subject the entire interbank exposures to mark-to-market valuation (even if they are non-traded) as a hypothetical exercise to assess the full potential for informational distress contagion. One may also be interested in a more granular model that splits each exposure $L_{ij}$ so that some fraction $\theta_{ij}\in[0,1]$ is subjected to our mark-to-market mechanism while the remaining fraction $1-\theta_{ij}$ is recorded at face value until the (possible) default of the counterparty. This can be readily achieved by combining our approach with the formalism set out in Appendix~\ref{sec:hpa-1}.

In our discussion of the interbank system, we have, so far, focused exclusively on direct links between the banks given by their interbank liabilities. However, similarly to the setting of~\cite{glasserman2016contagion}, one could also interpret each $L_{ij}>0$ as a more general (indirect) link, measuring how exposed one believes bank $j$ is to worries about the future solvency of bank $i$, e.g.~due to similarities between the two banks' operations and their general circumstances or simply because bank $j$ is perceived as vulnerable to worries about other banks in general. Returning to our discussion of SVB and the 2023 U.S.~banking crisis, such a setup could model the distress contagion seen in the U.S.~financial system as well as the global spillover, including the collapse of the already vulnerable Credit Suisse. One may also want to consider $L_{ii}>0$ to model feedback loops within perceived solvency with respect to concerns about a given bank itself, as would have been the case for SVB. Naturally, as with the generally unobservable interbank positions, any such indirect links would need to be estimated or calibrated to data in some way, but this lies beyond the scope of the present work.

While our contributions are focused on distress contagion and, correspondingly, the role of mark-to-market versus historical price accounting rules, there are several critical aspects of the 2023 banking crisis that warrant further investigation. In particular, SVB's failure materialised as a depositor run, the details of which have spurred research on supervision, the role of uninsured deposits, and the multi-faceted nature of links between interest rates and bank stability (see, e.g., \cite{acharya2023}). Moreover, there is a need to re-examine traditional assumptions about the liquidity demands of depositors and the implications for liquidity creation, bank runs, and regulation (\cite{dietrich2023}). We do not address these issues within this work.

\begin{remark}\label{rem:illiquid} Note that our solvency-based default rule implicitly covers the liquidity problem as well, since bank $i$ can cover its liabilities at maturity if and only if the realized capital at $T$ is nonnegative. In the multiple maturity setting of Section~\ref{sec:Kmaturity}, liquidity is distinct from solvency and will be modelled explicitly.
If only the liquidity problem is desired, the default times can be reformulated to $\tau_i = T+\ind{K_i(T) \geq 0}$, i.e., bank $i$ is in default if and only if it has insufficient capital at maturity to cover its obligations, akin to~\cite{banerjee2022pricing}. In this way, defaults would only occur at the terminal time and replicate the system of, e.g.,~\cite{GK10}. Notably, as the set of defaults at maturity are a subset of those determined from \eqref{eq:default}, clearing with liquidity-only defaults will result in a financial system that appears healthier than when utilizing the solvency-based defaults considered herein.
\end{remark}

\begin{remark}\label{rem:nonmarketable}
Throughout this work we assume that the interbank network is fixed even as the mark-to-market value of obligations can fluctuate. As we assume all valuations are taken w.r.t.\ the risk-neutral measure $\P$, and under the assumption that banks are risk-neutral themselves, in equilibrium the banks have no (expected) gains by altering the network structure by buying or selling interbank obligations.  Moreover, in reality, these interbank markets may not be liquid; therefore, transacting to buy or sell interbank debt could be accompanied by transaction costs and price slippage which discourage any such modifications to the network.
\end{remark}

\subsection{Mathematical formalism and clearing solutions}\label{sec:1maturity-model}

We now present the precise mathematical formalism for the clearing problem derived above, and we  then deduce the existence of minimal and maximal equilibria. Moreover, we present the characterization of this problem in terms of an endogenously determined stochastic volatility effect, as discussed in the introduction. All proofs are postponed to Appendix \ref{sec:proofs}.

\subsubsection{Multinomial tree structure for the stochasticity}\label{sec:setting-tree}
With a view towards the numerical implementation of our model, we assume that the \emph{randomness} of the financial system obeys a multinomial tree structure. This applies both to the single maturity setting, considered here, and to the multiple maturity version studied in Section \ref{sec:Kmaturity}.

To be precise, throughout we shall work on a \emph{finite} filtered probability space $(\Omega,\fcal,(\fcal_{t_l})_{l = 0}^\ell,\P)$ with times $0 =: t_0 < t_1 < ... < t_\ell := T$, $\fcal_0 = \{\emptyset,\Omega\}$, and $\fcal_T = \fcal = 2^\Omega$. In this way we can equate the times $\bbt = \{t_0,t_1,...,t_\ell\}$ with the time steps of the tree $\{0,1,...,\ell\}$. 
We let $\Omega_t$ be the set of atoms of $\fcal_t$, and, for any $\omega_{t_l} \in \Omega_{t_l}$ ($l < \ell$), we denote the successor nodes by $\succ(\omega_{t_l}) := \{\omega_{t_{l+1}} \in \Omega_{t_{l+1}} \; | \; \omega_{t_{l+1}} \subseteq \omega_{t_l}\}.$ To simplify notation, let $\lcal_t := L^\infty(\Omega,\fcal_t,\P) = \bbr^{|\Omega_t|}$ denote the space of $\fcal_t$-measurable random variables.
We use the convention that for an $\fcal_t$-measurable random variable $x \in \lcal_t$, we denote by $x(\omega_t)$ the value of $x$ at node $\omega_t \in \Omega_t$, that is $x(\omega_t) := x(\omega)$ for some $\omega \in \omega_t$ chosen arbitrarily.

For a financial system with $n$ banks, a multinomial tree with $n+1$ branches at each node permits us to efficiently encode a full correlation structure on the external asset processes.  We let $(\Omega^n,\fcal^n,(\fcal_{l\dt}^n)_{l = 0}^{T/\dt},\P^n)$ denote the filtered probability space corresponding to such a multinomial tree with constant time steps $\dt > 0$. (For simplicity, we assume throughout that $T$ is divisible by the time step $\dt$.) As there are $n+1$ branches at each node within this tree, $|\Omega_t^n| = (n+1)^{t/\dt}$ at each time $t = 0,\dt,...,T$ with equal probability $\P^n(\omega_t^n) = (n+1)^{-t/\dt}$ for any $\omega_t^n \in \Omega_t^n$.
	Because of the regularity of this system, we will index the atoms at time $t$ as $\omega_{t,i}^n \in \Omega_t^n$ for $i = 1,2,...,(n+1)^{t/\dt}$.  Similarly, we can encode the successor nodes automatically as $\succ(\omega_{t,i}^n) = \{\omega_{t+\dt,j}^n \; | \; j \in (n+1)(i-1)+\{1,...,n+1\}\}$ for any time $t$ and atom $i$. 
	
	In our numerical case studies, we shall work with a suitable geometric random walk $\vx := (\vx(0),\vx(\dt),...,\vx(T)) \in \prod_{l = 0}^{T/\dt} \lcal_{l\dt}^n$ on $(\Omega^n,\fcal^n,(\fcal_{l\dt}^n)_{l = 0}^{T/\dt},\P^n)$, giving the time $t$ value of the external assets
  $\vx(t) = (x_1(t),x_2(t),...,x_n(t))^\T$. Following the construction in~\cite{he1990convergence}, we take $\sigma = (\sigma_1,...,\sigma_n) \in \bbr^{n \times n}$ to be a nondegenerate matrix encoding the desired covariance structure $C = \sigma^2$.
	The $k^\text{th}$ element $x_k$ can then be defined recursively by
	\begin{align}
		\label{eq:gbm} x_k(t+\dt,\omega_{t+\dt,(n+1)(i-1)+j}^n) &= x_k(t,\omega^n_{t,i})\exp\left((r - \frac{\sigma_{kk}^2}{2})\dt + \sigma_k^\T \tilde{\epsilon}_j\sqrt{\dt}\right) 
	\end{align}
	for some initial point $x_k(0,\Omega^n) \in \bbr^n_{++}$ and such that $\tilde{\epsilon} = (\tilde{\epsilon}_1,...,\tilde{\epsilon}_{n+1}) \in \bbr^{n \times (n+1)}$ is generated from an $(n+1)\times(n+1)$ orthogonal matrix as in~\cite{he1990convergence}. To illustrate this construction, we revisit the simple two-bank system introduced in Example~\ref{ex:running-static} and form a corresponding tree for three time periods. We will then refer back to this particular construction in later examples when demonstrating different aspects of our model.
\begin{example}\label{ex:running-tree}
Consider the $n = 2$ bank system with $\vx(0) = (1.9,1.5)^\T$ as taken in Example~\ref{ex:running-static}. 
Fix $\dt = 0.5$ with maturity $T = 1$. For simplicity, take $r = 0$.
Further, consider volatilities \[\sigma = \frac{1}{2} \left(\begin{array}{cc} \sqrt{0.275}+\sqrt{0.225} & \sqrt{0.275}-\sqrt{0.225} \\ \sqrt{0.275}-\sqrt{0.225} & \sqrt{0.275}+\sqrt{0.225} \end{array}\right) \quad \Rightarrow \quad C = \sigma^2 = \left(\begin{array}{cc} 0.25 & 0.025 \\ 0.025 & 0.25 \end{array}\right).\]
Figure~\ref{fig:running-tree} displays the resulting tree for the external assets, constructed as in \cite{he1990convergence} according to the geometric random walk \eqref{eq:gbm}.
\end{example}

\begin{figure}[h!]
	\begin{center}
		\begin{tikzpicture}
			[
			grow=right,
			level distance=3cm,
			sibling distance=1.8cm,
			edge from parent path={(\tikzparentnode.east) -- (\tikzchildnode.west)}
			]
			\node (root) {$\left(\begin{array}{c} 1.9 \\ 1.5 \end{array}\right)$}
			child {node {$\left(\begin{array}{c} 1.6069 \\ 2.2679 \end{array}\right)$}
				child [level distance=3.5cm, sibling distance=0.6cm] {node {$(1.3590,3.4288)^\T$}}
				child [level distance=3.5cm,sibling distance=0.6cm] {node {$(2.4294,1.9180)^\T$}}
				child [level distance=3.5cm,sibling distance=0.6cm] {node {$(1.0418,1.4704)^\T$}}
			}
			child {node (mid) {$\left(\begin{array}{c} 2.8726 \\ 1.2686 \end{array}\right)$}
				child [level distance=3.5cm,sibling distance=0.6cm] {node {$(2.4294,1.9180)^\T$}}
				child [level distance=3.5cm,sibling distance=0.6cm] {node (end) {$(4.3432,1.0729)^\T$}}
				child [level distance=3.5cm,sibling distance=0.6cm] {node {$(1.8625,0.8225)^\T$}}
			}
			child {node {$\left(\begin{array}{c} 1.2319 \\ 0.9725 \end{array}\right)$}
				child [level distance=3.5cm,sibling distance=0.6cm] {node {$(1.0418,1.4704)^\T$}}
				child [level distance=3.5cm,sibling distance=0.6cm] {node {$(1.8625,0.8225)^\T$}}
				child [level distance=3.5cm,sibling distance=0.6cm] {node {$(0.7987,0.6306)^\T$}}
			};
			
			\draw[->, thick] ($(root.south) + (0,-2.6)$) -- ($(root.south) + (7.5,-2.6)$) node[pos = 1.04] {$t$};
			\foreach \x/\label in {0/$0$, 3.1/$0.5$, 6.4/$1$} {
				\draw[thick] ($(root.south) + (\x,-2.5)$) -- ($(root.south) + (\x,-2.7)$);
				\node[below] at ($(root.south) + (\x,-2.7)$) {\label};
			}
			
		\end{tikzpicture}
	\end{center}\vspace{-3pt}
	\caption{Visualization of the tree for $\vx$ in Example~\ref{ex:running-tree} constructed in the manner of \eqref{eq:gbm}.}
	\label{fig:running-tree}
\end{figure}

\subsubsection{Clearing solutions: existence and non-uniqueness}\label{sect:exist_clearing_sol}

We formulate our search for clearing solutions as an equilibrium problem that is jointly on the net worths ($\vK = (K_1,...,K_n)^\T$), survival probabilities ($\vP = (P_1,...,P_n)^\T$), and default times ($\vt = (\tau_1,...,\tau_n)^\T$). By convention, and without loss of generality, we will set $\tau_i(\omega) := T+1$ if bank $i$ does not default for the realisation $\omega \in \Omega$. 
We can, thus, take the domain for our equilibrium problem to be the complete lattice
\begin{equation*}
 \left\{(\vK,\vP,\vt) \in \left(\prod_{l = 0}^\ell \lcal_{t_l}^n\right)^2 \!\! \times  \{t_0,t_1,...,t_\ell,T+1\}^{|\Omega| \times n} \; \left| \; \begin{array}{l} \forall t = t_0,t_1,...,t_\ell: \\ \vK(t) \in \vx(t) + e^{-r(T-t)}\left([\vec{0} \; , \; \vL^\T\vec{1}] - \bar\vp\right) \\ \vP(t) \in [\vec{0} \; , \; \vec{1}] \end{array} \!\!\!\right.\right\}
\end{equation*}
which we denote by $\bbd^T$. The corresponding clearing system is then mathematically constructed by means of the mapping $\Psi^T : \bbd^T\rightarrow \bbd^T$ given by
\begin{align}
\label{eq:1maturity} & \Psi^T(\vK,\vP,\vt) := (\Psi^T_{\vK}(t_l,\vP(t_l)) \; , \; \Psi^T_{\vP}(t_l,\vt) \; , \; \Psi^T_{\vt}(\vK))_{l = 0}^\ell \\[4pt]
\nonumber &\begin{cases} 
    \Psi^T_{\vK,i}(t,\tilde{\vP}) = x_i(t) + e^{-r(T-t)}\sum_{j = 1}^n L_{ji} (\beta+(1-\beta)\tilde{P}_j) - e^{-r(T-t)}\bar p_i \\
    \Psi^T_{\vP,i}(t,\vt) = \P(\tau_i > T \; | \; \fcal_t) \\
    \Psi^T_{\vt,i}(\vK) = \inf\{t \in \bbt \; | \; K_i(t) < 0\} 
    \end{cases} \;\; \forall i = 1,...,n.
\end{align}
A clearing solution $(\vK^*,\vP^*,\vt^*)$ to our equilibrium problem is defined as a fixed point of $\Psi^T$ defined by \eqref{eq:1maturity}. Existence of clearing solutions is guaranteed by the following result.

\begin{theorem}\label{thm:1maturity-exist}
The clearing solutions $\{(\vK^*,\vP^*,\vt^*) \in \bbd^T \; | \; (\vK^*,\vP^*,\vt^*) = \Psi^T(\vK^*,\vP^*,\vt^*)\}$ to ~\eqref{eq:1maturity} form a lattice in $\bbd^T$ with greatest and least solutions  $(\vK^\uparrow,\vP^\uparrow,\vt^\uparrow) \geq (\vK^\downarrow,\vP^\downarrow,\vt^\downarrow)$.
\end{theorem}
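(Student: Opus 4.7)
The plan is to apply Tarski's fixed point theorem to the operator $\Psi^T \colon \bbd^T \to \bbd^T$ equipped with the componentwise partial order. The assertion is then that the set of fixed points, being nonempty and closed under arbitrary joins and meets, forms a lattice with greatest and least elements $(\vK^\uparrow,\vP^\uparrow,\vt^\uparrow) \geq (\vK^\downarrow,\vP^\downarrow,\vt^\downarrow)$. Note that the random variable analogue of Tarski used in Proposition~\ref{prop:GK} is effectively the same statement here, since the underlying probability space $(\Omega,\fcal,\P)$ is finite.

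First, I would verify that $\bbd^T$ itself is a complete lattice. Under the componentwise order on $(\vK,\vP,\vt)$ (pointwise in $\omega$ and $t$ for each coordinate $i$), the constraint $\vK(t,\omega) \in \vx(t,\omega) + e^{-r(T-t)}([\vec 0,\vL^\T\vec 1] - \bar\vp)$ carves out a compact interval in $\bbr^n$ at each node $(t,\omega)$, and likewise $\vP(t,\omega) \in [0,1]^n$. The default times take values in the finite totally ordered set $\{t_0,\ldots,t_\ell,T+1\}$. Hence $\bbd^T$ is a finite product of complete lattices and therefore itself a complete lattice under the componentwise order.

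Next, I would check that $\Psi^T$ is well-defined into $\bbd^T$ and monotone. Well-definedness is immediate: for $\tilde P_j \in [0,1]$, $\beta+(1-\beta)\tilde P_j \in [\beta,1] \subseteq [0,1]$, so $\Psi^T_\vK$ lies in the prescribed interval; $\Psi^T_\vP \in [0,1]$ since it is a conditional probability; and $\Psi^T_\vt$ takes values in $\{t_0,\ldots,t_\ell,T+1\}$ by convention when the infimum is over an empty set. For monotonicity, the three coordinate maps are each increasing in their single relevant argument, and the full map is therefore monotone on the product:
\begin{itemize}
\item $\Psi^T_{\vK,i}(t,\cdot)$ is linear and nondecreasing in $\tilde\vP$ because $L_{ji}(1-\beta) \geq 0$.
\item $\Psi^T_{\vt,i}(\cdot)$ is nondecreasing in $\vK$: if $K_i \leq K_i'$ pointwise, then $\{t : K_i'(t) < 0\} \subseteq \{t : K_i(t) < 0\}$, so the infima satisfy $\Psi^T_{\vt,i}(\vK) \leq \Psi^T_{\vt,i}(\vK')$.
\item $\Psi^T_{\vP,i}(t,\cdot)$ is nondecreasing in $\vt$: if $\tau_i \leq \tau_i'$ pointwise, then $\{\tau_i > T\} \subseteq \{\tau_i' > T\}$, so $\P(\tau_i > T \mid \fcal_t) \leq \P(\tau_i' > T \mid \fcal_t)$.
\end{itemize}
Combining these with the componentwise order on $\bbd^T$ yields monotonicity of $\Psi^T$, and Tarski's theorem then gives the claimed lattice of fixed points with extremal elements.

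The main subtlety, and the step I expect to require the most care in writing out, is the monotonicity of $\Psi^T_\vt$ together with the ``direction-switching'' that occurs between the capital and default-time coordinates: larger capital gives larger (later) default times, which in turn give larger survival probabilities, which in turn give larger mark-to-market capital. It is this consistent directional alignment that makes the whole map $\Psi^T$ monotone in the componentwise order rather than mixed-monotone, and it is worth spelling out the set inclusions in the two discrete-time hitting/survival arguments above to make clear that the conventions $\tau_i = T+1$ on the event of no default and $\inf\emptyset = T+1$ are compatible with the order structure and with the endpoints of the lattice.
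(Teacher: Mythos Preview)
Your proposal is correct and follows exactly the paper's approach: the paper's proof is simply the one-line observation that Tarski's fixed point theorem applies because $\Psi^T$ is monotone on the complete lattice $\bbd^T$. You have merely spelled out the details of completeness and the componentwise monotonicity that the paper leaves implicit.
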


As we show next, there will generally \emph{not} be a unique clearing solution. However, the maximal clearing solution is, of course, unique by definition. As the maximal solution is predominantly utilized in the systemic risk literature (see, e.g., \cite[Remark 3.4]{RV13} and \cite{AW_15}), we will focus on it throughout. 
Though we focus on the maximal solutions, all results presented regarding that solution also hold for the minimal solution.

\begin{proposition}\label{ex:nonunique}
	The equilibrium problem \eqref{eq:1maturity} exhibits non-uniqueness of clearing solutions.
\end{proposition}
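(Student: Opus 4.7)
The plan is to exhibit an explicit two-bank example admitting at least two distinct clearing solutions. The underlying mechanism, familiar from the Eisenberg--Noe-type literature, is that sufficiently large mutual obligations between two otherwise undercapitalised banks create a coordination problem: each bank is solvent if and only if it believes its counterparty is solvent, so both an \emph{all-solvent} and an \emph{all-default} scenario can be simultaneously self-consistent.

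Concretely, I would take $n=2$ on a (trivial or degenerate) finite filtered probability space with times $\bbt = \{0,T\}$, zero recovery $\beta = 0$, zero discount rate $r=0$, symmetric cross-holdings $L_{12} = L_{21} = L$, vanishing external liabilities so that $\bar p_i = L$, and constant external assets $x_i(t) \equiv x$ with $0 < x < L$. To stay within the multinomial tree framework of Section~\ref{sec:setting-tree}, one may simply take $\sigma = 0$ in~\eqref{eq:gbm}, collapsing all branches at each node to the same value.

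Next I would verify directly that both
\[
(\mathrm{a})\;\; \tau_i^* = T+1,\; P_i^*(t) \equiv 1,\; K_i^*(t) \equiv x \qquad \text{and} \qquad (\mathrm{b})\;\; \tau_i^* = 0,\; P_i^*(t) \equiv 0,\; K_i^*(t) \equiv x-L
\]
are fixed points of $\Psi^T$ from~\eqref{eq:1maturity}. In case (a), $K_i^* = x > 0$, so no default occurs and $\Psi^T_{\vt,i}$ returns $T+1$; the survival probabilities are $\bbp(\tau_i^* > T \mid \fcal_t) = 1$; and $\Psi^T_{\vK,i}$ returns $x + L\cdot 1 - L = x$. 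In case (b), $K_i^* = x - L < 0$ already at $t=0$, so $\Psi^T_{\vt,i}$ returns $0$; consequently $P_i^*(T) = 0$ (since $\tau_i^* \leq T$), hence by the tower property $P_i^*(t) = 0$ for all $t$; and $\Psi^T_{\vK,i}$ returns $x + L\cdot 0 - L = x-L$. Since $x \neq x-L$, these two fixed points are distinct, which establishes non-uniqueness.

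There is no real obstacle once the example is in place; the verification is a matter of plugging into the three components of $\Psi^T$. The only subtlety worth flagging is that one must check the chosen probabilistic setup is admissible within Section~\ref{sec:setting-tree}, which is handled by the degenerate-tree remark above.
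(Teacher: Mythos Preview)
Your proof is correct and uses the same basic idea as the paper---an explicit two-bank counterexample with $\beta=0$, $r=0$, and mutual obligations large enough to create a coordination problem---but your version is strictly more elementary. You take a degenerate (deterministic) tree and exhibit only the two extreme equilibria (all-solvent versus all-default), which is the minimal argument needed for the proposition. The paper instead works on a genuine binomial tree with two equiprobable terminal states and nontrivial external liabilities, and it displays \emph{four} distinct clearing solutions, including two asymmetric intermediate ones in which exactly one bank defaults. Your route is cleaner for the bare statement; the paper's example buys more, as it illustrates that non-uniqueness is not an artifact of determinism and that the lattice of equilibria can have nontrivial intermediate elements, which is relevant for the later discussion of maximal versus minimal solutions.

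One small caveat: the multinomial construction in Section~\ref{sec:setting-tree} explicitly assumes $\sigma$ is nondegenerate, so ``take $\sigma=0$'' does not literally sit inside that framework. This is harmless, since the clearing problem~\eqref{eq:1maturity} and Theorem~\ref{thm:1maturity-exist} are stated for an arbitrary finite filtered probability space with adapted $\vx$; you can simply take the trivial one-point space (or any tree with constant $\vx$) without invoking~\eqref{eq:gbm} at all.
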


To establish this claim, we shall continue to consider the setting of our running example consisting, so far, of Examples~\ref{ex:running-static} and~\ref{ex:running-tree}. In this case, we can readily construct the maximal and minimal solutions which turn out to be distinct, as per Example \ref{ex:running-nonunique} below.
\begin{example}\label{ex:running-nonunique}
Consider the $n = 2$ bank network from Example~\ref{ex:running-static} with the obligations due at time $T = 1$ under the tree dynamics of Example~\ref{ex:running-tree}.
The maximal solution $(\vK^\uparrow,\vP^\uparrow)$ is displayed in Figure~\ref{fig:running-nonunique}. In particular, $\vP^\uparrow(0) = (5/9,1/3)^\T$ which we note is different from the maximal static solution of $\vec{1}$ from Example~\ref{ex:running-static}.
In fact, as highlighted within Remark~\ref{rem:dynamics}, we wish to note that bank 2 would be solvent in states $\omega_{1,5}^2$ and $\omega_{1,6}^2$ if we only considered clearing at $T$, i.e., without early defaults the probability of solvency would be $(5/9,5/9)^\T$.
By constrast, the minimal solution is equivalent to the minimal static solution provided in Example~\ref{ex:running-static}, i.e., with both banks being in default at time $t = 0$ so that $\vP^\downarrow = \vec{0}$ with $\vK^\downarrow(0) = (-0.1,-0.5)^\T$.
\end{example}

\begin{figure}[h]
\hspace{12pt}	\begin{subfigure}[t]{0.54\textwidth}
		\begin{center}
			\begin{tikzpicture}
				[
				grow=right,
				level distance=3cm,
				sibling distance=1.8cm,
				edge from parent path={(\tikzparentnode.east) -- (\tikzchildnode.west)}
				]
				\node (root) {$\left(\begin{array}{c} 0.2333 \\ 0.0556 \end{array}\right)$}
				child {node {$\left(\begin{array}{c} 0.6069 \\ 1.2679 \end{array}\right)$}
					child [level distance=3.5cm, sibling distance=0.6cm] {node {$(0.3590,2.4288)^\T$}}
					child [level distance=3.5cm,sibling distance=0.6cm] {node {$(1.4294,0.9180)^\T$}}
					child [level distance=3.5cm,sibling distance=0.6cm] {node {$(0.0418,0.4704)^\T$}}
				}
				child {node (mid) {$\left(\begin{array}{c} 0.8726 \\ -0.0648 \end{array}\right)$}
					child [level distance=3.5cm,sibling distance=0.6cm] {node {$(0.4294,\ast)^\T$}}
					child [level distance=3.5cm,sibling distance=0.6cm] {node (end) {$(2.3432,\ast)^\T$}}
					child [level distance=3.5cm,sibling distance=0.6cm] {node {$(-0.1375,\ast)^\T$}}
				}
				child {node {$\left(\begin{array}{c} -0.7681 \\ -1.0275 \end{array}\right)$}
					child [level distance=3.5cm,sibling distance=0.6cm] {node {$(\ast,\ast)^\T$}}
					child [level distance=3.5cm,sibling distance=0.6cm] {node {$(\ast,\ast)^\T$}}
					child [level distance=3.5cm,sibling distance=0.6cm] {node {$(\ast,\ast)^\T$}}
				};
				\draw[->, thick] ($(root.south) + (0,-2.6)$) -- ($(root.south) + (7.5,-2.6)$) node[pos = 1.04] {$t$};
				\foreach \x/\label in {0/$0$, 3.1/$0.5$, 6.4/$1$} {
					\draw[thick] ($(root.south) + (\x,-2.5)$) -- ($(root.south) + (\x,-2.7)$);
					\node[below] at ($(root.south) + (\x,-2.7)$) {\label};
				}
				
			\end{tikzpicture}
		\end{center}
		\subcaption{Clearing capitals $\vK^\uparrow$}
		\label{fig:running-nonunique-K}
	\end{subfigure}
	\begin{subfigure}[t]{0.35\textwidth}
		\begin{center}
			\begin{tikzpicture}
				[
				grow=right,
				level distance=2.3cm,
				sibling distance=1.8cm,
				edge from parent path={(\tikzparentnode.east) -- (\tikzchildnode.west)}
				]
				\node (root) {$\left(\begin{array}{c} 5/9 \\ 1/3 \end{array}\right)$}
				child {node {$\left(\begin{array}{c} 1 \\1 \end{array}\right)$}
					child [level distance=2cm, sibling distance=0.6cm] {node {$(1,1)^\T$}}
					child [level distance=2cm,sibling distance=0.6cm] {node {$(1,1)^\T$}}
					child [level distance=2cm,sibling distance=0.6cm] {node {$(1,1)^\T$}}
				}
				child {node (mid) {$\left(\begin{array}{c} 2/3 \\ 0 \end{array}\right)$}
					child [level distance=2cm,sibling distance=0.6cm] {node {$(1,0)^\T$}}
					child [level distance=2cm,sibling distance=0.6cm] {node (end) {$(1,0)^\T$}}
					child [level distance=2cm,sibling distance=0.6cm] {node {$(0,0)^\T$}}
				}
				child {node {$\left(\begin{array}{c} 0 \\ 0 \end{array}\right)$}
					child [level distance=2cm,sibling distance=0.6cm] {node {$(0,0)^\T$}}
					child [level distance=2cm,sibling distance=0.6cm] {node {$(0,0)^\T$}}
					child [level distance=2cm,sibling distance=0.6cm] {node {$(0,0)^\T$}}
				};
				
				\draw[->, thick] ($(root.south) + (0,-2.6)$) -- ($(root.south) + (4.9,-2.6)$) node[pos = 1.04] {$t$};
				\foreach \x/\label in {0/$0$, 2.3/$0.5$, 4.2/$1$} {
					\draw[thick] ($(root.south) + (\x,-2.5)$) -- ($(root.south) + (\x,-2.7)$);
					\node[below] at ($(root.south) + (\x,-2.7)$) {\label};
				}
			\end{tikzpicture}
		\end{center}
		\subcaption{Clearing solvency probabilities $\vP^\uparrow$}
		\label{fig:running-nonunique-P}
	\end{subfigure}
	\caption{Visualization of the maximal clearing solution in Example~\ref{ex:running-nonunique}}
	\label{fig:running-nonunique}
\end{figure}

\subsubsection{Stochastic volatility effects of distress contagion}\label{sect:stoch_vol}

We now turn to examine the equilibrium effects of distress contagion on the volatility of bank capital. For concreteness, we focus on external assets that follow (correlated) geometric random walks on a multinomial tree, as defined by \eqref{eq:gbm}. With the notation of Section \ref{sec:setting-tree}, let $(\Omega^n,\fcal^n,(\fcal^n_{l\dt})_{l = 0}^{T/\dt},\P^n)$ denote the filtered probability space (where $n$ is number of banks and $\dt $ is the size of each time step). To express the dynamics of a stochastic process, say $y$, we write $\Delta y(t):=y(t+\Delta t)-y(t)$ for the forward increment at each time $t\in\{0,\ldots, T-\Delta t\}$.

	\begin{proposition}\label{prop:volatility} Let $\vx$ be given by \eqref{eq:gbm} defined on $(\Omega^n,\fcal^n,(\fcal^n_{l\dt})_{l = 0}^{T/\dt},\P^n)$. For any clearing solution  $(\vK,\vP,\vt)$ from Theorem \ref{thm:1maturity-exist}, consider the discounted processes $\tilde{\vK}(t):= e^{-rt}\vK(t)$ and $\tilde{\vx}(t):=e^{-rt}\vx(t)$. Then, the clearing solution provides adapted vector processes $(\boldsymbol{\theta}_j(t))_{t=0}^{T-\Delta t}$, for $j=1\ldots n$, such that the discounted capital of each bank has dynamics of the form
		\begin{equation}\label{eq:dynamic_form_volatility}
			\Delta \tilde{K}_i(t) = \Delta \tilde{x}_i(t) +(1-\beta)\sum_{j = 1}^n L_{ji} \boldsymbol{\theta}_j(t)^\T  \Delta\tilde{\vx}(t),\quad i=1,\ldots,n,
		\end{equation}
		subject to initial conditions $\tilde{K}_i(0)= \tilde{x}_i(0)+ e^{-rT}\sum_{j = 1}^n L_{ji} (\beta+(1-\beta)\P(\tau_j > T)) - e^{-rT}\bar p_i $ and terminal conditions $\tilde{K}_i(T)= \tilde{x}_i(T)+ \sum_{j = 1}^n L_{ji} (\beta+(1-\beta)\ind{ \tau_j > T}) - \bar p_i$ for $i=1,\ldots,n$.
	\end{proposition}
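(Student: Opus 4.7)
The plan is to exploit the martingale structure of both the discounted external assets and the survival probabilities, and then invoke a martingale representation theorem tailored to the $(n+1)$-branching tree.

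First, I would observe that $(P_j(t))_{t \in \bbt}$ defined via $P_j(t) = \E[\ind{\tau_j > T} \mid \fcal_t]$ is automatically an $(\fcal_t)$-martingale by the tower property, regardless of which clearing solution is chosen. Moreover, by the construction in \eqref{eq:gbm} with drift $r$, the discounted external asset processes $\tilde{x}_k(t) = e^{-rt} x_k(t)$ are $(\fcal_t)$-martingales under $\P$. Substituting \eqref{eq:realized_capital} and discounting yields
\[
\tilde{K}_i(t) = \tilde{x}_i(t) + e^{-rT}\sum_{j = 1}^n L_{ji}\bigl(\beta + (1-\beta)P_j(t)\bigr) - e^{-rT}\bar p_i,
\]
so $\tilde{K}_i$ is a martingale as well, with forward increment
\[
\Delta \tilde{K}_i(t) = \Delta \tilde{x}_i(t) + e^{-rT}(1-\beta)\sum_{j=1}^n L_{ji}\,\Delta P_j(t).
\]

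The key step is a martingale representation argument on the multinomial tree. Fix a node $\omega_{t,i}^n \in \Omega_t^n$ with its $n+1$ successors. Conditional on $\omega_{t,i}^n$, the space of $\fcal_{t+\dt}^n$-measurable random variables is $(n+1)$-dimensional, and its subspace of conditionally mean-zero elements has dimension $n$. I claim that $\{\Delta \tilde{x}_1(t),\ldots,\Delta \tilde{x}_n(t)\}$ forms a basis for this subspace. This follows from the non-degeneracy of $\sigma$ together with the construction of $\tilde\epsilon$ from an $(n+1)\times(n+1)$ orthogonal matrix as in \cite{he1990convergence}: up to the deterministic drift, the $n+1$ vectors $\sigma\tilde\epsilon_1,\ldots,\sigma\tilde\epsilon_{n+1}$ sum to zero and any $n$ of them are linearly independent, so they span the mean-zero subspace. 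Consequently, for each $j$ there exists an $\fcal_t$-measurable $\tilde{\boldsymbol{\theta}}_j(t) \in \bbr^n$ with $\Delta P_j(t) = \tilde{\boldsymbol{\theta}}_j(t)^\T \Delta \tilde{\vx}(t)$.

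Finally, setting $\boldsymbol{\theta}_j(t) := e^{-rT}\tilde{\boldsymbol{\theta}}_j(t)$ and substituting back gives \eqref{eq:dynamic_form_volatility}. The initial condition is obtained by evaluating \eqref{eq:realized_capital} at $t = 0$ with $P_j(0) = \P(\tau_j > T)$, and the terminal condition follows since $\fcal_T = \fcal$ forces $P_j(T) = \ind{\tau_j > T}$.

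The main obstacle is the martingale representation step: verifying that the $n$ discounted asset increments span the full conditionally mean-zero subspace at each node. This relies critically on both the non-degeneracy of $\sigma$ and the symmetric construction of $\tilde\epsilon$ via an orthogonal matrix; without these, an additional orthogonal source of randomness could appear and the clean linear decomposition in \eqref{eq:dynamic_form_volatility} would fail.
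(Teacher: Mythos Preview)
Your proposal is correct and follows essentially the same approach as the paper: both arguments reduce to showing that, at each node of the $(n+1)$-nomial tree, the $n$ increments $\Delta\tilde{x}_1(t),\ldots,\Delta\tilde{x}_n(t)$ span the $n$-dimensional conditionally mean-zero subspace, so that the martingale increment $\Delta P_j(t)$ admits a linear representation in $\Delta\tilde{\vx}(t)$. The only presentational difference is that the paper writes down $\boldsymbol{\theta}$ explicitly as a matrix inverse applied to the branch-wise differences of $\vP$ and $\tilde\vx$, whereas you invoke the spanning property abstractly; your treatment of the $e^{-rT}$ factor (absorbing it into $\boldsymbol{\theta}_j$) is in fact slightly cleaner than the paper's, which leaves this step implicit.
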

	
	The above confirms that the clearing problem amounts to a problem of identifying suitable stochastic volatility processes for the capital $\vK= (K_1,...,K_n)^\T$. Specifically, \eqref{eq:dynamic_form_volatility} shows how our notion of distress contagion induces an additional (endogenous) stochastic component to the volatility of equity returns (expressed in terms of capital $\vK$), as discussed in the introduction.
	
	We can observe the following about each process $\boldsymbol{\theta}_j$ coming from the exposure of bank $i$ to the default of its $j^\text{th}$ counterparty. If the counterparty $j$ is healthy at time $t$ (i.e., $\tilde{K}_j(t)$ is significantly above zero), then the term $\boldsymbol{\theta}_j(t)^\T  \Delta\tilde{\vx}(t)$ will be negligible independently of $\Delta\tilde{\vx}(t)$, highlighting that the additional volatility component $\boldsymbol{\theta}_j(t)$ is negligible at such times. On the other hand, when the $j^\text{th}$ counterparty is distressed (i.e., $\tilde{K}_j(t)$ is not too far from zero), then even small moves in $\Delta\tilde{\vx}(t)$ can have a dramatic effect on $\boldsymbol{\theta}_j(t)^\T  \Delta\tilde{\vx}(t)$, emphasising that the volatility component $\boldsymbol{\theta}_j(t)$ will be substantial at such times and may in fact be arbitrarily large. When the increments $\Delta\tilde{\vx}(t)$ are pushing the capital of unhealthy banks in a negative direction, this effect will be more pronounced. In summary, the dynamics \eqref{eq:dynamic_form_volatility} display volatility clustering and a marked `down-market effect' in the form of a positive correlation between higher volatility and negative returns (measured in terms of capital) in a way that is more significant the more stress there is on the system. For a simple illustration, see Figure \ref{fig:path} in Section \ref{sec:1maturity-cs-path}.

	\begin{remark}For general multinomial trees, one could obtain expressions similar to \eqref{eq:dynamic_form_volatility}, only of a more abstract form, by e.g.~following the arguments of \cite[Section IV.3]{protter2005stochastic}.
	\end{remark}

	\subsection{Forward-backward representation and dynamic programming}\label{sect:fwd_bwd_dynprog}

In this section, we consider two useful reformulations of the mathematical model for the clearing problem. The first exploits Bayes' rule to provide a forward-backward equation that is tractable computationally (Section~\ref{sec:1maturity-recursion}). The second establishes a dynamic programming principle for the maximal clearing solution (Section~\ref{sec:1maturity-dpp}). Furthermore, the latter recovers Markovianity on an extended state space and forms the basis for our study of multiple maturities in Section \ref{sec:Kmaturity}.

\subsubsection{A recursive forward-backward representation}\label{sec:1maturity-recursion}

The formulation of the clearing system through $\Psi^T$ in \eqref{eq:1maturity} is seemingly complex to compute, due to the need to find the probability of solvency $\Psi^T_{\vP}$.  Within this section, we derive a backwards recursion that can be used to simplify this computation. To this end, using the notation of Section \ref{sec:setting-tree}, we define another mapping $\bar\Psi^T: \bbd^T \to \bbd^T$ by
\begin{align}
\label{eq:1maturity-recursion} \bar\Psi^T(\vK,\vP,\vt) &:= (\Psi^T_{\vK}(t_l,\vP(t_l)) \; , \; \bar\Psi^T_{\vP}(t_l,\vP(t_{[l+1]\wedge\ell}),\vt) \; , \; \Psi^T_{\vt}(\vK))_{l = 0}^\ell \\[2pt]
\bar\Psi_{\vP,i}^T(t_l,\tilde{\vP},\vt,\omega_{t_l}) &:= \begin{cases} \sum_{\omega_{t_{l+1}} \in \succ(\omega_{t_l})} \frac{\P(\omega_{t_{l+1}})\tilde{P}_i(\omega_{t_{l+1}})}{\P(\omega_{t_l})} &\text{if } l < \ell \\ \ind{\tau_i(\omega_T) > T} &\text{if } l = \ell \end{cases} &\hspace{-30pt} \forall \omega_{t_l} \in \Omega_{t_l},~i = 1,...,n.\nonumber
\end{align}
for $(\vK,\vP,\vt) \in \bbd^T$. This yields an equivalent characterization of clearing solutions, where the conditional default probabilities are computed recursively on the tree backwards in time.

\begin{proposition}\label{prop:1maturity-recursion}
The triple $(\vK,\vP,\vt) \in \bbd^T$ is a clearing solution for~\eqref{eq:1maturity} if and only if it is a fixed point of~\eqref{eq:1maturity-recursion}.
\end{proposition}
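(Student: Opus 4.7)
The plan is to observe that the mappings $\Psi^T$ and $\bar\Psi^T$ agree on the $\vK$ and $\vt$ components, so only the $\vP$ component needs any attention. The required identity is a direct consequence of two facts: (i) on a finite filtered probability space, conditional expectation admits the explicit atomic formula $\E[Y\mid\fcal_{t_l}](\omega_{t_l}) = \sum_{\omega_{t_{l+1}}\in\succ(\omega_{t_l})} \frac{\P(\omega_{t_{l+1}})}{\P(\omega_{t_l})} Y(\omega_{t_{l+1}})$ for any $\fcal_{t_{l+1}}$-measurable $Y$, and (ii) the tower property $\P(\tau_i > T\mid\fcal_{t_l}) = \E[\P(\tau_i > T\mid\fcal_{t_{l+1}})\mid\fcal_{t_l}]$. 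With these in hand, the equivalence reduces to a short backward induction on $l$ running from $\ell$ down to $0$.

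For the forward implication, suppose $(\vK,\vP,\vt)$ is a fixed point of $\Psi^T$. Since the $\vK$ and $\vt$ components of $\Psi^T$ and $\bar\Psi^T$ are identical, it suffices to check that $P_i(t_l) = \bar\Psi^T_{\vP,i}(t_l,\vP(t_{[l+1]\wedge\ell}),\vt)$ for each $l$. At $l=\ell$, the random variable $\ind{\tau_i>T}$ is $\fcal_T$-measurable, so $P_i(T) = \P(\tau_i > T\mid\fcal_T) = \ind{\tau_i > T}$, matching the second case of \eqref{eq:1maturity-recursion}. For $l<\ell$, the tower property gives $P_i(t_l) = \E[P_i(t_{l+1})\mid\fcal_{t_l}]$, and applying the atomic formula to $Y=P_i(t_{l+1})$ yields exactly the summation defining $\bar\Psi^T_{\vP,i}$.

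For the reverse implication, suppose $(\vK,\vP,\vt)$ is a fixed point of $\bar\Psi^T$. Again only the $\vP$ component is at issue, and I would prove $P_i(t_l) = \P(\tau_i>T\mid\fcal_{t_l})$ by backward induction on $l$. The base case $l=\ell$ is immediate from the second case in \eqref{eq:1maturity-recursion}. For the inductive step, assume the identity at time $t_{l+1}$; then by the fixed point condition together with the atomic formula, $P_i(t_l)$ equals $\E[P_i(t_{l+1})\mid\fcal_{t_l}] = \E[\P(\tau_i>T\mid\fcal_{t_{l+1}})\mid\fcal_{t_l}]$, which collapses to $\P(\tau_i>T\mid\fcal_{t_l})$ by the tower property. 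This closes the induction and matches $\Psi^T_{\vP,i}(t_l,\vt)$.

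No part of this is a genuine obstacle; the only care needed is bookkeeping of the atomic structure on $(\Omega,\fcal,(\fcal_{t_l})_{l=0}^\ell,\P)$ to justify the explicit conditional expectation formula, and ensuring that $\tau_i$ is $\fcal_T$-measurable so that the base case of the induction is valid. Both follow from the setup in Section~\ref{sec:setting-tree} (finite $\Omega$, $\fcal_T = 2^\Omega$, and $\tau_i$ defined pathwise from $\vK$).
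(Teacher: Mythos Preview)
Your proposal is correct and follows essentially the same argument as the paper: both proofs note that only the $\vP$ component needs checking, handle the terminal case $l=\ell$ via $\fcal_T$-measurability of $\ind{\tau_i>T}$, and for $l<\ell$ use the atomic conditional-expectation formula together with the tower property (with backward induction for the reverse implication).
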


We use this result for simulations in Section \ref{sec:1maturity-cs} below. We also stress that \eqref{eq:1maturity-recursion} can be viewed as a fixed point in $\vP \in \prod_{l = 0}^\ell [0,1]^{|\Omega_{t_l}| \times n}$ only, by explicitly defining the dependence of the net worths and default times on the solvency probabilities. In this way, the joint clearing problem in $(\vK,\vP,\vt)$ becomes a forward-backward equation in the solvency probabilities $\vP$ alone, i.e.,
\begin{equation}\label{eq:1maturity-recursion-P}
\vP = \left[\bar\Psi^T_{\vP}\left(t_l,\vP(t_{[l+1]\wedge\ell}),\Psi^T_{\vt}\left(\Psi^T_{\vK}(t_k,\vP(t_k))_{k = 0}^\ell\right)\right)\right]_{l = 0}^\ell.
\end{equation}
We refer to this as a forward-backward equation since $\vP \mapsto \Psi^T_{\vt}(\Psi^T_{\vK}(\cdot,\vP))$ is calculated forward-in-time while $\bar\Psi^T_{\vP}$ is computed recursively backward-in-time. This is formalized in the following result, whose proof is an immediate consequence of the above remarks and Proposition~\ref{prop:1maturity-recursion}.
\begin{corollary}\label{cor:1maturity-recursion}
$(\vK,\vP,\vt) \in \bbd^T$ is a clearing solution of~\eqref{eq:1maturity} if and only if $\vP$ is a fixed point of~\eqref{eq:1maturity-recursion-P} with $\vK = \Psi^T_{\vK}(t_l,\vP(t_l))_{l = 0}^\ell$ and $\vt = \Psi^T_{\vt}(\Psi^T_{\vK}(t_l,\vP(t_l))_{l = 0}^\ell)$.
\end{corollary}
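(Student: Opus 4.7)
The plan is to leverage Proposition~\ref{prop:1maturity-recursion}, which already reduces the original clearing problem in~\eqref{eq:1maturity} to finding a fixed point of the modified mapping $\bar\Psi^T$ in~\eqref{eq:1maturity-recursion}. Given this, the corollary is a bookkeeping statement about how the three coupled component equations in $(\vK,\vP,\vt)$ can be collapsed into a single fixed-point equation in $\vP$ alone by successive substitution. The key structural observation is that the dependencies in $\bar\Psi^T$ form a chain: $\Psi^T_{\vK}(t_l,\cdot)$ reads only the current slice $\vP(t_l)$, then $\Psi^T_{\vt}(\cdot)$ reads only the resulting path of $\vK$, and finally $\bar\Psi^T_{\vP}$ reads only the one-step-ahead values of $\vP$ together with $\vt$. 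So $\vK$ and $\vt$ are fully pinned down by $\vP$.

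For the forward direction, I would start from a clearing solution $(\vK,\vP,\vt) \in \bbd^T$ of~\eqref{eq:1maturity}. By Proposition~\ref{prop:1maturity-recursion}, this is equivalent to $(\vK,\vP,\vt)=\bar\Psi^T(\vK,\vP,\vt)$. Reading off the $\vK$-component gives $\vK = \Psi^T_{\vK}(t_l,\vP(t_l))_{l=0}^\ell$, and reading off the $\vt$-component then gives $\vt = \Psi^T_{\vt}(\vK) = \Psi^T_{\vt}(\Psi^T_{\vK}(t_l,\vP(t_l))_{l=0}^\ell)$. Substituting these two identities into the $\vP$-component of the fixed-point relation yields exactly~\eqref{eq:1maturity-recursion-P}.

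For the converse, I would take any $\vP$ satisfying~\eqref{eq:1maturity-recursion-P}, define $\vK := \Psi^T_{\vK}(t_l,\vP(t_l))_{l=0}^\ell$ and $\vt := \Psi^T_{\vt}(\vK)$ by the formulas in the statement, and verify that the resulting triple lies in $\bbd^T$ (which is immediate from the definitions of $\Psi^T_{\vK}$ and $\Psi^T_{\vt}$ and the range constraints defining $\bbd^T$). With $\vK$ and $\vt$ so defined, the $\vK$- and $\vt$-components of $\bar\Psi^T(\vK,\vP,\vt) = (\vK,\vP,\vt)$ hold by construction, while the $\vP$-component coincides with~\eqref{eq:1maturity-recursion-P}. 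Hence $(\vK,\vP,\vt)$ is a fixed point of $\bar\Psi^T$, and so, by Proposition~\ref{prop:1maturity-recursion} again, a clearing solution of~\eqref{eq:1maturity}.

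Since Proposition~\ref{prop:1maturity-recursion} does the analytic heavy lifting (translating conditional expectations to the backwards recursion on the tree), there is no real obstacle here; the only thing to be careful about is confirming that the substituted fixed-point map is well-defined on $\prod_{l=0}^\ell [0,1]^{|\Omega_{t_l}|\times n}$ and that the range restrictions in $\bbd^T$ are preserved — both of which follow directly from the pointwise-in-time structure of $\Psi^T_{\vK}$ and the definition of $\Psi^T_{\vt}$ as an infimum of times in $\bbt$.
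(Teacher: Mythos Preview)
Your proposal is correct and matches the paper's approach exactly: the paper states that the corollary is an immediate consequence of Proposition~\ref{prop:1maturity-recursion} together with the preceding remarks about collapsing the joint fixed point in $(\vK,\vP,\vt)$ to one in $\vP$ alone via successive substitution, and your write-up simply spells out those substitutions in both directions.
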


\subsubsection{A dynamic programming principle}\label{sec:1maturity-dpp}

We can consider the propagation of information forward-in-time by the auxiliary process $\vi $, valued in  $\{0,1\}^n$, which keeps track of whether each bank $i$ has defaulted ($\iota_i = 0$) or not ($\iota_i = 1$) up to time $t_{l-1}$ when being used as an input for the subsequent time $t_l$.

We then define $\hat\Psi^T: \bbi  \to \lcal_T \times \lcal_T$ on $\bbi := \{(t_l,\vi) \; | \; l \in \{0,1,...,\ell\}, \; \vi \in \{0,1\}^{|\Omega_{t_{[l-1]^+}}| \times n}\}$ (with $\hat\Psi^T(t,\vi) \in \hat\bbd^T(t) := \{(\vK(t),\vP(t)) \; | \; (\vK,\vP,T\vec{1}) \in \bbd^T\}$ for any time $t$) by
\begin{align}\label{eq:1maturity-dpp}
&\hat\Psi^T(t_l,\vi) := \left(\begin{array}{c}\hat\Psi^T_{\vK}(t_l,\vi) \\ \hat\Psi^T_{\vP}(t_l,\vi)\end{array}\right) \\ 
    \nonumber & \; = 
    \begin{cases} 
    \FIX_{(\tilde\vK,\tilde\vP) \in \hat\bbd^T(t_l)} \left(\!\begin{array}{c} \Psi^T_{\vK}(t_l,\tilde\vP) \\ \left[\sum_{\omega_{t_{l+1}} \in \succ(\omega_{t_l})} \frac{\P(\omega_{t_{l+1}})\hat\Psi^T_{\vP}(t_{l+1},\diag{\vi(\omega_{t_l})\ind{\tilde\vK(\omega_{t_l}) \geq \vec{0}}}}{\P(\omega_{t_l})}\right]_{\omega_{t_l} \in \Omega_{t_l}}  \end{array}\!\right) &\text{if } l < \ell \\ 
    \FIX_{(\tilde\vK,\tilde\vP) \in \hat\bbd^T(T)} \left(\begin{array}{c} \Psi^T_{\vK}(T,\tilde\vP) \\ \diag{\vi}\ind{\tilde\vK \geq \vec{0}} \end{array}\!\right) &\text{if } l = \ell
    \end{cases}
\end{align}
where $\FIX$ denotes the operator which returns the \emph{maximal} fixed point. Crucially, $\hat\Psi^T(t,\vi)$ is well-defined since the maximal fixed point exists for every time $t$ and any set of solvent banks $\vi$.
\begin{proposition}\label{prop:1maturity-dpp-define}
$\hat\Psi^T(t,\vi)$ is well-defined for every $(t,\vi) \in \bbi$, i.e., the maximal fixed point exists (and is unique) for any combination of inputs.
\end{proposition}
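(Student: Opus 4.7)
The plan is to proceed by backward induction on $l \in \{\ell, \ell-1, \ldots, 0\}$, with a strengthened inductive hypothesis: for every admissible $\vi$, not only does the maximal fixed point defining $\hat\Psi^T(t_l, \vi)$ exist uniquely, but, in addition, the map $\vi \mapsto \hat\Psi^T(t_l, \vi)$ is componentwise non-decreasing. The monotonicity strengthening is essential because, in the inductive step, the operator at time $t_l$ evaluates $\hat\Psi^T_{\vP}(t_{l+1}, \cdot)$ at an input that depends monotonically on $\tilde\vK$ through the indicator $\ind{\tilde\vK(\omega_{t_l}) \geq \vec{0}}$; propagating monotonicity past this composition requires precisely the inductive monotonicity of $\hat\Psi^T_{\vP}$ in its $\vi$ argument.

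For each $l$, the domain $\hat\bbd^T(t_l)$ is a finite product of closed intervals, indexed by the atoms of $\fcal_{t_l}$, and hence a complete lattice under componentwise order. In the base case $l = \ell$, the inner operator $(\tilde\vK, \tilde\vP) \mapsto (\Psi^T_{\vK}(T, \tilde\vP), \diag{\vi}\ind{\tilde\vK \geq \vec{0}})$ is monotone non-decreasing: the first component is non-decreasing in $\tilde\vP$ since $L_{ji} \geq 0$ and $1-\beta \geq 0$, and the second is non-decreasing in $\tilde\vK$. Tarski's fixed point theorem then furnishes the maximal fixed point. Monotonicity of this maximal fixed point in $\vi$ follows from a standard comparison: if $\vi' \geq \vi$ componentwise, then the operator parameterized by $\vi'$ dominates the one parameterized by $\vi$ at every point, so iterating each from the top of the lattice yields ordered limits.

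In the inductive step $l < \ell$, assuming the strengthened claim at $t_{l+1}$, I again apply Tarski. The first component of the inner operator is non-decreasing in $\tilde\vP$ as before. For the second component, monotonicity in $(\tilde\vK, \tilde\vP)$ follows from three facts: the inner map $\tilde\vK \mapsto \vi(\omega_{t_l}) \ind{\tilde\vK(\omega_{t_l}) \geq \vec{0}}$ is non-decreasing; by the inductive hypothesis, $\hat\Psi^T_{\vP}(t_{l+1}, \cdot)$ is non-decreasing in its second argument; and the conditional weights $\P(\omega_{t_{l+1}})/\P(\omega_{t_l})$ are strictly positive, so the induced convex combination preserves monotonicity. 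Tarski yields the maximal fixed point, and the same comparison as in the base case establishes monotonicity in $\vi$, closing the induction.

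The main obstacle is exactly this propagation of monotonicity through the recursion. The single-time statement of the proposition is only about existence, but without the enhanced inductive assertion, the second component of the operator at time $t_l$ cannot be shown to be monotone in $\tilde\vK$, and Tarski's theorem does not apply. Once the strengthening is in place, each step reduces to routine lattice-theoretic bookkeeping in the spirit of Theorem \ref{thm:1maturity-exist}.
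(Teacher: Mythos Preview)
Your proof is correct and follows essentially the same approach as the paper, which simply states that the result follows from Tarski's fixed point theorem as in Theorem~\ref{thm:1maturity-exist}. You supply considerably more detail, in particular correctly identifying that monotonicity of $\vi \mapsto \hat\Psi^T_{\vP}(t_{l+1}, \vi)$ is needed for the inductive step---a point the paper leaves implicit here but acknowledges (via the parenthetical ``simple induction argument for $\hat\Psi_{\vP}$'') in the analogous proof of Theorem~\ref{thm:Kmaturity-exist}.
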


The above formulation provides a dynamic programming principle for the maximal clearing solution. Indeed, we have the following result.

\begin{proposition}\label{prop:1maturity-dpp}
Let $(\vK,\vP)$ be the realized solution from $\hat\Psi^T(0,\vec{1})$, i.e., $(\vK(0,\omega_0),\vP(0,\omega_0)) = \hat\Psi^T(0,\vec{1},\omega_0)$ and $(\vK(t_l,\omega_{t_l}),\vP(t_l,\omega_{t_l})) = \hat\Psi^T(t_l,\ind{\inf_{k < l} \vK(t_k,\omega_{t_k}) \geq \vec{0}},\omega_{t_l})$ for every $\omega_{t_l} \in \Omega_{t_l}$ and $l = 1,2,...,\ell$.  Then $(\vK,\vP,\Psi^T_{\vt}(\vK))$ is the \emph{maximal} fixed point to~\eqref{eq:1maturity}.
\end{proposition}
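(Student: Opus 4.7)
I would organise the argument into two stages: first verifying that the triple $(\vK,\vP,\Psi^T_{\vt}(\vK))$ is a fixed point of $\Psi^T$, and then establishing its maximality.

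For the fixed-point property, the $\vK$-equation holds automatically because the first component of $\hat\Psi^T(t_l,\vi)$ is defined as $\Psi^T_{\vK}(t_l,\tilde\vP)$ evaluated at its fixed point, and $\vt = \Psi^T_{\vt}(\vK)$ by definition. The substantive check is $\vP(t_l) = \mathbb{P}(\vt > T \mid \fcal_{t_l})$, which I would prove by backward induction on $l$. At the terminal step, $\hat\Psi^T_{\vP}(T,\vi) = \diag{\vi}\ind{\tilde\vK \geq \vec{0}}$ combined with the forward-propagated $\vi = \ind{\inf_{k<\ell}\vK(t_k) \geq \vec{0}}$ yields precisely $\ind{\tau_i > T}$. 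For $l < \ell$, the second component of \eqref{eq:1maturity-dpp} expresses $P_i(t_l,\omega_{t_l})$ as a one-step conditional average of $\hat\Psi^T_{\vP}$-values at $t_{l+1}$; by the inductive hypothesis these values equal $\mathbb{E}[\ind{\vt > T} \mid \fcal_{t_{l+1}}]$, and the tower property closes the induction.

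For maximality, let $(\vK^*,\vP^*,\vt^*)$ be an arbitrary clearing solution of $\Psi^T$ and set $\vi^*_l(\omega_{t_l}) := \ind{\inf_{k<l}\vK^*(t_k,\omega_{t_l}) \geq \vec{0}}$. The argument rests on three pieces. (a) Backward induction in $l$ shows the local bound $(\vK^*(t_l),\vP^*(t_l)) \leq \hat\Psi^T(t_l,\vi^*_l)$: the tower property together with $\vi^*_{l+1} = \diag{\vi^*_l\ind{\vK^*(t_l) \geq \vec{0}}}$ and the inductive bound at $t_{l+1}$ render $(\vK^*(t_l),\vP^*(t_l))$ a sub-fixed-point of the DPP map from \eqref{eq:1maturity-dpp} at $(t_l,\vi^*_l)$, and Tarski's theorem then bounds it by the maximal fixed point. (b) A separate backward induction shows that $\hat\Psi^T(t_l,\vi)$ is monotone non-decreasing in $\vi$, using monotonicity of $\Psi^T_{\vK}$ in $\vP$ and that larger $\vi$ yields larger terminal indicators. (c) A forward induction, combining (a) and (b) at each step, shows $\vK(t_k) \geq \vK^*(t_k)$ for all $k < l$, hence $\vi_l \geq \vi^*_l$ where $\vi_l := \ind{\inf_{k<l}\vK(t_k) \geq \vec{0}}$. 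Chaining these yields $(\vK^*(t_l),\vP^*(t_l)) \leq \hat\Psi^T(t_l,\vi^*_l) \leq \hat\Psi^T(t_l,\vi_l) = (\vK(t_l),\vP(t_l))$, and then $\vt \geq \vt^*$ follows directly from the definition of $\Psi^T_{\vt}$.

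The main obstacle will be managing the interleaving of forward-in-time and backward-in-time inductions, since $\vi_l$ depends on $\vK$ at earlier times whereas $\vP(t_l)$ depends on $\vK$ at later times. The clean way to handle this is to decouple them: first establish piece (a) purely by backward induction for an arbitrary $\vi^*$, and then use the monotonicity piece (b) to upgrade, via the forward induction in (c), to the sharper bound against the constructed $\vi_l$. A subsidiary care point is ensuring that the DPP map at each $(t_l,\vi)$ is monotone on the lattice $\hat\bbd^T(t_l)$ so that the Tarski invocation in (a) is legitimate; this follows from monotonicity of $\Psi^T_{\vK}$ together with monotonicity of $\tilde\vK \mapsto \ind{\tilde\vK \geq \vec{0}}$.
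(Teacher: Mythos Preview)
Your proposal is correct and follows essentially the same approach as the paper's proof: both verify the fixed-point property by backward induction on $l$ (reducing to the recursive formulation~\eqref{eq:1maturity-recursion}) and establish maximality via the same three ingredients---the local bound of any clearing solution by $\hat\Psi^T(t_l,\vi^*_l)$, monotonicity of $\hat\Psi^T$ in $\vi$, and a forward comparison of the solvency processes. The only organisational differences are that you argue maximality directly rather than by contradiction, and you correctly identify $(\vK^*(t_l),\vP^*(t_l))$ as a \emph{sub}-fixed-point of the DPP map (the paper is slightly loose in calling it a fixed point, though the conclusion via $\FIX$ is the same either way).
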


To make clear how this may be implemented in practice, we provide a pseudocode algorithm in Algorithm~\ref{alg:1maturity-dpp} that can be used to compute the maximal clearing solution at time $t = 0$. Further, we wish to  illustrate how this works in the simple context of Example~\ref{ex:running-nonunique}.
\begin{example}\label{ex:running-dpp}
Consider the same system as in Example~\ref{ex:running-nonunique} for which the maximal clearing solution was already presented in Figure~\ref{fig:running-nonunique}. In Table~\ref{tab:running-dpp}, we illustrate the procedure of applying Algorithm~\ref{alg:1maturity-dpp} to identify this solution. Within this table, `$\hspace{0.5pt}\Rightarrow$' indicates recursive function calls, `$\Leftarrow$' indicates the result of the fixed point computation at the earlier time step, and `$\Downarrow$' indicates an update to $\vi$ within the computations. The highlighted cells of this table display the maximal clearing solution $\vP^\uparrow$ at each time and state of the tree. From this solution $\vP^\uparrow$, the clearing capital $\vK^\uparrow(0) = (0.2333,0.0556)^\T$ can readily be calculated. Note that the computed clearing solution here indeed agrees with the maximal clearing solution in Figure~\ref{fig:running-nonunique}.
\end{example}

\begin{algorithm}
	\begin{algorithmic}[1]
		\Require $l \in \{0,1,...,\ell\}$
		\Require $\omega \in \Omega_{t_l}$
		\Require $\vi \in \{0,1\}^n$
		\Function{$\hat\Psi^T_{\vP}$}{$l,\vi,\omega$}
		\If{$l = \ell$}
		\State $\vP \gets \vi$
		\Repeat
		\State $\vP^0 \gets \vP$
		\State $\vP \gets \diag{\vi}\ind{\vx(T,\omega) + \vL^\T (\beta + (1-\beta) \vP^0) - \bar\vp \geq \vec{0}}$
		\Until{$\vP = \vP^0$}
		\State\Return $\vP$
		\Else 
		\Repeat
		\State $\vi^0 \gets \vi$
		\ForAll{$\bar\omega \in \succ(\omega)$}
		\State $\vP(\bar\omega) \gets \hat\Psi^T_{\vP}(l+1,\vi^0,\bar\omega)$
		\EndFor
		\State $\vP \gets \frac{1}{\P(\omega)} \sum_{\bar\omega \in \succ(\omega)} \P(\bar\omega) \vP(\bar\omega)$
		\Repeat
		\State $\vP^0 \gets \vP$
		\State $\vP \gets \diag{\vP^0}\ind{\vx(t_l,\omega) + e^{-r(T-t_l)} \vL^\T (\beta + (1-\beta) \vP^0) - e^{-r(T-t_l)} \bar\vp \geq \vec{0}}$
		\Until{$\vP = \vP^0$}
		\State $\vi \gets \diag{\vi^0}\ind{\vx(t_l,\omega) + e^{-r(T-t_l)} \vL^\T (\beta + (1-\beta) \vP^0) - e^{-r(T-t_l)} \bar\vp \geq \vec{0}}$
		\Until{$\vi = \vi^0$}
		\State\Return $\vP$
		\EndIf
		\EndFunction
		\State $\vP^\uparrow(0) \gets \hat\Psi^T_{\vP}(0,\vec{1},\Omega)$
		\State $\vK^\uparrow(0) \gets \vx(0,\Omega) + e^{-rT} \vL^\T(\beta + (1-\beta) \vP^\uparrow(0)) - e^{-rT} \bar\vp$
		\State\Return $(\vK^\uparrow(0),\vP^\uparrow(0))$
	\end{algorithmic}
	\caption{Computing the maximal clearing solution $(\vK^\uparrow(0),\vP^\uparrow(0))$ at time $t = 0$.}
	\label{alg:1maturity-dpp}
\end{algorithm}

\renewcommand{\arraystretch}{1.15}
\begin{table}
	\centering
	\resizebox{\columnwidth}{!}{
		\begin{tabular}{|lclcl|}
			\hline
			\multicolumn{1}{|c}{$\mathbf{t = 0}$} & & \multicolumn{1}{c}{$\mathbf{t = 0.5}$} & & \multicolumn{1}{c|}{$\mathbf{t = 1}$} \\ \hline \hline 
			$\vP^\uparrow(0) = \hat\Psi^T_\vP(0,\vec{1},\Omega)$ & $\Rightarrow$ & $\vP(0.5,\omega_{0.5,1}^2) = \hat\Psi^T_\vP(0.5,\vec{1},\omega_{0.5,1}^2)$ & $\Rightarrow$ & \multirow{3}{*}{$\left\{\begin{array}{l} \vP(1,\omega_{1,1}^2) = \hat\Psi^T_\vP(1,\vec{1},\omega_{1,1}^2) = \vec{0} \\ \vP(1,\omega_{1,2}^2) = \hat\Psi^T_\vP(1,\vec{1},\omega_{1,2}^2) = \vec{0} \\ \vP(1,\omega_{1,3}^2) = \hat\Psi^T_\vP(1,\vec{1},\omega_{1,3}^2) = \vec{1} \end{array} \right.$} \\
			~ & ~ & ~ & ~ & \\
			~ & ~ & $\vP(0.5,\omega_{0.5,1}^2) = \vec{0}$ & $\Leftarrow$ & \\
			~ & ~ & \multicolumn{1}{c}{$\Downarrow$} & ~ & ~ \\ \cline{5-5}
			~ & ~ & $\vP(0.5,\omega_{0.5,1}^2) = \hat\Psi^T_\vP(1,\vec{0},\omega_{0.5,1}^2)$ & $\Rightarrow$ & \multicolumn{1}{|l|}{\cellcolor{green!10}} \\ 
			~ & ~ & ~ & ~ & \multicolumn{1}{|l|}{\cellcolor{green!10}} \\ \cline{3-3}
			~ & ~ & \multicolumn{1}{|l|}{\cellcolor{green!10}$\vP^\uparrow(0.5,\omega_{0.5,1}^2) = \vec{0}$} & $\Leftarrow$ & \multicolumn{1}{|l|}{\multirow{-3}{*}{\cellcolor{green!10}$\left\{\begin{array}{l} \vP^\uparrow(1,\omega_{1,1}^2) = \hat\Psi^T_\vP(1,\vec{0},\omega_{1,1}^2) = \vec{0} \\ \vP^\uparrow(1,\omega_{1,2}^2) = \hat\Psi^T_\vP(1,\vec{0},\omega_{1,2}^2) = \vec{0} \\ \vP^\uparrow(1,\omega_{1,3}^2) = \hat\Psi^T_\vP(1,\vec{0},\omega_{1,3}^2) = \vec{0} \end{array} \right.$}} \\ \cline{3-3}\cline{5-5}
			~ & $\Rightarrow$ & $\vP(0.5,\omega_{0.5,2}^2) = \hat\Psi^T_\vP(1,\vec{1},\omega_{0.5,2}^2)$ & $\Rightarrow$ & \multirow{3}{*}{$\left\{\begin{array}{l} \vP(1,\omega_{1,4}^2) = \hat\Psi^T_\vP(1,\vec{1},\omega_{1,4}^2) = \vec{0} \\ \vP(1,\omega_{1,5}^2) = \hat\Psi^T_\vP(1,\vec{1},\omega_{1,5}^2) = \vec{1} \\ \vP(1,\omega_{1,6}^2) = \hat\Psi^T_\vP(1,\vec{1},\omega_{1,6}^2) = \vec{1} \end{array} \right.$} \\
			~ & ~ & ~ & ~ & \\
			~ & ~ & $\vP(0.5,\omega_{0.5,2}^2) = (2/3,0)^\T$ & $\Leftarrow$ & \\
			~ & ~ & \multicolumn{1}{c}{$\Downarrow$} & ~ & ~ \\ \cline{5-5}
			~ & ~ & $\vP(1,\omega_{0.5,2}^2) = \hat\Psi^T_\vP(1,(1,0)^\T,\omega_{0.5,2}^2)$ & $\Rightarrow$ & \multicolumn{1}{|l|}{\cellcolor{green!10}} \\
			~ & ~ & ~ & ~ & \multicolumn{1}{|l|}{\cellcolor{green!10}} \\ \cline{3-3} 
			~ & ~ & \multicolumn{1}{|l|}{\cellcolor{green!10}$\vP^\uparrow(0.5,\omega_{0.5,2}^2) = (2/3,0)^\T$} & $\Leftarrow$ & \multicolumn{1}{|l|}{\multirow{-3}{*}{\cellcolor{green!10}$\left\{\begin{array}{l} \vP^\uparrow(1,\omega_{1,4}^2) = \hat\Psi^T_\vP(1,(1,0)^\T,\omega_{1,4}^2) = \vec{0} \\ \vP^\uparrow(1,\omega_{1,5}^2) = \hat\Psi^T_\vP(1,(1,0)^\T,\omega_{1,5}^2) = (1,0)^\T \\ \vP^\uparrow(1,\omega_{1,6}^2) = \hat\Psi^T_\vP(1,(1,0)^\T,\omega_{1,6}^2) = (1,0)^\T \end{array} \right.$}} \\ \cline{3-3} \cline{5-5}
			~ & $\Rightarrow$ & $\vP(0.5,\omega_{0.5,3}^2) = \hat\Psi^T_\vP(0.5,\vec{1},\omega_{0.5,3}^2)$ & $\Rightarrow$ & \multicolumn{1}{|l|}{\cellcolor{green!10}} \\
			~ & ~ & ~ & ~ & \multicolumn{1}{|l|}{\cellcolor{green!10}} \\ \cline{1-1}\cline{3-3} 
			\multicolumn{1}{|l|}{\cellcolor{green!10}$\vP^\uparrow(0) = (5/9,1/3)^\T$} & $\Leftarrow$ & \multicolumn{1}{|l|}{\cellcolor{green!10}$\vP^\uparrow(0.5,\omega_{0.5,3}^2) = \vec{1}$} & $\Leftarrow$ & \multicolumn{1}{|l|}{\multirow{-3}{*}{\cellcolor{green!10}$\left\{\begin{array}{l} \vP^\uparrow(1,\omega_{1,7}^2) = \hat\Psi^T_\vP(1,\vec{1},\omega_{1,7}^2) = \vec{1} \\ \vP^\uparrow(1,\omega_{1,8}^2) = \hat\Psi^T_\vP(1,\vec{1},\omega_{1,8}^2) = \vec{1} \\ \vP^\uparrow(1,\omega_{1,9}^2) = \hat\Psi^T_\vP(1,\vec{1},\omega_{1,9}^2) = \vec{1} \end{array} \right.$}} \\ \hline
		\end{tabular}
	}
	\caption{Visualization of the application of Algorithm~\ref{alg:1maturity-dpp} in Example~\ref{ex:running-dpp}. The resulting clearing solution $\vP^\uparrow$ is highlighted by the shaded areas.}
	\label{tab:running-dpp}
\end{table}

We stress that it is this characterization that allows us to extend the model to multiple maturities in Section \ref{sec:Kmaturity}. Moreover, it confirms that, in an extended state space, the model has a Markovian structure in the following sense.

\begin{corollary}\label{lemma:markov}
Consider Markovian external assets $\vx(t_l) = f(\vx(t_{l-1}),\tilde\epsilon)$ for independent perturbations $\tilde\epsilon$ in the notation of Section \ref{sec:setting-tree}.
Let $(\vK,\vP)$ be the realized solution from $\hat\Psi^T(0,\vec{1})$ with solvency process $\vi(t_l) := \prod_{k = 0}^{l-1} \ind{\vK(t_k) \geq \vec{0}}$. Then the joint process $(\vx,\vK,\vP,\vi)$ is Markovian.
\end{corollary}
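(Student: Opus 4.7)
The plan is to establish a measurable functional form for the dynamic programming operator in an extended state, by backward induction on $l$, and then deduce Markovianity from the iterative construction in Proposition \ref{prop:1maturity-dpp}. Concretely, I will prove that there exist deterministic measurable maps $(g_l^{\vK}, g_l^{\vP})$, for $l = 0, 1, \ldots, \ell$, such that
\[
\hat\Psi^T(t_l, \vi, \omega_{t_l}) = \bigl(g_l^{\vK}(\vx(t_l, \omega_{t_l}), \vi(\omega_{t_l})),\; g_l^{\vP}(\vx(t_l, \omega_{t_l}), \vi(\omega_{t_l}))\bigr)
\]
for every admissible $\vi$ and every atom $\omega_{t_l} \in \Omega_{t_l}$. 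Once this is in hand, the conclusion is essentially an immediate consequence of Proposition \ref{prop:1maturity-dpp} together with the Markov transition of $\vx$.

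For the base case $l = \ell$, the fixed-point problem in \eqref{eq:1maturity-dpp} at time $T$ decouples across atoms, since $\Psi^T_{\vK}(T,\tilde\vP)(\omega_T)$ depends on $\omega_T$ only via $\vx(T,\omega_T)$ and the $\vP$-component $\diag{\vi}\ind{\tilde\vK \geq \vec{0}}$ is purely atom-local. For the induction step, the successor sum in \eqref{eq:1maturity-dpp} becomes, after applying the hypothesis at $t_{l+1}$,
\[
\sum_{\omega_{t_{l+1}} \in \succ(\omega_{t_l})} \frac{\P(\omega_{t_{l+1}})}{\P(\omega_{t_l})}\, g_{l+1}^{\vP}\bigl(\vx(t_{l+1}, \omega_{t_{l+1}}),\, \diag{\vi(\omega_{t_l})\ind{\tilde\vK(\omega_{t_l}) \geq \vec{0}}}\bigr),
\]
which, by the assumption $\vx(t_{l+1}) = f(\vx(t_l), \tilde\epsilon)$ with $\tilde\epsilon$ independent of $\fcal_{t_l}$, is a function of $\omega_{t_l}$ only through $\vx(t_l, \omega_{t_l})$, $\vi(\omega_{t_l})$, and $\tilde\vK(\omega_{t_l})$. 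Combined with the fact that $\Psi^T_{\vK}(t_l,\tilde\vP)(\omega_{t_l})$ depends only on $\vx(t_l,\omega_{t_l})$ and $\tilde\vP(\omega_{t_l})$, the joint fixed-point problem at time $t_l$ decouples into a family of atom-wise problems parameterised by $(\vx(t_l,\omega_{t_l}), \vi(\omega_{t_l}))$; Proposition \ref{prop:1maturity-dpp-define} provides the existence and uniqueness of each maximal solution, and the lattice structure from Theorem \ref{thm:1maturity-exist} ensures that taking atom-wise maxima yields the global maximal fixed point, delivering the maps $g_l^{\vK}, g_l^{\vP}$.

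With the functional form established, Markovianity follows directly: by Proposition \ref{prop:1maturity-dpp}, the realized capitals and solvency probabilities satisfy $(\vK(t_l), \vP(t_l)) = (g_l^{\vK}(\vx(t_l), \vi(t_l)),\, g_l^{\vP}(\vx(t_l), \vi(t_l)))$, while the transition of the joint process reads $\vx(t_{l+1}) = f(\vx(t_l), \tilde\epsilon)$ with $\tilde\epsilon$ independent of $\fcal_{t_l}$, $\vi(t_{l+1}) = \vi(t_l)\ind{\vK(t_l) \geq \vec{0}}$ (a deterministic function of the current state), and $(\vK(t_{l+1}), \vP(t_{l+1})) = (g_{l+1}^{\vK}(\vx(t_{l+1}), \vi(t_{l+1})), g_{l+1}^{\vP}(\vx(t_{l+1}), \vi(t_{l+1})))$. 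Hence the conditional law of the next state given $\fcal_{t_l}$ depends only on $(\vx(t_l), \vK(t_l), \vP(t_l), \vi(t_l))$. The principal obstacle in this plan is justifying the atom-wise decoupling of the maximal fixed point at each inductive step: this is not automatic for coupled lattice fixed points, but it goes through here precisely because, under the inductive hypothesis, the Markov kernel of $\vx$ reduces the successor expectation to a node-local quantity, so that the global operator acts independently across atoms and its maximum is attained pointwise.
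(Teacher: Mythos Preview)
Your proposal is correct and follows essentially the same approach as the paper. The paper's proof is a terse two-liner that simply writes $(\vK(t_l),\vP(t_l)) = \hat\Psi^T(t_l,\vi(t_l);\vx(t_l))$, treating the fact that $\hat\Psi^T$ depends on the atom only through $(\vx(t_l),\vi(t_l))$ as self-evident; your backward induction actually supplies the justification for this functional form, which is the substantive content of the claim. One minor point: your appeal to Theorem~\ref{thm:1maturity-exist} for the atom-wise decoupling of the maximal fixed point is slightly misplaced---that theorem concerns the global clearing problem~\eqref{eq:1maturity}, not the structure of $\hat\Psi^T$---but the underlying fact you need (that the maximal fixed point of a product operator on a product lattice is the product of the componentwise maximal fixed points) is an elementary consequence of Tarski's theorem applied atom-by-atom, which is what Proposition~\ref{prop:1maturity-dpp-define} already invokes.
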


We note that it is not at all clear if one can recover Markovianity for just the capital and the indicator of defaults, i.e., the pair $(\vK,\vi)$. In fact, we suspect this will not be true in general. However, the above at least confirms that, for the maximal clearing solution, the triple $(\vK, \vP,\vi)$ is Markovian provided the external assets $\vx$ are Markovian.
\begin{remark}\label{rem:minimal}
Though not explicitly proven herein, one can derive a corresponding dynamic programming principle for the minimal solution and, in turn, the results of Corollary~\ref{lemma:markov} also hold for the minimal clearing solution.
\end{remark}

\subsection{Realized dynamics and asset correlation: two case studies}\label{sec:1maturity-cs}
Throughout this section, we let the external assets $\vx$ undergo a geometric random walk on a multinomial tree $(\Omega^n,\fcal^n,(\fcal^n_{l\dt})_{l = 0}^{T/\dt},\P^n)$, as given by the construction \eqref{eq:gbm}. We then define the discounted assets $\tilde{\vx}(t,\omega_t^n) := \exp(-rt)\vx(t,\omega_t^n)$,
for every time $t$ and state $\omega_t^n \in \Omega_t^n$.  One can check that these discounted asset values $\tilde{\vx}$ are martingales for $(\fcal^n_{l\dt})_{l = 0}^{T/\dt}$.

We shall consider two simple case studies to demonstrate key aspects of our single maturity model. First, we will present sample paths of the clearing solution to investigate the dynamic contagion mechanism in practice. Second, we will vary the exogenous part of the correlation structure between banks in order to investigate the sensitivity to this model input. For simplicity, we assume zero recovery (i.e., $\beta = 0$) throughout these case studies so that we are aligned with the Gai--Kapadia setting~(\cite{GK10}).

\subsubsection{Realized dynamics}\label{sec:1maturity-cs-path}
Consider an illustrative two bank (plus society node) system with zero risk-free rate $r = 0$.  We take the terminal time to be $T = 1$ year with time steps $\dt = 1/12$ (i.e., 1 month).  At the maturity $T$,
\[\vL_0 = \left(\begin{array}{ccc} 0.5 & 0 & 1 \\ 0.5 & 1 & 0 \end{array}\right)\]
specifies the network of obligations, where the first column gives the external liabilities. The external assets have the same variance $\sigma^2 = 0.25$ with a correlation of $\rho = 0.5$. Both banks begin with $x_i(0) = 1.5$ in external assets for $i \in \{1,2\}$. We use the forward-backward formulation~\eqref{eq:1maturity-recursion} to compute the clearing capital $\vK$ and probabilities of solvency $\vP$. Figure~\ref{fig:path} displays a single sample path of the external assets $\vx(\omega)$, capital $\vK(\omega)$, and probabilities of solvency $\vP(\omega)$.  

\begin{figure}[h]
	\centering
	\begin{subfigure}[t]{0.3\textwidth}
		\centering
		\includegraphics[width=\textwidth]{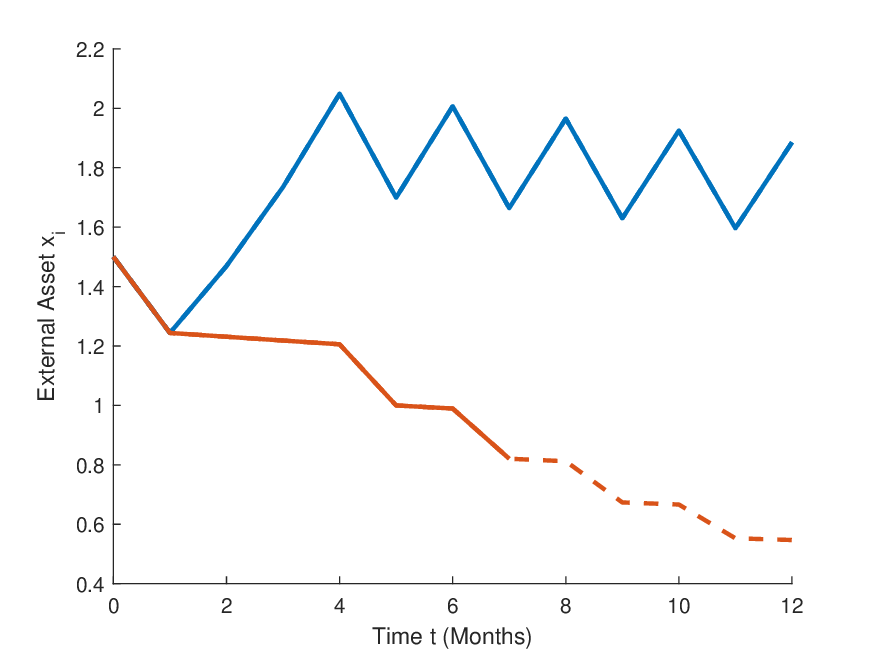}
		\caption{External assets $\vx$}
		\label{fig:path-x}
	\end{subfigure}
	~
	\begin{subfigure}[t]{0.3\textwidth}
		\centering
		\includegraphics[width=\textwidth]{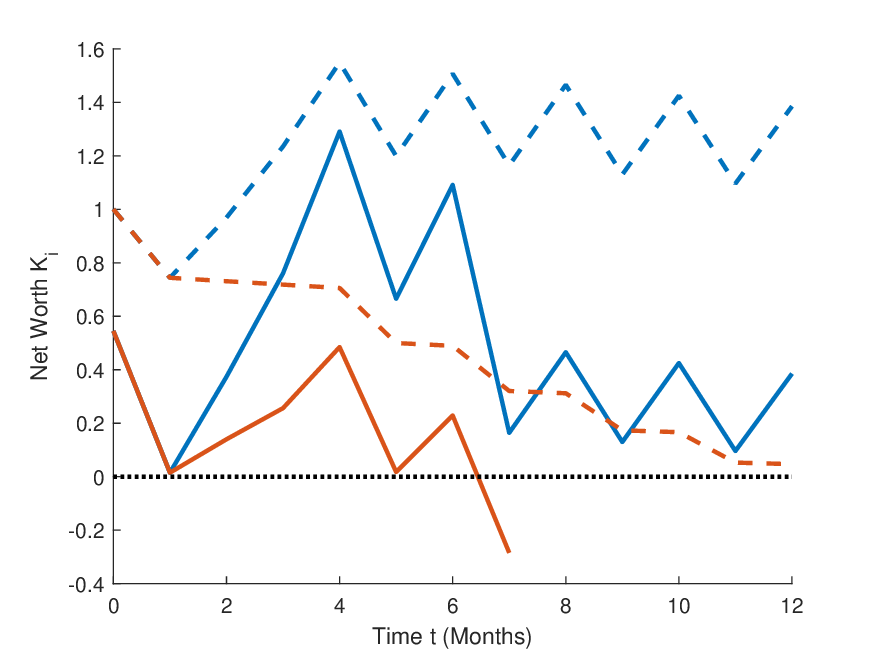}
		\caption{Capital $\vK$}
		\label{fig:path-K}
	\end{subfigure}
	~
	\begin{subfigure}[t]{0.3\textwidth}
		\centering
		\includegraphics[width=\textwidth]{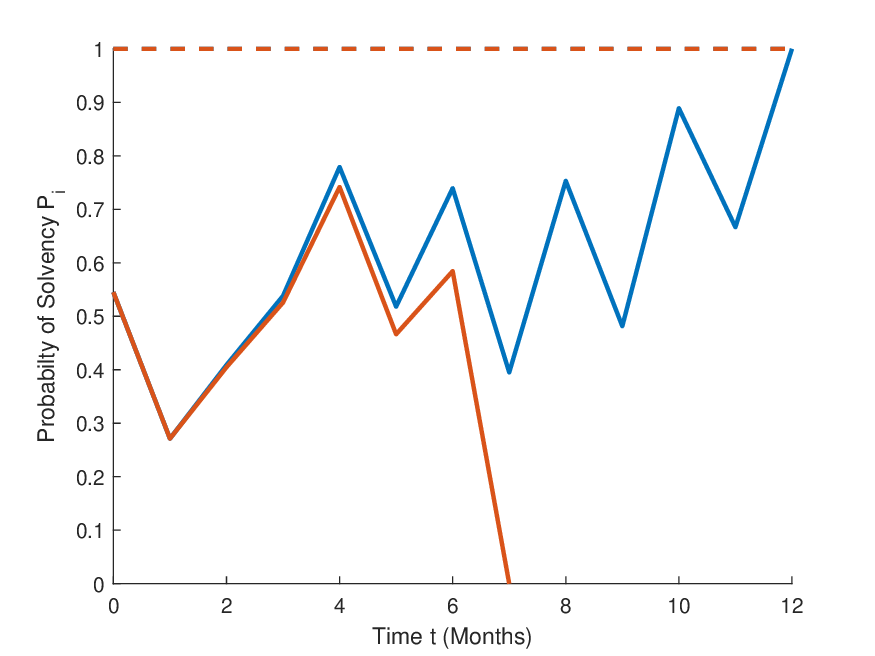}
		\caption{Probabilities of solvency $\vP$}
		\label{fig:path-P}
	\end{subfigure}
	\caption{A sample path of the external assets $\vx$, capital $\vK$, and probabilities of solvency $\vP$ for two banks. Bank 1 is displayed in blue and bank 2 in red. Our model yields the solid lines. Dashed lines indicate historical price accounting (given by the clearing problem in Appendix \ref{sec:hpa-1}).}
	\label{fig:path}
\end{figure}

Under our model's forward-backward dynamics, bank $2$ (in the solid red line) defaults on its obligations at time $t = 7$ months, while bank $1$ (in the solid blue line) remains solvent until maturity.  This is clearly seen in both Figure~\ref{fig:path-K} of capital  and Figure~\ref{fig:path-P} of the probability of solvency.  Notably, even at time $t = 7$ months, bank $2$'s external assets are approximately $x_2(7\text{ months}) \approx 0.821$; this means that, without considering distress contagion (i.e., via historical price accounting, as indicated with the dashed lines) bank $2$ would \emph{not} be in default.\footnote{Though not displayed in Figure~\ref{fig:path}, the liquidity-only default rule (as per Remark~\ref{rem:illiquid}) would realize no defaults.}
This demonstrates the dynamic impact of our endogenous distress contagion: along the given path, bank $1$ does not default, yet the possibility that it fails to repay its obligations due to uncertainty in the future drives down the capital of bank $2$ (as interbank assets are marked down) enough so that it is driven into insolvency. Thus, a doom loop type of distress contagion from bank $1$ to itself over time (via solvency contagion in unrealized paths of the tree) ultimately leads to actual default contagion. Since, in this case, a default is actualized, the capital of bank 1 remains depressed. Had bank 2 instead survived the distress, due to a better realization of its external assets from $t=6$ months onwards, then the capital of bank 1 would recover.

Finally, note that Figures~\ref{fig:path-x} and \ref{fig:path-K} illustrate how the distress contagion induces volatility clustering in the form of a down market effect. Indeed, relative to the assets or historical price accounting values, the capital of both banks displays additional downside volatility during the distress leading to bank 2's failure.

\subsubsection{Non-monotonic dependence on exogenous asset correlation}\label{sec:1maturity-cs-corr}
Consider again the simple $n = 2$ bank network of Section~\ref{sec:1maturity-cs-path}.  Rather than studying a single sample path of this system, we will instead vary the correlation $\rho \in (-1,1)$ to emphasize the nontrivial sensitivity to (exogenous) asset correlations. As we only investigate the behavior of the system at the initial time $t = 0$, and because the system is symmetric, throughout this case study we will, without loss of generality, only discuss bank $1$. In addition, within this case study, we explore the implications of three different modelling frameworks: (i) the mark-to-market system introduced herein; (ii) the corresponding model with historical price accounting rules considered in Appendix~\ref{sec:hpa-1}; and (iii) a system without interbank obligations.\footnote{We do not benchmark with the liquidity-only default rule, which only permits defaults at the terminal time, since (as stated in Remark~\ref{rem:1maturity-comparison}) such a clearing system will dominate the historical price accounting system used as a benchmark herein.}

\begin{figure}[h]
\centering
\begin{subfigure}[t]{0.45\textwidth}
\centering
\includegraphics[width=\textwidth]{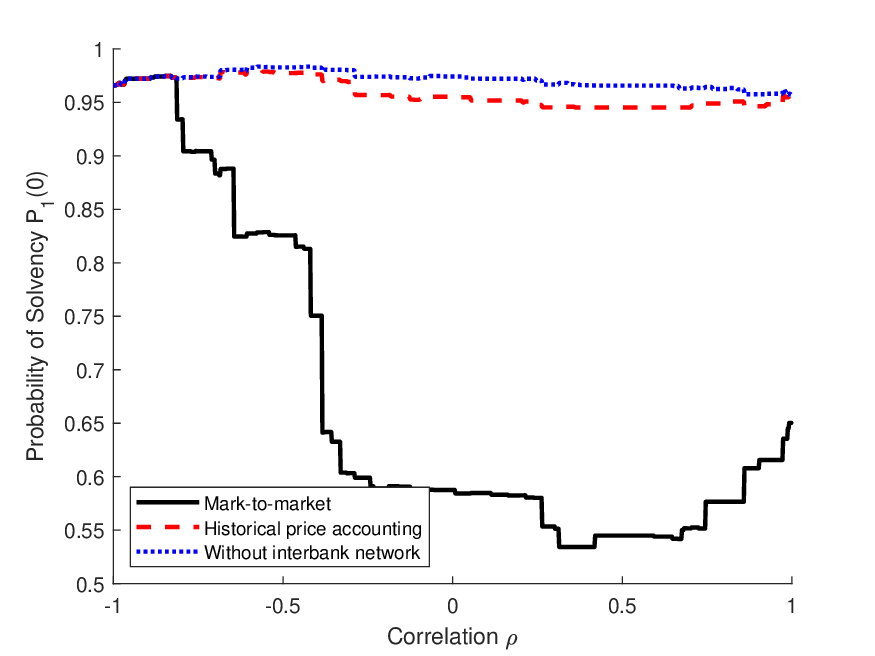}
\caption{The probability of solvency $P_1(0)$ at time $t = 0$ for bank $1$.} 
\label{fig:corr-P}
\end{subfigure}
~
\begin{subfigure}[t]{0.45\textwidth}
\centering
\includegraphics[width=\textwidth]{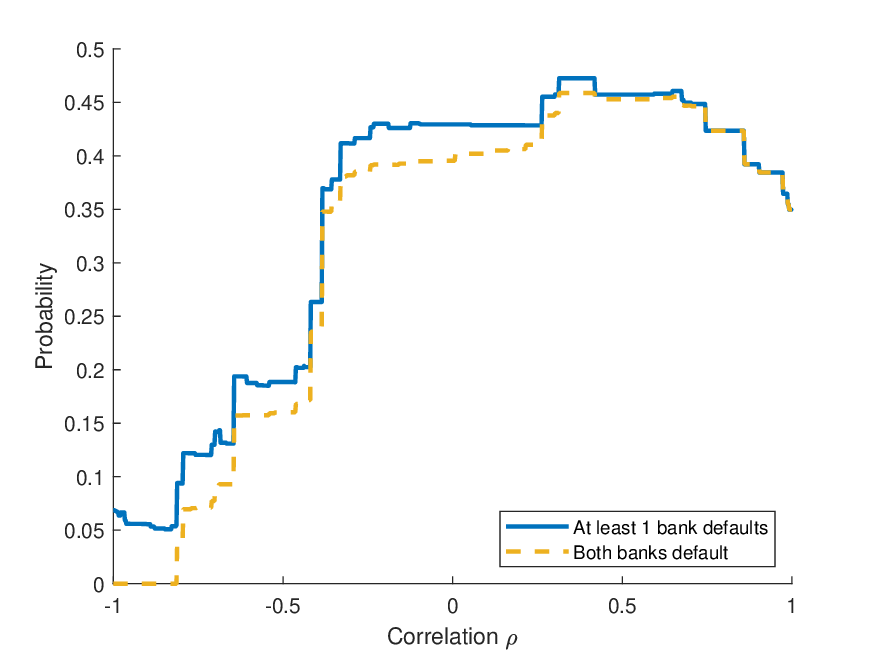}
\caption{The probability of one or both banks being in default as measured at time $t = 0$ under mark-to-market accounting.}
\label{fig:corr-defaults}
\end{subfigure}
\caption{Section~\ref{sec:1maturity-cs-corr}: The impact of the correlation $\rho \in (-1,1)$ between the external asset values on health of the financial system as measured by probabilities of solvency/default.}
\label{fig:corr}
\end{figure}

The probability of solvency $P_1(0)$ at time $t = 0$ as a function of the correlation $\rho$ between the banks' external assets is provided in Figure~\ref{fig:corr-P}; we wish to note that, due to the construction of this network, $K_1(0) = P_1(0)$.  Notably, the response of the probability of solvency to $\rho$ follows a staircase structure.  
This structure is due to the discrete nature of the tree model considered within this work.  Specifically, because of the discrete time points and asset values, for a branch of the tree to move sufficiently to cause a bank to move from solvency into default (or vice versa) requires a sufficient change in the system parameters (e.g., the correlation); once such a sufficient change occurs to the tree, due to the possibility of contagion across time and between banks, there can be knock-on effects that generate large jumps in the health of the financial system. These probabilities of solvency can be compared to the probability of solvency \emph{without} consideration for either distress contagion (i.e., using historical price accounting) or the interbank network (i.e., setting $L_{12} = L_{21} = 0$). Without an interbank network there is no avenue for contagion within this system and, as such, bank $1$ is solvent in more scenarios under this setting. In fact, the only influence that the asset correlations $\rho$ have in this no-interbank network setting is due to the construction of the multinomial tree. 
Furthermore, when distress contagion is neglected with the use of historical price accounting, the probability that bank $1$ is solvent is nearly equal to the no-interbank network setting. This implies that distress contagion can dominate, and exacerbate, the default contagion that is more commonly considered.

In Figure~\ref{fig:corr-defaults}, we investigate the probability of defaults. For this we take the view of a regulator who would be concerned with the number of defaults but not which institution is failing.  
The blue solid line displays the probability that at least one of the banks will default (as measured at time $t = 0$). As with the probability that bank $1$ is solvent $P_1(0)$, this has a step structure that is non-monotonic in the correlation $\rho$. 
The yellow dashed line displays the probability that both banks will default (as measured at time $t = 0$). Here we see that, though there is roughly a 7\% chance that there is a default in the system when $\rho \approx -1$, there is no possibility of a joint default when the banks' external assets are highly negatively correlated.  As the correlation increases, so does the likelihood of contagion in which both banks default (up until the banks default simultaneously, without the need for contagion, if $\rho = 1$). 

The above observations stand in stark contrast to the related network valuation adjustment considered in \cite{bardoscia_2019} where any correlation of the external assets has \emph{no} bearing on the individual solvency probabilities (it can be seen from \eqref{eq:bardoscia_capital} in Appendix \ref{sect:existing_works} that these only depend on the variances of the external assets, not their covariances). 
In the static setting of \cite{banerjee2022pricing}, based on the Eisenberg--Noe clearing model~(\cite{EN01,RV13}), the comonotonic scenario (i.e., $\rho = 1$) is found to yield the worst expected payments, though they do not consider the probability of default explicitly. 
With our dynamic distress contagion mechanism, we find that this no longer holds which emphasizes the complex interactions that can occur when incorporating time dynamics into financial contagion modeling. 
At first, we observe a general downward trend in the probability of solvency with the highest probability occurring near $\rho \approx -1$.  When the correlations are highly negative, for one bank to be in distress (through the downward drift of its external assets) directly means the other bank has a large surplus; in this way the interbank assets act as a diversifying investment to reduce the risk of default.
However, in contrast to the initial downward trend, we see the opposite trend from $\rho \approx 1/3$ with significant upward jumps after $\rho \approx 2/3$.
This increasing probability of solvency for large positive correlations comes from the decreased possibility for contagion to occur. Specifically, with high positive correlation, both banks are nearly guaranteed to default if one bank were to default. In contrast, for moderate correlations (e.g., $\rho \approx 1/2$), there are scenarios in which the default of one institution triggers a default in the other institution which would not have otherwise occurred. This can be seen in Figure~\ref{fig:corr-defaults} where the probability of only one bank defaulting is nearly zero for $\rho \geq 1/3$.

\section{The multiple maturity setting}\label{sec:Kmaturity}
For the remainder of the paper, we wish to allow for obligations to be due at multiple times. More precisely, at each maturity time $t_l$, there will be a given network of obligations $\vL_0^l$. Beyond this, the balance sheets are constructed according to the same principles as in Section~\ref{sec:1maturity-bs}.  

Multiple maturity models of default contagion, extending the Eisenberg--Noe framework, have recently been studied in the two-period model of \cite{KV16} and the multi-period model of \cite{BBF18}, both relying on historical price accounting. As in those works, we enforce the rule that if a bank defaults on a given obligation, then it also defaults on all subsequent obligations.  To determine a default, we note that we will now need to distinguish between solvency and liquidity, as a bank may have positive capital yet be unable to satisfy its short-term obligations in stark contrast to the single maturity setting.

In Section~\ref{sec:Kmaturity-bs}, we derive the balance sheets for each bank in the system, focusing on the new considerations that appear when multiple maturities are included. Section~\ref{sec:Kmaturity-model} provides the mathematical formalism and establishes the existence of clearing solutions.
Rather than providing detailed equivalent forward-backward and dynamic programming formulations, we only comment on how such models can be presented in this setting.
In Section~\ref{sec:Kmaturity-cs}, we use the model to study the implications of systemic risk on the term structures for interbank liabilities.
The proofs for all mathematical results are presented in Section \ref{sec:proofs-Kmaturity}.

\subsection{The balance sheet construction behind our model}\label{sec:Kmaturity-bs}

Following the balance sheet constructed within Section~\ref{sec:1maturity-bs}, but with minor modification, banks hold \emph{three} types of assets at time $t_l$: external risky assets $x_i(t_l) \in \lcal_{t_l}$, external risk-free assets, and interbank assets $\sum_{j = 1}^n L_{ji}^k$ at time $t_k$ with $k > l$ where $L_{ji}^k \geq 0$ is the total obliged from bank $j$ to $i$ at time $t_k$ ($L_{ii}^k = 0$ so as to avoid self-dealing).  Notably, as the bank may have received interbank payments at time $t_k$ with $k \leq l$ as well, the bank may have assets held in the risk-free asset.  Specifically, at time $t_l$, the bank can split its cash holdings between its external risky and risk-free assets so that the (simple) return from time $t_l$ to $t_{l+1}$ is:
\[R_i(t_{l+1},\alpha_i(t_l)) := \left[e^{r(t_{l+1}-t_l)}\alpha_i(t_l) + \frac{x_i(t_{l+1})}{x_i(t_l)}(1-\alpha_i(t_l))\right]-1\]
where $\alpha_i(t_l) \in \lcal_{t_l}$ provides the fraction of the cash account invested at time $t_l$ in the risk-free asset and, accordingly, $1-\alpha_i(t_l)$ is the fraction of the cash account invested at time $t_l$ in the risky asset. Throughout, we assume $\alpha_i(t_l,\omega_{t_l}) \in [0,1]$ so that bank $i$ is long in both assets.  The bank has liabilities $\sum_{j = 1}^n L_{ij}^k + L_{i0}^k$ at time $t_k$ where $L_{i0}^k \geq 0$ denotes the external obligations of bank $i$ at time $t_k$.  To simplify the expressions below we will define $L_{0i}^l = 0$ for all times $t_l$; furthermore to avoid the need to consider defaults due to the initial setup of the system, we will assume $L_{ij}^0 = 0$ for all pairs of banks $i,j$ so that no obligations are due at time $t_0 = 0$.

As in~\cite{KV16}, when a bank defaults on its obligations at time $t_l$, it also does so for all of its obligations at time $t_k > t_l$ as well. Moreover, when there are simultaneous defaults, there is the question of how to account for recovery of assets at the default time. Since bankruptcies take time to resolve, we assume there is a zero recovery rate at that instant, while the actual recovery rate $\beta \in [0,1]$ applies immediately after (i.e., at $t_l^+$). This is similar to the construction in~\cite{BBF18} as well as \cite[Algorithm 1]{KV16}.

Unlike the single-maturity setting, we now need to keep track of the cash account $V_i(t_l)$.  Its evolution is due to the reinvestment of the cash account from the prior time step (including any recovery of defaulting assets at time $t_{l-1}$) and the (realized) net payments (following the Gai--Kapadia setting (\cite{GK10})). This gives a recursive formulation
\begin{align*}
V_i(t_l) = (1+&R_i(t_l,\alpha_i(t_{l-1})))\biggl(V_i(t_{l-1}) + \beta\sum_{k = l-1}^{\ell} e^{-r(t_k-t_{l-1})} \sum_{j = 1}^n L_{ji}^k \ind{j \text{ defaulted at } t_{l-1}}\biggr) \\
    &+ \sum_{j = 0}^n \left[L_{ji}^l \ind{j \text{ is solvent at } t_l} - L_{ij}^l\right]
\end{align*}
for $l = 1,...,\ell$ with initial conditions $V_i(t_0) := x_i(t_0)$. For the valuation of interbank assets, we follow Section \ref{sec:1maturity-bs}.
Thus, the obligation at time $t_k$ from bank $j$ to $i$ is valued as $p_{ji}(t_l,t_k) := e^{-r(t_k-t_l)}L_{ji}^k(\beta + (1-\beta)\E[P_j(t_k,t_k) \; | \; \fcal_{t_l}])$ taking value (almost surely) in the interval $e^{-r(t_k-t_l)}L_{ji}^k\times[\beta,1]$ where $P_j(t_k,t_k,\omega_{t_k})$ is the realized indicator of solvency for bank $j$ at time $t_k$.  Defining $P_j(t,t_k) := \E[P_j(t_k,t_k) \; | \; \fcal_t]$ as the (conditional) probability of solvency for obligations with maturity at time $t_k$ as measured at time $t$, we have $p_{ji}(t_l,t_k) = e^{-r(t_k-t_l)}L_{ji}^k (\beta+(1-\beta)P_j(t_l,t_k))$.  In this way, the realized balance sheet for bank $i$ has dynamically adjusting write-downs that yield a realized net worth at time $t_l$ of
\begin{align*}
K_i(t_l) = V_i&(t_l) + \beta\sum_{j = 1}^n L_{ji}^l \ind{j \text{ defaults at } t_l}\\
    &+ \sum_{k = l+1}^\ell e^{-r(t_k-t_l)} \sum_{j = 0}^n [L_{ji}^k (\beta + (1-\beta)P_j(t_l,t_k))\ind{j \text{ did not default before } t_l} - L_{ij}^k].
\end{align*}

\begin{remark}
As in Remark~\ref{rem:nonmarketable}, we assume the interbank network is fixed. As such even though future interbank assets have a value determined by $\vP$, these assets are treated as nonmarketable and cannot be used to increase short-term liquidity as encoded in the cash account. Notably, ignoring the effects of changing the network structure, if these interbank assets are treated as both liquid and marketable then the cash account $\vV$ would be identical to $\vK$.
\end{remark}

In contrast with the single maturity setting, a default can now be due to either insolvency (if the net worth becomes negative) or illiquidity (if the bank cannot meet its obligations with its current cash account).  That is, bank $i$ will default at the stopping time $\tau_i$ given by
\begin{equation}\label{eq:default_time}
\tau_i := \inf\{t \in \bbt \; | \; \min\{K_i(t) \; , \; V_i(t)\} < 0\},
\end{equation}
so that failure occurs at the first time that either the realized capital or cash account become negative. This addresses a classical distinction between stock and flow based reasons for financial distress (\cite{wruck1990}), here corresponding to negative net worth vis-à-vis illiquidity. Absent any frictions and with perfect agreement about expected future values, an illiquid-yet-solvent bank should be able to raise the necessary funds. Thus, we are making an implicit assumption that illiquidity comes with unsalvageable (flow based) distress, in the sense that loans or selling of illiquid assets is not possible, so the bank cannot meet its obligations. This is justified by focusing on crisis situations. In practice, the illiquid bank may be too poorly managed or may face a run and/or a credit freeze, but we do not attempt to model any such events. For an interesting theoretical analysis of interbank credit freezes, albeit in a very different setting, we refer to \cite{acemoglu2021}. We also note that many classical examples of bank failures, such as Bear Sterns and Lehman Brothers during the global financial crisis, materialised as liquidity events due to their reliance on short-term debt (see, e.g., \cite{MorrisShin2016}).
	
		The distinction between insolvency and illiquidity also appears in earlier works on multi-period interbank systems such as \cite{CC15, KV16}. Similarly to this section, \cite{KV16} studies a bonafide multiple maturity model focusing on the systemic effects of short- versus long-term liabilities, but the framework is one of traditional default contagion that can be cleared in a forward-only fashion one period at a time. In \cite{CC15}, there is no real notion of multiple maturities, since it is effectively cleared as a sequence of single-maturity models, but defaults from illiquidity remain a key concern. In fact, the focus of \cite{CC15} is to analyze the clearing outcomes when incorporating different (pre-specified) strategies of a lender of last resort that, in certain (pre-specified) cases, will step in to save an illiquid-yet-solvent bank. It could be interesting to undertake a similar analysis for our model, but we do not pursue this direction here.

\begin{remark}\label{rem:illiquid2}
If one prefers to focus on solvency alone, then one simply replaces the default time \eqref{eq:default_time} with $\tau_i = \inf\{t \in \bbt \; | \; K_i(t) < 0\}$. Likewise, as in Remark~\ref{rem:illiquid}, if one only wishes to model illiquidity, then one can consider $\tau_i = \inf\{t \in \bbt \; | \; V_i(t) < 0\}$ in terms of the cash account alone. All our results can be reformulated accordingly.
\end{remark}

\begin{example}\label{ex:alpha}
We highlight here four meaningful levels of the rebalancing parameter $\alpha_i$:
\begin{itemize}
\item If $\alpha_i^0(t_l) := 0$, bank $i$ reinvests its entire cash account into the external asset $x_i$ at time $t_l$.
\item If $\alpha_i^1(t_l) := 1$, bank $i$ will move all of its investments into the risk-free bond at time $t_l$; this includes any prior investment in the external asset $x_i$. Though this is feasible in our setting, we generally consider this to be an extreme scenario. 
\item If 
    \begin{align*}
    \alpha_i^L(t_l) &:= \left[\frac{\sum_{k = 0}^l e^{r(t_l-t_k)} \sum_{j = 0}^n [L_{ji}^k\ind{\tau_j > t_k}+\beta\sum_{h = k}^\ell e^{-r(t_h-t_k)}L_{ji}^h \ind{\tau_j = t_k}-L_{ij}^k]}{V_i(t_l)+\beta\sum_{k = l}^\ell e^{-r(t_k-t_l)}\sum_{j = 1}^n L_{ji}^k \ind{\tau_j = t_l}}\right]^+\\
    &= \left[1 - \frac{x_i(t_l)}{V_i(t_l)+\beta\sum_{k = l}^\ell e^{-r(t_k-t_l)}\sum_{j = 1}^n L_{ji}^k \ind{\tau_j = t_l}}\right]^+,
    \end{align*}
    then bank $i$ will use its (risky) external asset position to cover any realized net liabilities, but will never increase its external position above its original level (i.e., if bank $i$ has net interbank assets, the external asset position will be made whole and all additional assets are invested in cash). Notably, if bank $i$ is solvent, this rebalancing parameter falls between the prior two cases, i.e., $\alpha_i^L(t_l) \in [0,1]$.
\item If
    \[\alpha_i^*(t_l,\omega_{t_l}) = \left[1 - \frac{K_i(t_l,\omega_{t_l})}{w_i \theta (V_i(t_l,\omega_{t_l})+\beta\sum_{k = l}^\ell e^{-r(t_k-t_l)}\sum_{j = 1}^n L_{ji}^k \ind{\tau_j = t_{l-1}}(\omega_{t_l}))}\right]^+,\]
    then bank $i$ is maximizing its (risky) external asset position subject to the risk-weighted capital ratio constraint with risk weight $w_i > 0$ and minimal threshold $\theta > 0$. This regulatory requirement, imposed by the Basel accords, is such that the ratio of the capital to the risk-weighted assets needs to exceed some minimal ratio.
\end{itemize}
If $\alpha_i(t_l) < 0$ then, implicitly, bank $i$ is shorting its own external assets; by a no-short selling constraint, we assume this cannot occur.  
If $\alpha_i(t_l) > 1$ then, similarly, bank $i$ is borrowing at the risk-free rate solely to purchase additional units of the risky asset; as this would produce new obligations, the study of such a scenario is beyond the scope of this work.
\end{example}

\subsection{Mathematical formalism}\label{sec:Kmaturity-model}

 We can formalize the balance sheet construction as an equilibrium problem that is jointly on the net worths ($\vK = (K_1,...,K_n)^\T$), cash accounts ($\vV = (V_1,...,V_n)^\T$), survival probabilities ($\vP = (P_1,...,P_n)^\T$), and default times ($\vt = (\tau_1,...,\tau_n)^\T$).  As in the single maturity setting, without loss of generality we set $\tau_i(\omega) := T+1$ if bank $i$ does not default on $\omega \in \Omega$.  
In contrast to the single maturity setting, we consider a different domain $\bbd := \prod_{l = 0}^\ell (\lcal_{t_l}^n \times \lcal_{t_l}^n \times [\vec{0},\vec{1}]^{|\Omega_{t_l}| \times (\ell-l+1)}) \times \{t_0,t_1,...,t_\ell,T+1\}^{|\Omega| \times n}$. This specifies that $\vK,\vV$ are adapted processes, $\vP$ is a collection of adapted processes between 0 and 1, and $\vt$ is a vector of stopping times.

Making explicit the dependence on the rebalancing parameter $\va$, a clearing solution is a fixed point $(\vK,\vV,\vP,\vt)$ of the mapping $\Psi: \bbd \to \bbd$ defined by
\begin{align}
\label{eq:Kmaturity} & \Psi(\vK,\vV,\vP,\vt;\va)  := \\
\nonumber    &\;\; (\Psi_{\vK}(t_l,\vV(t_l),\vP(t_l,\cdot),\vt) \; , \; \Psi_{\vV}(t_l,\vV(t_{[l-1]\vee 0}),\vt;\va) \; , \; \Psi_{\vP}(t_l,t_k,\vt)_{k = [l+1]\wedge\ell}^{\ell} \; , \; \Psi_{\vt}(\vK,\vV))_{l = 0}^{\ell}
\end{align}
where, for any $i = 1,...,n$,
\begin{equation*}
\begin{cases}
    \Psi_{\vK,i}(t_l,\tilde\vV,\tilde\vP,\vt) = \tilde V_i + \beta\sum_{j = 1}^n L_{ji}^l\ind{\tau_j = t_l} + \sum_{k = l+1}^{\ell} e^{-r(t_k-t_l)} \sum_{j = 0}^n \left[L_{ji}^k (\beta + (1-\beta)\tilde P_j(t_k))\ind{\tau_j \geq t_l} - L_{ij}^k\right] \\
    \Psi_{\vV,i}(t_l,\tilde\vV,\vt;\va) = \begin{cases} \begin{array}{l} [1 + R_i(t_l,\alpha_i(t_{l-1}))]\left(\tilde V_i + \beta\sum_{k = l-1}^\ell e^{-r(t_k-t_{l-1})}\sum_{j = 1}^n L_{ji}^k\ind{\tau_j = t_{l-1}}\right) \\ \quad + \sum_{j = 0}^n \left[L_{ji}^l \ind{\tau_j > t_l} - L_{ij}^l\right] \end{array} & \text{if } l > 0 \\ x_i(0) & \text{if } l = 0 \end{cases} \\
    \Psi_{\vP,i}(t_l,t_k,\vt) = \P(\tau_i > t_k \; | \; \fcal_{t_l}) \\
    \Psi_{\vt,i}(\vK,\vV) = \inf\{t \in \bbt \; | \; \min\{K_i(t),V_i(t)\} < 0\}.
\end{cases} 
\end{equation*}

In contrast to the single maturity setting, we can no longer guarantee monotonicity of the clearing problem, due to the recovery rate $\beta$ and the potential dependence of the rebalancing strategy to the capital and cash accounts (see, for instance, $\va^L$ in Example~\ref{ex:alpha}). However, it turns out that we can establish the existence of a clearing solution constructively using an extension of the dynamic programming principle formulation of the problem, following on from Section~\ref{sec:1maturity-dpp} in the single maturity setting. This construction is provided in Algorithm~\ref{alg:Kmaturity-dpp} and within Section \ref{sec:proof_multi_exist}.

\begin{theorem}\label{thm:Kmaturity-exist}
Fix the rebalancing strategies $\va(t_l,\hat\vK,\hat\vV) \in [\vec{0},\vec{1}]^{|\Omega_{t_l}|}$ so that they depend only on the current time $t_l$, capital $\hat\vK \in \lcal_{t_l}^n$, and cash account $\hat\vV \in \lcal_{t_l}^n$.
There exists a (finite) clearing solution $(\vK^*,\vV^*,\vP^*,\vt^*) = \Psi(\vK^*,\vV^*,\vP^*,\vt^*)$ to~\eqref{eq:Kmaturity}.
Furthermore, if we have Markovian external assets, $\vx(t_l) = f(\vx(t_{l-1}),\tilde\epsilon(t_l))$ for i.i.d.\ perturbations $\tilde\epsilon$, then there exists a clearing solution such that $(\vx(t_l),\vK^*(t_l),\vV^*(t_l),\vP^*(t_l,t_k)_{k = l}^\ell,\vi^*(t_l))_{l = 0}^\ell$ is Markovian where $\vi^*(t_l) := \ind{\vt^* \geq t_l}$ is the realized solvency process.
\end{theorem}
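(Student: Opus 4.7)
The plan is to extend the dynamic programming principle of Section \ref{sec:1maturity-dpp} to the multiple maturity setting, carrying the cash account $\vV$ as an additional state variable alongside the solvency indicator $\vi$. Specifically, I would define a DPP operator $\hat\Psi(t_l,\vi,\vV,\vx)$ recursively backward in time that returns a local pair $(\tilde\vK,\tilde\vP(t_l,\cdot))$ as a maximal fixed point of a local clearing equation: the forward-looking probabilities $\tilde P_j(t_l,t_k)$ for $k > l$ are obtained as conditional expectation of $\hat\Psi_{\vP,j}(t_{l+1},\vi',\vV',\vx')(t_{l+1},t_k)$ over successor nodes in $\succ(\omega_{t_l})$, where the updated state $(\vi',\vV')$ at $t_{l+1}$ is determined self-consistently from the current defaults $\ind{\min\{\tilde K_i,\tilde V_i\} \geq 0}$ and the prescribed strategy $\va(t_l,\tilde\vK,\tilde\vV)$. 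At $t_\ell = T$, the recursion terminates in a single-period clearing problem for $\tilde K_i = \tilde V_i + \beta\sum_j L_{ji}^\ell \ind{\tau_j = T}$, directly analogous to the base case of $\hat\Psi^T$ in Section \ref{sec:1maturity-dpp}.

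The main obstacle will be to establish the existence of the maximal fixed point for the local clearing problem at each $(t_l,\vi,\vV,\vx)$. Unlike in Section \ref{sec:1maturity-dpp}, monotonicity of the local operator is not automatic: the recovery term $\beta\sum_j L_{ji}^l \ind{\tau_j = t_l}$ injects capital into bank $i$ precisely when counterparty $j$ defaults at $t_l$, and the subsequent boost to $\vV'$ in the successor states further complicates the direction of monotonicity in the recursion. This breaks the chain ``larger $\tilde\vP \Rightarrow$ larger $\tilde\vK \Rightarrow$ fewer defaults $\Rightarrow$ larger $\tilde\vP$'' that underpinned Proposition \ref{prop:1maturity-dpp-define}. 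To circumvent this, I would exploit the combinatorial structure of the problem: at each node there are only finitely many candidate default patterns $S \subseteq \{i : \vi_i = 1\}$ at time $t_l$, so one can iterate over these, and for each fixed pattern the recovery contributions become constants. With $S$ frozen, the residual map in $\tilde\vP$ alone is monotone, so Tarski's theorem provides a maximal solution. A self-consistency check then retains only those patterns for which the implied $(\tilde\vK,\tilde\vV)$ reproduce exactly the assumed defaults, and the pointwise maximum over the finite family of consistent candidates defines $\hat\Psi(t_l,\vi,\vV,\vx)$.

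With the DPP operator in hand, I would construct the clearing solution by forward propagation: starting at $(t_0,\vec{1},\vx(0),\vx(0))$ with $\vV^*(0) = \vx(0)$, iteratively apply $\hat\Psi$ to obtain $(\vK^*(t_l),\vP^*(t_l,\cdot))$ at each node, then update $\vV^*$ via $\Psi_{\vV}$ and $\vi^*(t_{l+1}) = \vi^*(t_l)\,\ind{\min\{\vK^*(t_l),\vV^*(t_l)\} \geq \vec{0}}$, and proceed to $t_{l+1}$. Mirroring Proposition \ref{prop:1maturity-dpp}, the resulting tuple $(\vK^*,\vV^*,\vP^*,\Psi_{\vt}(\vK^*,\vV^*))$ is a fixed point of $\Psi$, yielding the required clearing solution. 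Finally, the Markovianity claim follows by construction: under the hypothesis that $\va$ depends only on the current state $(\hat\vK,\hat\vV)$, the DPP operator $\hat\Psi(t_l,\vi,\vV,\vx)$ depends only on its inputs and not on the path leading to them; combined with the Markov property of $\vx$, this ensures that $(\vx(t_l),\vK^*(t_l),\vV^*(t_l),\vP^*(t_l,t_k)_{k=l}^\ell,\vi^*(t_l))_{l=0}^\ell$ is Markov.
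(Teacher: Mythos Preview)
Your overall plan---define a DPP operator $\hat\Psi(t_l,\vi,\vV,\vx)$ recursively backward, forward-propagate from $(t_0,\vec{1},\vx(0))$ to build $(\vK^*,\vV^*,\vP^*)$, verify the fixed-point property of $\Psi$ via the conditional-expectation recursion for $\vP^*$, and read off Markovianity from the state-dependence of $\hat\Psi$---is exactly the paper's strategy. The divergence is local: the paper does \emph{not} enumerate default patterns. It asserts that the map over $(\tilde\vK,\tilde\vV,\tilde\vP)$ at each node is monotone on the stated lattice (``proven by inspection for $\hat\Psi_{\vK},\hat\Psi_{\vV}$ and by a simple induction argument for $\hat\Psi_{\vP}$'') and applies Tarski directly to obtain the maximal local fixed point.

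Your worry that $\beta>0$ spoils monotonicity is legitimate: the recovery term $\beta(\vL^l)^\T\diag{\vi}\ind{\tilde\vK\wedge\tilde\vV<\vec{0}}$ in the capital component is \emph{decreasing} in $(\tilde\vK,\tilde\vV)$, and the updated cash passed to successors via $\vF_l$ carries the same issue, so order-preservation in the product order is not obvious. The paper is casual here. That said, your proposed workaround also has a gap. Taking the \emph{pointwise maximum} over the finite family of self-consistent candidates does not in general produce a self-consistent candidate (the max of fixed points of two different frozen operators is not a fixed point of either), so it cannot define $\hat\Psi$. You also never argue that the set of consistent patterns is nonempty. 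For existence---which is all the theorem asserts---it suffices to exhibit \emph{one} consistent pattern at each node and select it (no max needed): e.g., initialise with $S=\emptyset$, solve the frozen monotone problem, add to $S$ any bank with $\min\{\tilde K_i,\tilde V_i\}<0$, and iterate; you would then need to check this terminates at a consistent pattern, which requires tracking how adding a defaulter propagates through both the current $(\tilde\vK,\tilde\vV)$ and the successor call $\hat\Psi_{\vP}(t_{l+1},\cdot)$. Make this argument explicit.
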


As mentioned above, we prove the results of Theorem~\ref{thm:Kmaturity-exist} by constructing the \emph{locally} maximal clearing solution. That is, following the logic of the fictitious default algorithm of~\cite{EN01}, we find the minimal set of defaulting institutions at each time $t \in \bbt$ given the future dynamics as in Section~\ref{sec:1maturity-dpp}. Analogously to Remark~\ref{rem:minimal}, we note that one can modify the provided construction to instead consider the \emph{locally} minimal solution instead.

Much like in Section~\ref{sec:1maturity-dpp}, we conclude by providing the constructive algorithm for a clearing solution $(\vK^*,\vV^*,\vP^*)$ as utilized in the proof of Theorem~\ref{thm:Kmaturity-exist} (see Section~\ref{sec:proof_multi_exist}). This is presented in Algorithm~\ref{alg:Kmaturity-dpp} below. Furthermore, we accompany this by a simple example, wherein we compute the maximal clearing solutions when adapting our running example (i.e., Examples~\ref{ex:running-nonunique} and~\ref{ex:running-dpp}) to the case of two maturities, as opposed to a single maturity.

\begin{algorithm}
\begin{algorithmic}[1]
\Require $l \in \{0,1,...,\ell\}$
\Require $\omega \in \Omega_{t_l}$
\Require $\vi \in \{0,1\}^n$
\Function{$\hat\Psi$}{$l,\hat\vK,\hat\vV,\vi,\omega$}
\If{$l = 0$}
    \State $\vV \gets \vx(0,\omega)$
\Else
    \State $\bar\vR \gets [e^{r(t_{l}-t_{l-1})}\va(t_{l-1},\hat\vK,\hat\vV) + \diag{\vx(t_{l-1},\omega)}^{-1}\diag{\vx(t_{l},\omega)}(1-\va(t_{l-1},\hat\vK,\hat\vV))] - 1$
    \State $\vV \gets (I + \diag{\bar\vR})\hat\vV + (\vL^l)^\T\vi - \vL^l\vec{1}$
\EndIf
\State $\vK \gets \vV + \beta(\vL^l)^\T\vi + \sum_{k = l+1}^\ell e^{-r(t_k-t_l)}((\vL^k)^\T\vi - L^k \vec{1})$
\For{$k \in \{l,...,\ell\}$}
    \State $\vP(k) \gets \vi$
\EndFor
\Repeat
    \State $(\vK^0,\vV^0,\vP^0) = (\vK,\vV,\vP)$
    \State $\vK \gets \vV^0 + \beta(\vL^l)^\T\ind{\vK^0\wedge\vV^0 < \vec{0}} + \sum_{k = l+1}^\ell e^{-r(t_k-t_l)}[(\vL^k)^\T\diag{\vi}(\beta + (1-\beta)\vP^0(k)) - L^k \vec{1}]$
    \If{$l = 0$}
        \State $\vV \gets \vx(0,\omega)$
    \Else
        \State $\vV \gets (I + \diag{\bar\vR})\hat\vV + (\vL^l)^\T\diag{\vi}\ind{\vK^0\wedge\vV^0 \geq \vec{0}} - L^l \vec{1}$
    \EndIf
    \State $\vP(l) \gets \diag{\vi}\ind{\vK^0\wedge\vV^0 \geq \vec{0}}$
    \If{$l < \ell$}
        \ForAll{$\bar\Omega \in \succ(\omega)$}
            \State $\hat\vV^0 \gets \vV^0+\beta\sum_{k = l}^\ell (\vL^k)^\T\diag{\vi}\ind{\vK^0\wedge\vV^0 < \vec{0}}$
            \State $(\bar\vK(\bar\omega),\bar\vV(\bar\omega),\bar\vP(\bar\omega)) \gets \hat\Psi(l+1,\vK^0,\hat\vV^0,\diag{\vi}\ind{\vK^0\wedge\vV^0\geq\vec{0}},\bar\omega)$
        \EndFor
        \For{$k \in \{l+1,...,\ell\}$}
            \State $\vP(k) \gets \frac{1}{\P(\omega)} \sum_{\bar\omega \in \succ(\omega)} \P(\bar\omega) \bar\vP(k,\bar\omega)$
        \EndFor
    \EndIf
\Until{$\vK = \vK^0$ and $\vV = \vV^0$ and $\vP = \vP^0$}
\State\Return $(\vK,\vV,\vP)$
\EndFunction
\State\Return $(\vK^*(0),\vV^*(0),\vP^*(0)) \gets \hat\Psi(0,\vx(0,\Omega),\vx(0,\Omega),\vec{1},\Omega)$
\end{algorithmic}
\caption{Computing the clearing solution $(\vK^*(0),\vV^*(0),\vP^*(0,\cdot))$ at time $t = 0$.}
\label{alg:Kmaturity-dpp}
\end{algorithm}

\begin{example}\label{ex:running-Kmaturity}
Consider the same system as in Example~\ref{ex:running-nonunique} but with the obligations split between times $t = 0.5$ and $t = 1$.
Specifically, we vary the obligations by a parameter $\lambda \in [0,1]$ so that
\[\vL_0^1 = \left(\begin{array}{ccc} 0 & 0 & \lambda \\ \lambda & 0 & \lambda \end{array}\right) \quad \text{ and } \vL_0^2 = \left(\begin{array}{ccc} 0 & 1 & 1-\lambda \\ 1-\lambda & 0 & 1-\lambda \end{array}\right).\]
In Figure~\ref{fig:running-Kmaturity}, we then plot the clearing results as observed at time $t = 0$ (we exclude the cash account $\vV^*(0) = \vx(0)$ as it is a constant by construction). We note that the $\lambda = 0$ case exactly coincides with the setting of Examples~\ref{ex:running-nonunique} and~\ref{ex:running-dpp}; we see this both for the clearing capital $\vK^*(0)$ and the solvency probability $\vP^*(0,1)$. The non-monotonicity exhibited by the clearing solutions, as we vary $\lambda\in [0,1]$, highlights the complex interactions that can occur when including multiple maturities.
\begin{figure}
\centering
\begin{subfigure}[t]{0.3\textwidth}
\centering
\includegraphics[width=\textwidth]{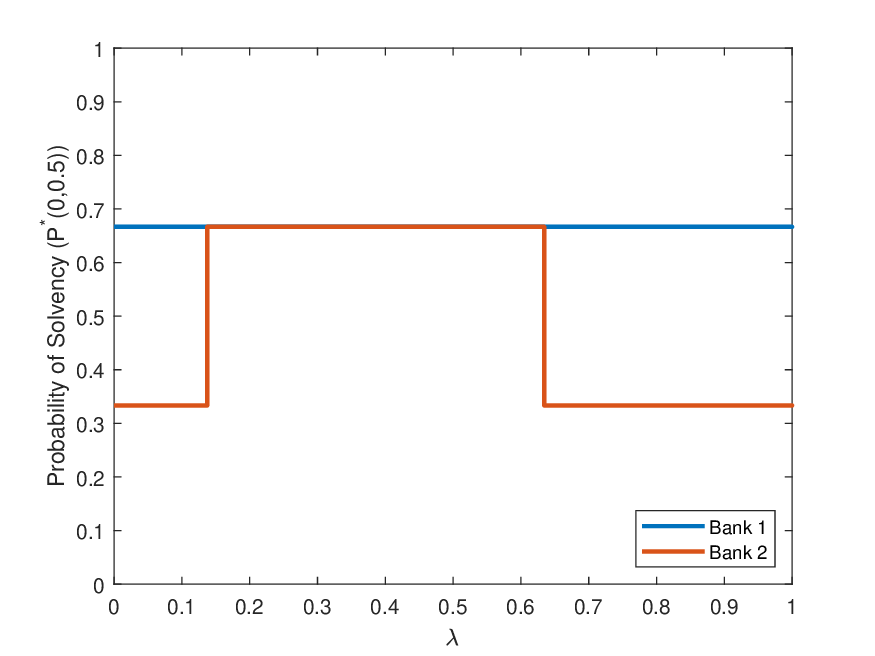}
\caption{$\vP^*(0,0.5)$}
\label{fig:running-Kmaturity-prob_l=2}
\end{subfigure}
~
\begin{subfigure}[t]{0.3\textwidth}
\centering
\includegraphics[width=\textwidth]{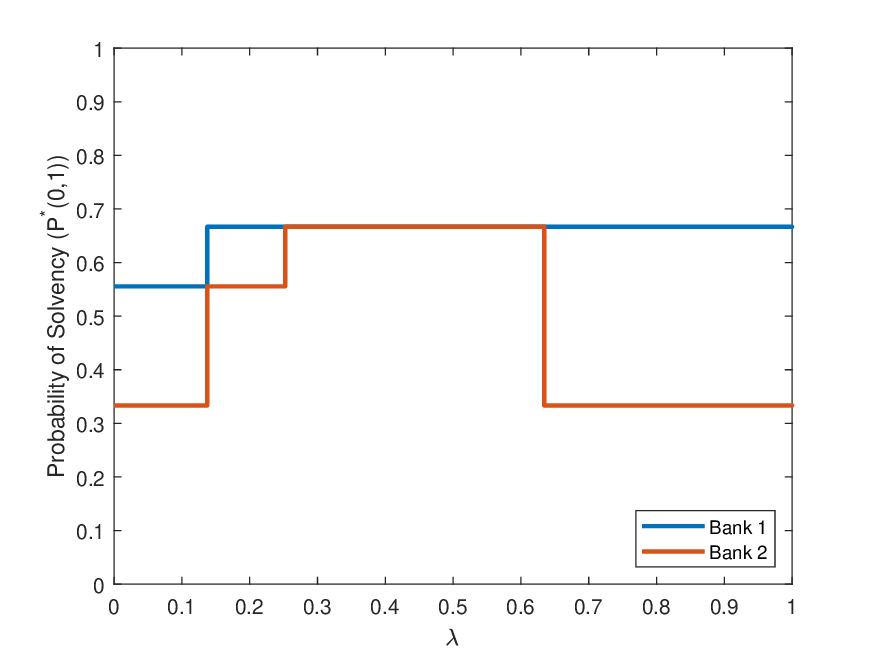}
\caption{$\vP^*(0,1)$}
\label{fig:running-Kmaturity-prob_l=3}
\end{subfigure}
~
\begin{subfigure}[t]{0.3\textwidth}
\centering
\includegraphics[width=\textwidth]{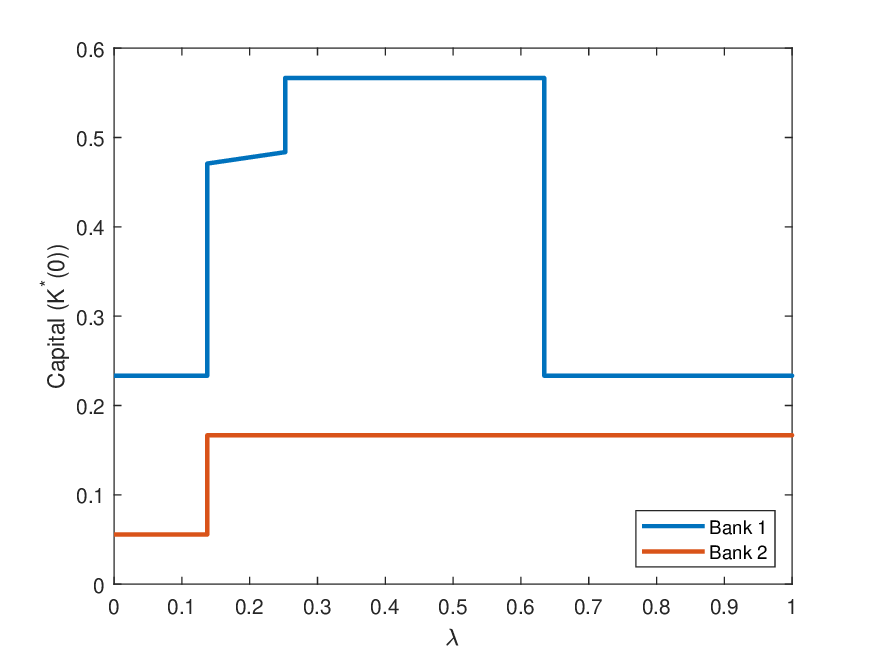}
\caption{$\vK^*(0)$}
\label{fig:running-Kmaturity-capital}
\end{subfigure}
\caption{Visualization of the clearing solution in Example~\ref{ex:running-Kmaturity} as obligations are varied between maturity dates.}
\label{fig:running-Kmaturity}
\end{figure}
\end{example}

\subsection{Systemic interbank term structures: three case studies}\label{sec:Kmaturity-cs}

For the purposes of this section, we shall consider external assets that follow a geometric random walk given by \eqref{eq:gbm}, as in Section~\ref{sec:1maturity-cs}. Furthermore, we specialize to a zero recovery rate, i.e., $\beta = 0$, as in the classical Gai--Kapadia setting (\cite{GK10}).

We will present three case studies to demonstrate the impact that our notion of distress contagion has on the (risk-aware) term structure for interbank claims.  For each bank $i$, we define this term structure, or yield curve, at time $t = 0$ through the interest rates $R_i^*(t_l) = P_i(0,t_l)^{-1/t_l}-1$ so that $(1+R_i^*(t_l))^{t_l} = 1/P_i(0,t_l)$.
First, we present a sample yield curve, as measured at time $0$, under different investment strategies $\va$.
Second, we vary the leverage of the banks to analyze the sensitivity of the term structure to the initial balance sheet of the banks. Third, we investigate a simple core-periphery network, leading to a striking observation concerning the effect that a local increase in volatility can have on the term structures of the full system. In all three case studies, we will encounter inverted shapes of the yield curves.

As in Section~\ref{sec:1maturity-cs-path}, we will compare the forward-backward approach presented herein to the forward-only pricing of historical price accounting. The latter system is defined in Appendix~\ref{sec:hpa-K} where also the corresponding clearing solutions are examined. With historical price accounting, at time $t=0$ the interbank claims for any maturity are either valued in full or at zero, depending on whether or not a given bank is in default at time $t=0$. Thus, there is not an intrinsic notion of a systemic term structure. However, in order to understand the extent to which our term structures are driven by (forward-backward) distress contagion, as opposed to (forward-only) default contagion, we can compute empirically the probabilities of default between time $t=0$ and a given maturity when running the system forward with historical price accounting. Insisting that these empirical probabilities are used to value the various interbank claims already at time $t=0$, we can then infer rates that may be compared with our term structures.

\begin{remark}
All clearing solutions computed in this section are found via the Picard iteration of~\eqref{eq:Kmaturity} beginning from the assumption that no banks will ever default.  That is, these computations are accomplished without application of the constructive dynamic programming approach presented in the proof of Theorem~\ref{thm:Kmaturity-exist}.  As this process converged for all examples, we suspect that the monotonicity of~\eqref{eq:Kmaturity} may be stronger than could be proven herein (which provides, e.g., a guarantee for the existence of a maximal clearing solution).
\end{remark}

\subsubsection{Term structures for varying investment strategies}\label{sec:Kmaturity-cs-term}
We shall work with the same $n = 2$ network as in Section~\ref{sec:1maturity-cs-path} but where all obligations $\vL_0$ are now split randomly (uniformly) over $(t_l)_{l = 1}^\ell$.  Due to the random split of obligations over time, the banks in this example are no longer symmetric institutions.  We consider only a single split of the obligations as the purpose of this case study is to understand the possible shapes of the term structure under varying rebalancing strategies $\va$.  In particular, we will study three meaningful rebalancing structures: all investments are made in the external asset only ($\va^0$ as provided in Example~\ref{ex:alpha}), all surplus interbank payments are held in the risk-free asset ($\va^L$ as defined in Example~\ref{ex:alpha}), and the optimal investment strategy ($\va^*$ as given in Example~\ref{ex:alpha} with risk-weight $w_i = 2$ for $i \in \{1,2\}$ and threshold $\theta = 0.08$ to match the Basel II Accords). 
\begin{figure}[h]
\centering
\begin{subfigure}[t]{0.45\textwidth}
\centering
\includegraphics[width=\textwidth]{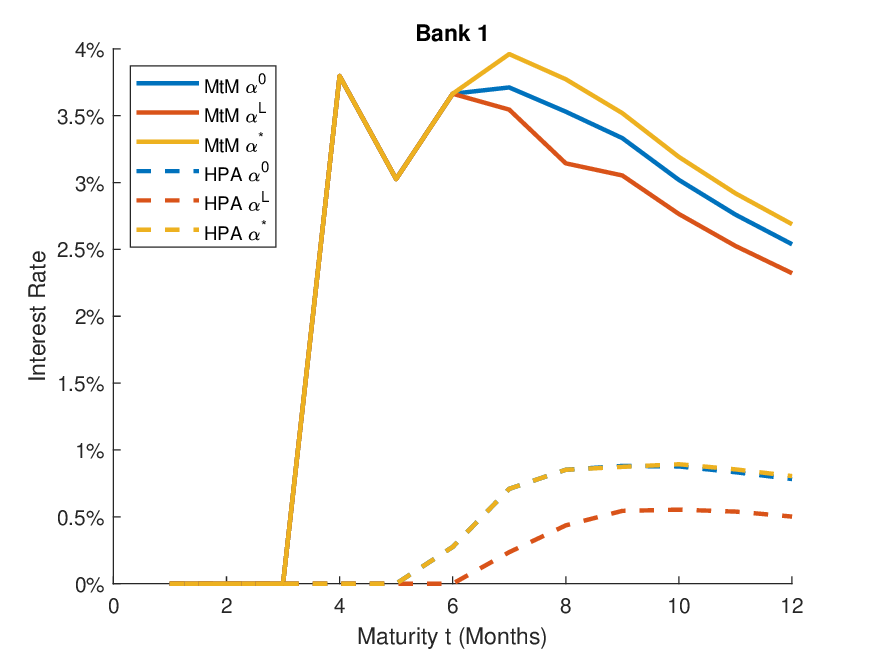}
\caption{Bank $1$}
\label{fig:term-1}
\end{subfigure}
~
\begin{subfigure}[t]{0.45\textwidth}
\centering
\includegraphics[width=\textwidth]{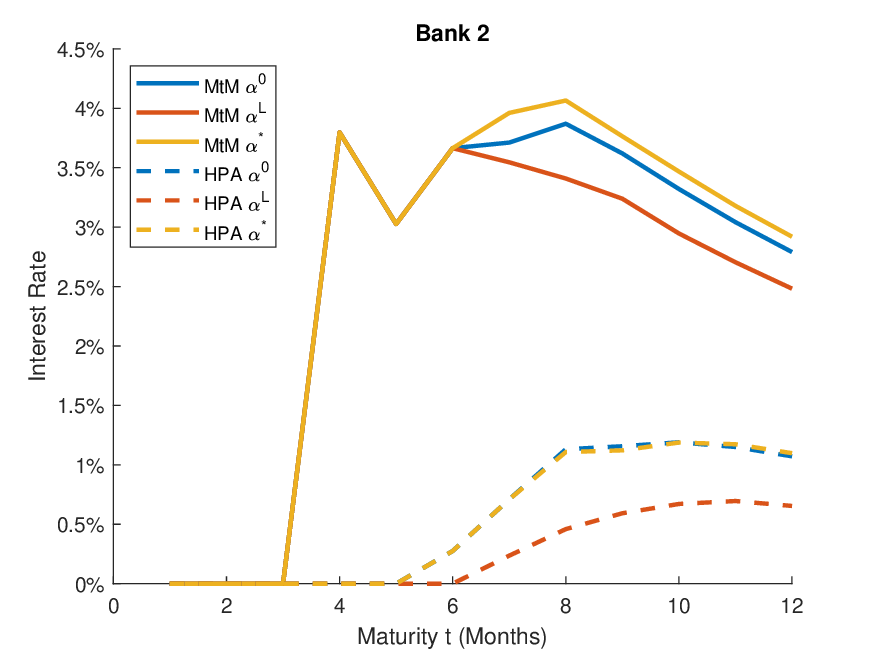}
\caption{Bank $2$}
\label{fig:term-2}
\end{subfigure}
\caption{Section~\ref{sec:Kmaturity-cs-term}: Term structure for banks $1$ and $2$ for varying investment strategies $\alpha$. The solid lines are computed from our model. The dotted lines are provided for comparison and indicate the corresponding rates under historical price accounting.}
\label{fig:term}
\end{figure}

Figure~\ref{fig:term} displays the systemic term structure for both banks in this system.  Though these interest rates are similar for the two institutions, they are not identical due to the aforementioned random splitting of the obligations over the 12 time periods (e.g., months). First, we note that the historical price accounting rules 
lead to lower interest charged than under mark-to-market accounting.
Observe that the yield curve for both banks under mark-to-market pricing has an inverted shape, i.e., the interest rate for longer dated maturities is lower than some short-term obligations.  Inverted yield curves are typically seen as precursors of economic distress; here the probabilities of default are largely driven by contagion as, e.g., bank $1$ will never default so long as bank $2$ makes all of its payments in full (bank $2$ will default in fewer than 1.1\% of paths under all three investment strategies if bank $1$ pays in full). In contrast, when considering the forward-only measure of historical price accounting without distress contagion, the inverted shape is much less pronounced.

Comparing the different investment strategies makes clear that $\va^L$ is the least risky, i.e., the interest rates are dominated by those of $\va^0,\va^*$.  This is expected; so long as the bank is healthy, it holds all surplus assets in the risk-free asset, otherwise it draws down its cash account to pay off obligations.
Surprisingly, however, $\va^*$ results in higher equilibrium interest rates than $\va^0$ under mark-to-market accounting. That is, imposing a regulatory constraint on investments through $\va^*$ leads to greater probabilities of default than solely investing in the risky (external) asset $\va^0$. Though, naively, it may seem counterintuitive that the most volatile investing strategy ($\va^0$) is \emph{not} the riskiest ex post, we conjecture this is due to the pro-cyclicality of the capital adequacy requirement (see also, e.g.,~\cite{banerjee2021price}). 
Specifically, when ignoring the possibility of counterparty risks (i.e., assuming all interbank assets are fulfilled in full) bank $2$ defaults under more scenarios when both banks follow $\va^0$ than when they follow $\va^*$.
When accounting for the network effects, the pro-cyclicality of the capital adequacy regulation implies that banks are forced to move their investments into the risk-free asset when under stress. This means that once a bank is stressed, the cash account has lower volatility and the institution is less able to recover when the external asset value increases; because of default contagion, once a bank defaults it will drag its counterparties down as well potentially precipitating a cycle of contagion.

We wish to conclude this case study by commenting briefly on the dependence of $\va^*$ on the risk-weights $w := w_1 = w_2$.  
If this risk-weight is set too low (below approximately $0.58$ for this example), then---even under the mark-to-market pricing---the banks will not be constrained at all by the regulatory environment. Therefore, under such a setting, the resulting interest rates are identical to those under $\va^0$.
Conversely, if this risk-weight is set too high (above approximately $2.15$ for this example), then the banks will not be able to invest in the risky (external) asset at all due to the regulatory constraints under mark-to-market pricing. Thus, under such a setting, the resulting interest rates are 0 (identical to those under $\va^1$ which are not displayed above) due to the construction of this system.
Around these threshold interest rates, the term structures can be highly sensitive to the regulatory environment.  Hence, naively, and heuristically, setting regulatory constraints can result in large unintended risks.  We note that the recent works of~\cite{feinstein2020capital,banerjee2021price} provide discussions on determining risk-weights to be consistent with systemic risk models.

\subsubsection{Dependence on leverage}\label{sec:Kmaturity-cs-leverage}

Having explored the impact of the investment strategy on the yield curve in Section~\ref{sec:Kmaturity-cs-term}, we now wish to explore how the (initial) leverage of the banking book can impact the shape.  As we demonstrate by numerical experiments, we find a normal term structure when the leverage is low enough but it becomes inverted for riskier scenarios.  

As in Section~\ref{sec:Kmaturity-cs-term}, we consider a variation of the $n = 2$ network of Section~\ref{sec:1maturity-cs-path} for $\ell = 12$ months, where we vary only the interbank assets and liabilities.
The banking book leverage ratio (assets over equity assuming all debts are paid in full) of bank 1 at time $t = 0$ is
\[\lambda_1 := \frac{x_1(0) + \sum_{l = 0}^\ell L_{21}^l}{x_1(0) - \sum_{l = 0}^\ell L_{10}^l} = 1.5 + \bar L_{21}\]
for $\bar L_{21} = \sum_{l = 0}^\ell L_{21}^l \geq 0$ (where $x_1(0) = 1.5$ and $\sum_{l = 0}^{\ell} L_{10}^\ell = 0.5$ by construction); similarly $\lambda_2 := 1.5 + \bar L_{12}$ for $\bar L_{12} = \sum_{l = 0}^\ell L_{12}^l \geq 0$.
As in our prior case studies, we will assume the banking book for the two banks are symmetric so that $\bar L := \bar L_{12} = \bar L_{21}$ and therefore also $\lambda := \lambda_1 = \lambda_2$ throughout this example.
In contrast to Section~\ref{sec:Kmaturity-cs-term}, here we assume all obligations are split deterministically such that
\[L_{12}^l = L_{21}^l = \begin{cases} \bar L/3 &\text{if } l \in \{3,6,12\} \\ 0 &\text{else} \end{cases} 
\quad \text{ and } \quad 
L_{10}^l = L_{20}^l = \begin{cases} 1/6 &\text{if } l \in \{3,6,12\} \\ 0 &\text{else.} \end{cases}\] 
That is, only 3 times ($t =0.25,0.5,1$) are maturities for debts, and all liabilities are split equally over those dates.
To complete the setup, we assume that all banks follow the optimal rebalancing strategy $\va^*$ (as proposed in Example~\ref{ex:alpha}) with $w := w_1 = w_2 = 2$ and $\theta = 0.08$.

\begin{figure}
\centering
\includegraphics[width=0.6\textwidth]{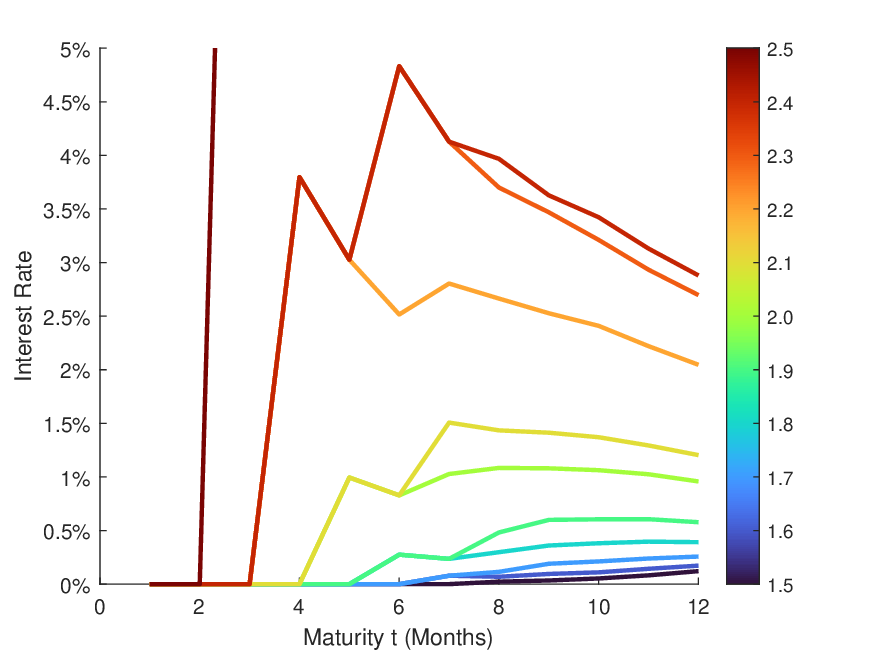}
\caption{Section~\ref{sec:Kmaturity-cs-leverage}: The term structure for both banks 1 and 2 under varying leverage ratios $\lambda \in [1.5,2.5]$ indicated by the right-hand axis.}
\label{fig:leverage}
\end{figure}
First, we want to comment on the impact that increasing the leverage $\lambda$ of the two banks has on the health of the financial system.  As seen in Figure~\ref{fig:leverage}, the system has higher (implied) interest rates $R_i^*(t_l)$ at every time $t_l$ under higher leverage.  That is, the probability of a default (as measured at time $0$) increases as the initial leverage $\lambda$ increases.  This is as anticipated because larger leverages correspond with firms that are less robust to financial stresses, i.e., a smaller shock is required to cause a bank to default.

\begin{table}[t]
\centering
{\footnotesize
\begin{tabular}{|c||c|c|c|c|c|c|c|c|c|c|c|}
\cline{2-12}
\multicolumn{1}{c|}{} & \multicolumn{11}{|c|}{Leverage $\lambda$} \\ \cline{2-12}
\multicolumn{1}{c|}{} & 1.5 & 1.6 & 1.7 & 1.8 & 1.9 & 2.0 & 2.1 & 2.2 & 2.3 & 2.4 & 2.5 \\
\cline{2-12}\hline
$t = 0.25$ & 0.00\% & 0.00\% & 0.00\% & 0.00\% & 0.00\% & 0.00\% & 0.00\% & 0.00\% & 0.00\% & 0.00\% & 16.30\% \\ \hline
$t = 0.5$ & 0.00\% & 0.00\% & 0.00\% & 0.27\% & 0.27\% & 0.83\% & 0.83\% & 2.52\% & 4.83\% & 4.83\% & 9.71\% \\ \hline
$t = 1.0$ & 0.12\% & 0.17\% & 0.26\% & 0.39\% & 0.58\% & 0.96\% & 1.20\% & 2.05\% & 2.70\% & 2.88\% & 5.18\% \\ \hline
\end{tabular}
}
\caption{Section~\ref{sec:Kmaturity-cs-leverage}: Yields for obligations as measured at time $0$.\protect\footnotemark}
\label{table:leverage}
\end{table}
\footnotetext{The system without distress contagion, as modeled by historical price accounting (Section~\ref{sec:hpa-K}) exhibits 0\% interest rates for $t = 0.25$ and $t = 0.5$; at $t = 1$, the interest rates range from $0.12\%$ to $0.20\%$ as the leverage increases.}

As obligations are only due at $t \in \{0.25,0.5,1\}$, we wish to consider the interest rates $R_i^*(t_l)$ for those dates specifically.  As displayed in Table~\ref{table:leverage}, when the leverage $\lambda$ is small ($\lambda \leq 2.1$), the interest rates charged are monotonically increasing over time, i.e., a normal yield curve.  In particular, this occurs at $\bar L = 0$ (i.e., $\lambda = 1.5$) when no interbank network exists; such a scenario can be compared with, e.g.,~\cite{black1976valuing} where a single firm is studied in isolation.  As the leverage $\lambda$ grows via the increased size of the interbank network, the risk of defaults grows as well.  This increased network size eventually leads to an inverted yield curve, i.e., in which the implied interest charged on obligations due at $t = 1$ is lower than on those due at $t = 0.5$.  Until the leverage is sufficiently high ($\lambda \approx 2.5$), no defaults are realized at the first maturity $t = 0.25$ at all.  The banks always make all payments due at $t = 0.25$ because of the tree structure considered herein; specifically, a bank fails to make payments on an early obligation only if it is already close to default at $t = 0$ as the tree does not model extreme events.

\subsubsection{Risk contamination from core to periphery}\label{sec:Kmaturity-cs-cp}
In our final case study, we consider a stylized core-periphery network, noting that such a structure has been found in several empirical studies (e.g.,~\cite{CP14,FL15,veld2014core}).  We will assume that there are $n = 12$ banks with 2 core banks, 10 peripheral banks, and a societal node.  Each core bank owes the other core bank \$3, all of the peripheral banks \$0.50, and society \$5.  The peripheral banks owe both core banks \$0.50, nothing to the other peripheral banks, and \$1 to society.
These obligations will be equally split over 4 quarters ($t = 0.25,~0.5,~0.75,~1$ with $\ell = 4$).
As before, we will assume the prevailing risk-free interest rate $r = 0$.
As in Section~\ref{sec:Kmaturity-cs-leverage}, we will assume that all banks follow the optimal investment strategy $\va^*$ presented in Example~\ref{ex:alpha} with $w_i = 2$ for every bank $i$ and with regulatory threshold $\theta = 0.08$.
Finally, the external assets are as follows.  The core banks begin at time $t = 0$ with \$15 in external assets each. The peripheral banks begin with \$3 in external assets each.  The correlation between any pair of bank assets is fixed at $\rho = 0.3$. 
We will study two scenarios for the external asset volatilities: 
\begin{enumerate}
\item a low volatility (unstressed) setting in which the volatility of the external assets for any bank is $\sigma_C^2 = \sigma_P^2 = 0.5$; 
\item a high volatility (stressed) setting in which the volatility of the external assets for either core bank jumps upward to $\sigma_C^2 = 0.75$ while the volatility for the peripheral banks is unaffected ($\sigma_P^2 = 0.5$).
\end{enumerate}
Similarly to Section~\ref{sec:1maturity-cs-corr}, we compare our mark-to-market framework with (i) the historical price accounting setting presented in Section~\ref{sec:hpa-K}, and (ii) the system without the interbank obligations. In this way we can explore the different impacts of distress and default contagion on the term structure of the banks.\footnote{As discussed above, we empirically find the probability of default under the historical price accounting and no-interbank network systems at different maturities to determine the appropriate interest rates.}

\begin{figure}[h]
\centering
\begin{subfigure}[t]{0.45\textwidth}
\centering
\includegraphics[width=\textwidth]{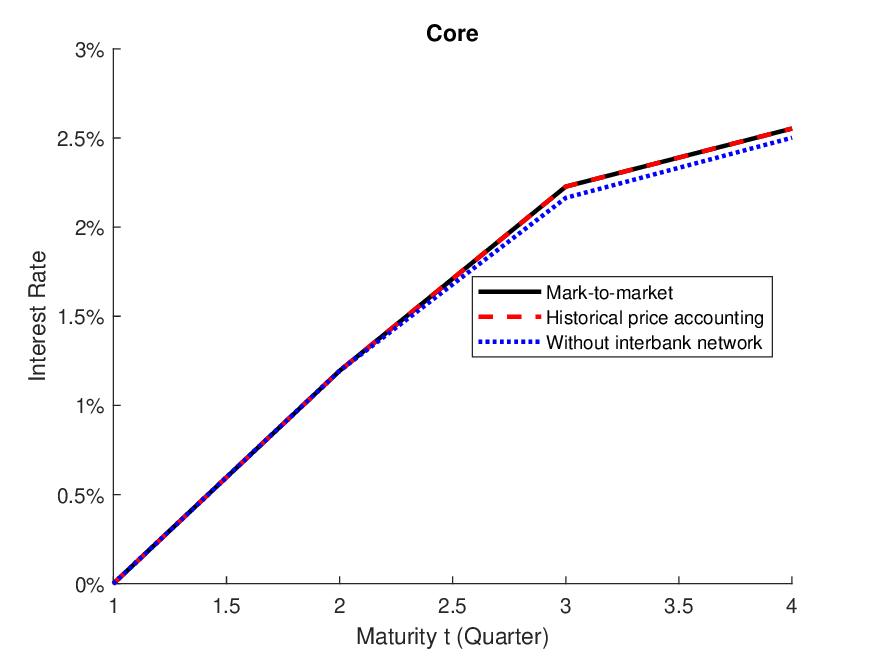}
\caption{Core: Low volatility setting}
\label{fig:cp-core-unstressed}
\end{subfigure}
~
\begin{subfigure}[t]{0.45\textwidth}
\centering
\includegraphics[width=\textwidth]{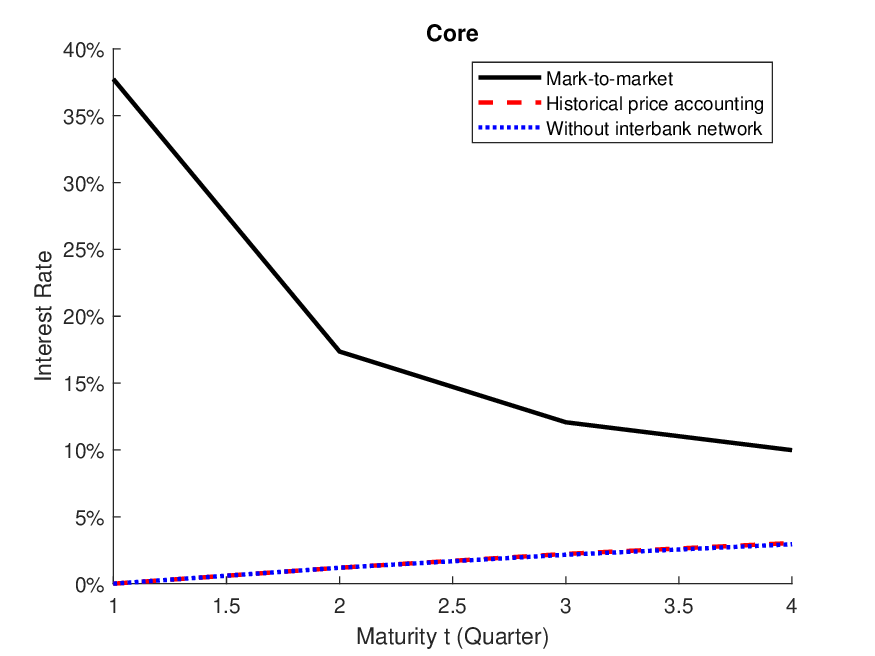}
\caption{Core: High volatility setting}
\label{fig:cp-core-stressed}
\end{subfigure}
~
\begin{subfigure}[t]{0.45\textwidth}
\centering
\includegraphics[width=\textwidth]{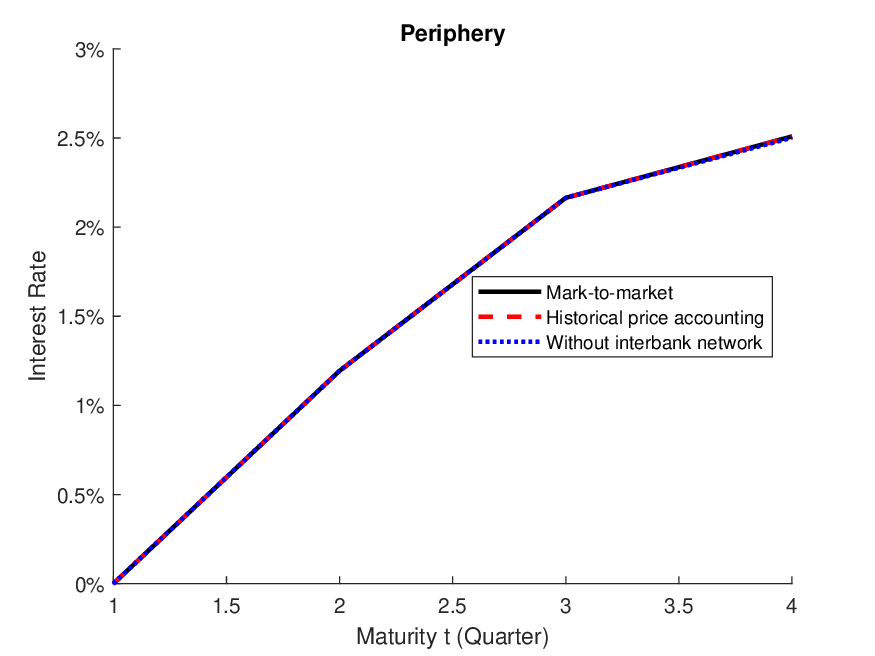}
\caption{Periphery: Low volatility setting}
\label{fig:cp-periphery-unstressed}
\end{subfigure}
~
\begin{subfigure}[t]{0.45\textwidth}
\centering
\includegraphics[width=\textwidth]{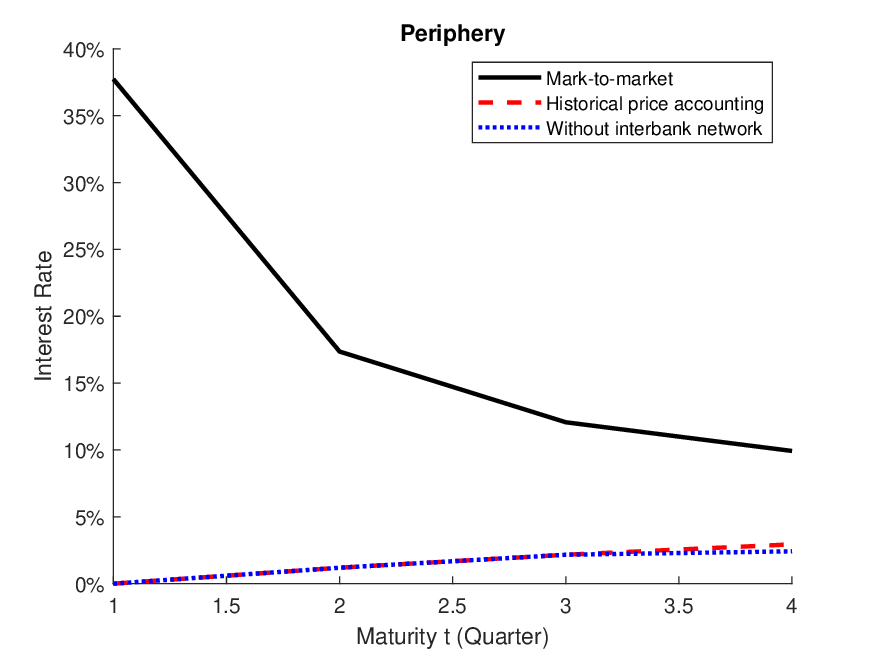}
\caption{Periphery: High volatility setting}
\label{fig:cp-periphery-stressed}
\end{subfigure}
\caption{Section~\ref{sec:Kmaturity-cs-cp}: The term structure for the core and peripheral banks under the two stress scenarios (i.e., $\sigma_C^2 = 0.5$ and $\sigma_C^2 = 0.75$ respectively) and following varying accounting systems.}
\label{fig:cp}
\end{figure}

As depicted in Figures~\ref{fig:cp-core-unstressed} and~\ref{fig:cp-periphery-unstressed}, in the low volatility setting, all banks in the system have a normal yield curve with interest rates $\vR^*(t_l)$ in the low single digits at each maturity date. Furthermore, the choice of accounting rules has only marginal effects on the term structure for both core and peripheral banks.  In comparison, in the high volatility setting displayed in Figures~\ref{fig:cp-core-stressed} and~\ref{fig:cp-periphery-stressed}, all banks have an inverted yield curve with interest rates in the double digits when considering the mark-to-market valuation. Notably, the stress scenario only includes an increase in the core volatilities $\sigma_C^2$ without any changes to the current balance sheets and with no direct change to the dynamics of the peripheral banks.  The change in the term structure for peripheral banks is being driven entirely by informational contagion from the potential future distress of the core institutions. In particular, when we neglect this channel of distress contagion, through either the use of historical price accounting or the no-interbank network systems, then the associated term structures for both the core and peripheral institutions are almost identical between the stressed and unstressed scenarios. Though a stylized system, this underscores the importance of regulating systemically important financial institutions. When incorporating contagion from worries about their future states, increased levels of volatility for a core set of banks alone may be amplified and propagate readily throughout the entire financial system, causing distress in the short-term even if there are no immediate actualized defaults.

\section{Conclusion}

The recent failure of SVB and the associated banking crisis has highlighted just how rapidly and broadly concerns about the \emph{future} standing of one or more banks can spread to other institutions, causing distress across the financial system \emph{today}, with possible defaults to follow.

Similarly, worries about counterparty defaults involving major financial institutions are believed to have played a key role in the unfolding of the global financial crisis (\cite{claessens_kodres}). Later works have proposed corresponding models of informational contagion, via some form of equilibrium mark-to-market valuation of interbank claims, with \cite{glasserman2016contagion} stressing that, in their view, \emph{``this loss-of-confidence effect [...] is one of the most important channels through which the financial network amplifies systemic risk in practice''.}

Herein, we have extended these mark-to-market approaches (\cite{fischer2014, barucca2020network, bardoscia_2019, banerjee2022pricing}) to a genuinely dynamic framework with multiple due dates of the interbank obligations and a canonical mechanism for their equilibrium valuation; the latter being based on the current risk of future defaults owing either to illiquidity or insolvency which may occur at any point in time, as determined by the evolution of banks' capital and cash accounts. Thus, our framework is able to address how concerns about potential future contagion can shape the current dynamics of the financial system.

Leveraging this, we have shown how distress contagion can translate into significant excess volatility in down market scenarios, and we have derived intrinsic risk-aware term structures for interbank obligations that can reveal how systemic risk is concentrated across different windows of time. In line with recent empirical work of \cite{Bluwstein_2023}, we have demonstrated that non-negligible levels of systemic risk are naturally reflected in inverted shapes of these term structures. If one instead relies on historical price accounting, as in traditional approaches to default contagion, then we have seen that such effects may not be picked up, and one would in general be in danger of underestimating both the timing and severity of systemic risk.

The above is evocative of an ongoing debate ignited by the global financial crisis, namely that of the role played by mark-to-market rules in amplifying the crisis (\cite{laux_leuz1}) as well as the wider implications of accounting and reporting practices for financial stability (\cite{bischof_laux_leuz}). Crucially, our framework is not intended to reflect the practices that are actually followed:~different rules apply depending on asset classes, e.g.~enabling non-marketable interbank exposures to be carried at historical cost and allowing for deviations from the enforcement of fair values in times of stress. Rather, it is a framework for what \emph{could} happen if worries about fair values begin to propagate, and, accordingly, it may inform the development of supervisory tools to asses a given network's exposure to this form of systemic risk. As discussed in \cite{laux_leuz2}, even if all assets were carried at historical cost during the global financial crisis, and hence financial institutions may have looked in better shape, outside investors and regulators would have been concerned about the true quality of their balance sheets.

Naturally, there are trade-offs involved. \cite{laux_leuz2} conclude that, \emph{``on one hand, marking assets to market prices can in principle exacerbate downward spirals and contagion during a financial crisis; but on the other hand, a faster recognition of losses provides pressures for prompt corrective action by banks and regulators and likely limits imprudent lending in the first place.''} Our framework provides a principled approach to assessing why and when such corrective action is appropriate. Of course, if regulators and banking supervisors were to define and enforce regulatory constraints (e.g., capital requirements) based on this, they would need to tread carefully in relation to the potential (self-fulfilling) shortcomings of making it easier for banks to violate the constraints. Nevertheless, markets will ultimately look for the fair value of assets and the knowledge that such a framework is enforced could proactively affect internal risk management principles. The current practice of applying prudential filters, relaxing constraints away from mark-to-market rules, could be having the opposite effect, and it is at risk of overlooking a critical channel for the rapid manifestation of financial crises.

In order to apply this methodology in practice, the model needs to be brought to financial data. Due to the scaling of the multinomial tree-based approach taken herein, the computational requirements grow exponentially in the number of time steps. To overcome this issue for larger financial systems, appropriate Monte Carlo or machine learning approaches would need to be developed so as to find approximate clearing solutions. We leave a discussion of such numerical approaches to future work.

\subsection*{Acknowledgments}\vspace{-6pt}
We are grateful for the valuable comments and suggestions from two anonymous referees as well as the Editor and the Associate Editor. Part of the manuscript was written when ZF visited Vienna University of Economics and Business. In this regard, he is thankful for the support of the OeNB anniversary fund, project number 17793. Further, AS gratefully acknowledges the hospitality of the Institute for Mathematical and Statistical Innovation, where he was a visitor while working on the early stages of this project, as part of the long program on Decision Making and Uncertainty.

\subsection*{Conflict of interest statement}\vspace{-6pt}
The authors have no conflicts of interest to declare.

\subsection*{Data availability statement}\vspace{-6pt}
Data sharing is not applicable to this article as no datasets were generated or analysed during the current study.

\bibliographystyle{plainnat}
\bibliography{bibtex2}

\newpage
\appendix

\section{Marking-to-market in related static frameworks}\label{sect:existing_works}

As mentioned in the introduction, \cite{glasserman2016contagion} discuss how mark-to-market valuation within an interbank network can provide a natural approach to informational contagion, allowing one to capture losses in confidence concerning the creditworthiness of individual banks. Starting from a given (static) balance sheet with nominal values $L_{ij}$, mark-to-market values $0\leq p_{ij} \leq L_{ij}$ (in our notation) are introduced by positing a valuation function $\phi_j : \bbr^n_+ \rightarrow  \bbr^n_+$ that maps bank $j$'s assets to the value of its obligations. It is assumed that $0\leq \phi_j(v) \leq L_{j \cdot }$ for $0\leq v \leq L_{ \cdot j}$ and that $\phi_j$ is non-decreasing. Tarski's fixed point theorem then gives existence of an $n\times n$ matrix of the desired mark-to-market values $0\leq p_{ij} \leq L_{ij}$ satisfying $p_{j \cdot }=\phi_j(p_{\cdot j})$ for $j=1,\ldots n $. Our framework may be seen as a dynamic extension of this equilibrium problem with a canonical mechanism for the current (now stochastic) mark-to-market valuations based on worries about possible future defaults and, correspondingly, worries about how strongly this could translate into distress and default contagion dynamically over time.

It is worth noting that such a framework is not confined to interbank networks. For example, \cite{paddrik2020contagion} have adapted the mark-to-market approach of \cite{glasserman2016contagion} to study variation margins in a network of CDS contracts. First, they introduce a notion of `stress', defined as the nominal amount by which outgoing payment obligations exceed incoming ones (for given spreads that will then be shocked) accounting also for initial margins and liquidity buffers (all of which may be viewed as the components of a balance sheet for each firm). Then, an ad-hoc valuation function (called the `stress response function') is applied to get a reduced value for each bank's `expected' payments to its counterparties. Finally, one then solves for the equilibrium size of these `expected' payments, as with the mark-to-market values above.

Inspired by the Black--Cox model (\cite{black1976valuing}) for a single firm, \cite{bardoscia_2019} have proposed what they refer to as a model of forward-looking solvency contagion, wherein the realized capital at the current time $t$ is influenced by the possibility of insolvency before or at a given terminal time $T$, when obligations are due. In their setup, using our notation, the capital $K$ at the current time $t$ can be seen to satisfy
\begin{align}\label{eq:bardoscia_capital}
	K_i(t) &= x_i(t) + e^{-r(T-t)}\sum_{j = 1}^n L_{ji} \bigl( \beta+(1-\beta)\widetilde{P}_j(t)\bigr) - e^{-r(T-t)}\bar p_i, \\
	\widetilde{P}_i(t) &= \P(\widetilde{\tau}_i > T \mid \vx(t)), \quad \widetilde{\tau}_j := \inf \{ s \geq t : x_i(s) + e^{r(s-t)}(K_i(t) - x_i(t)) \leq 0  \},\nonumber
\end{align}
where the external assets $\vx(t):=(x_1(t),\ldots,x_n(t))$ at time $t$ are given and each $x_i$ is modelled by a geometric Brownian motion on $[t,T]$. This clearing problem solely involves the current capital at time $t$. In particular, it is \emph{not} a dynamic model of contagion, as the $\widetilde{P}_i$'s disregard expected future interactions. Moreover, the capital $K$ in \eqref{eq:bardoscia_capital} is not well-defined at the stochastic process level, for $t\in[0,T]$, as the $\widetilde{P}_i$'s do not account for past information (e.g., past defaults); it is only meaningful in a one-period sense. That is, \eqref{eq:bardoscia_capital} is a one-period model taking place at a given time $t$, using geometric Brownian motion and the single-firm default rule of \cite{black1976valuing} with terminal time $T-t$ to construct particular valuation functions corresponding to the framework of \cite{glasserman2016contagion}. Indeed, setting $\widetilde{p}_{ij} := e^{-r(T-t)} ( \beta+(1-\beta)\widetilde{P}_i(t)) L_{ij}$, we see that $\widetilde{P}_i(t)=\psi_i(\widetilde{p}_{\cdot i})$, where each $\psi_i$ can be computed explicitly, precisely as in \cite{black1976valuing}. In this way, \eqref{eq:bardoscia_capital} reduces to the fixed point problem
\[
\widetilde{p}_{i \hspace{0.5pt} \cdot }=\phi_i(\widetilde{p}_{\cdot i}),\quad \phi_i(v):= e^{-r(T-t)} ( \beta+(1-\beta)\psi_i(v)) L_{i\hspace{0.5pt} \cdot},\quad i=1,\ldots,n,
\]
where each $\phi_i : \bbr^n_+ \rightarrow  \bbr^n_+ $ is given explicitly, satisfies $0\leq \phi_i\leq L_{i \hspace{0.5pt} \cdot }$, and is non-decreasing.

\section{Comparison with historical price accounting}\label{sec:hpa}
Throughout this paper, we followed a mark-to-market accounting principle for valuing interbank assets. This is in contrast to historical price accounting, which lies at the core of existing dynamic and stochastic frameworks for default contagion in interbank networks, see \cite{Lipton2016, feinstein2021dynamic, BBF18} and the related mean-field models of \cite{NS17,HLS18, FS23}. In these works, interbank assets are valued in full until a default event is actualized: only at the default time, the value of any defaulting assets are then re-marked to the realized payout value. In this appendix, we derive the historical price accounting analogue of our model and compare the clearing solutions. Specifically, we provide conditions so that the clearing solutions for historical price accounting always outperform the mark-to-market accounting rules for every bank.

\subsection{Single maturity setting}\label{sec:hpa-1}
In considering historical price accounting, the only modification to the clearing system that needs to be considered is in the accounting of the interbank assets. In the single maturity setting of Section~\ref{sec:1maturity}, this requires the alteration of $\Psi^{T}_{\vP}$ to only consider actualized defaults.  Mathematically define $\Psi^{H,T}_{\vP}: \{t_0,t_1,...,t_{\ell}\} \times \{t_0,t_1,...,t_{\ell},T+1\}^{|\Omega| \times n} \to [\vec{0},\vec{1}]^{\ell + 1}$ by
\[\Psi^{H,T}_{\vP,i}(t,\vt) := \ind{\tau_i > t}\]
for any bank $i = 1,...,n$, time $t$, and stopping (default) times $\vt$.

With this modification to the accounting of interbank assets, $(\vK,\vP,\vt) \in \bbd^T$ is a clearing solution if it is a fixed point for $\Psi^{H,T} = (\Psi^T_{\vK},\Psi^{H,T}_{\vP},\Psi^T_{\vt})$, i.e.,
\begin{align}
	\label{eq:1maturity-hpa} &(\vK,\vP,\vt) = \Psi^{H,T}(\vK,\vP,\vt) := (\Psi^T_{\vK}(t_l,\vP(t_l)) \; , \; \Psi^{H,T}_{\vP}(t_l,\vt) \; , \; \Psi^T_{\vt}(\vK))_{l = 0}^\ell. 
\end{align}
\begin{proposition}\label{prop:hpa-1maturity-exist}
	The set of clearing solutions to~\eqref{eq:1maturity-hpa}, i.e., all elements $(\vK^*,\vP^*,\vt^*) \in \bbd^T $ such that $(\vK^*,\vP^*,\vt^*) = \Psi^{H,T}(\vK^*,\vP^*,\vt^*)$, forms a lattice in $\bbd^T$ with greatest and least solutions  $(\vK^\uparrow,\vP^\uparrow,\vt^\uparrow) \geq (\vK^\downarrow,\vP^\downarrow,\vt^\downarrow)$.
\end{proposition}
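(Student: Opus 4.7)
The plan is to mimic the argument for Theorem \ref{thm:1maturity-exist}, invoking Tarski's fixed point theorem on the complete lattice $\bbd^T$ equipped with the componentwise partial order. The key difference compared to the mark-to-market case lies in the $\vP$-component of the operator: for historical price accounting, $\Psi^{H,T}_{\vP,i}(t,\vt) = \ind{\tau_i > t}$ depends directly on the stopping times, rather than on the conditional probability of terminal solvency. This actually makes the argument slightly simpler, as no conditional expectation appears.

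First I would verify that $\Psi^{H,T}$ maps $\bbd^T$ into itself. Since each $P^{H,T}_i(t) = \ind{\tau_i > t} \in \{0,1\} \subset [0,1]$, the coefficients $\beta + (1-\beta)P_j$ lie in $[\beta,1] \subset [0,1]$, whence $\Psi^T_{\vK,i}(t,\vP(t)) - x_i(t) + e^{-r(T-t)}\bar p_i$ lies in $e^{-r(T-t)}[0, \sum_j L_{ji}]$, matching the domain constraint on the $\vK$-component. The $\vt$-component is valued in $\{t_0,\ldots,t_\ell, T+1\}$ by construction. These checks are identical to those used in the proof of Theorem \ref{thm:1maturity-exist}.

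The substantive step is joint monotonicity of $\Psi^{H,T}$ on the product lattice. I would check this componentwise: (i) $\Psi^T_{\vK,i}(t,\tilde\vP)$ is non-decreasing in $\tilde\vP$ because the coefficients $(1-\beta)L_{ji} \geq 0$; (ii) $\Psi^{H,T}_{\vP,i}(t,\vt) = \ind{\tau_i > t}$ is non-decreasing in $\tau_i$; and (iii) $\Psi^T_{\vt,i}(\vK) = \inf\{t \in \bbt : K_i(t) < 0\}$ is non-decreasing in $\vK$, since if $K_i \geq K_i'$ pointwise then $\{t : K_i(t) < 0\} \subseteq \{t : K_i'(t) < 0\}$ and the infimum over the smaller set is at least as large. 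Combining these, if $(\vK,\vP,\vt) \leq (\vK',\vP',\vt')$ in $\bbd^T$, then $\Psi^{H,T}(\vK,\vP,\vt) \leq \Psi^{H,T}(\vK',\vP',\vt')$.

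Finally, since $\bbd^T$ is a complete lattice (as a product of the compact interval lattices for $\vK$ and $\vP$ with the finite totally ordered set $\{t_0,\ldots,t_\ell,T+1\}^{|\Omega| \times n}$ for $\vt$) and $\Psi^{H,T}$ is monotone non-decreasing, Tarski's fixed point theorem immediately yields that the set of fixed points is a non-empty complete lattice in $\bbd^T$, with greatest and least elements $(\vK^\uparrow,\vP^\uparrow,\vt^\uparrow) \geq (\vK^\downarrow,\vP^\downarrow,\vt^\downarrow)$. I do not expect any real obstacle here; the only subtlety is the monotonicity direction of the stopping-time component with respect to $\vK$, but this is the standard sublevel-set observation, and, importantly, the historical price accounting formulation avoids the conditional expectation issue that would otherwise require the slightly more delicate tower-property argument used for $\Psi^T_{\vP}$ in Theorem \ref{thm:1maturity-exist}.
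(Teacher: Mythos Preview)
Your proposal is correct and follows exactly the same approach as the paper, which simply states that the result follows from Tarski's fixed point theorem since $\Psi^{H,T}$ is monotonic on the complete lattice $\bbd^T$. You have merely made explicit the componentwise monotonicity checks and the self-map verification that the paper leaves implicit.
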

\begin{proof}
	As with Proposition~\ref{prop:GK} and Theorem~\ref{thm:1maturity-exist}, this result follows from a direct application of Tarski's fixed point theorem since $\Psi^{H,T}$ is monotonic in the complete lattice $\bbd^T$. 
\end{proof}

We conclude our discussion of the single maturity setting by directly comparing the clearing solutions found under the mark-to-market setting of~\eqref{eq:1maturity} and the historical price accounting setting of~\eqref{eq:1maturity-hpa}. 
\begin{lemma}\label{lemma:mtm-hpa-1maturity}
	For any mark-to-market clearing solution $(\vK^*,\vP^*,\vt^*) = \Psi^T(\vK^*,\vP^*,\vt^*)$, there exists a historical price accounting clearing solution $(\vK^{H,*},\vP^{H,*},\vt^{H,*}) = \Psi^{H,T}(\vK^{H,*},\vP^{H,*},\vt^{H,*})$ that bounds $(\vK^*,\vP^*,\vt^*)$ from above, i.e., $(\vK^*,\vP^*,\vt^*) \leq (\vK^{H,*},\vP^{H,*},\vt^{H,*})$.
	Conversely, for any historical price accounting clearing solution $(\vK^{H,*},\vP^{H,*},\vt^{H,*}) = \Psi^{H,T}(\vK^{H,*},\vP^{H,*},\vt^{H,*})$, there exists a mark-to-market clearing solution $(\vK^*,\vP^*,\vt^*) = \Psi^T(\vK^*,\vP^*,\vt^*)$ that bounds $(\vK^{H,*},\vP^{H,*},\vt^{H,*})$ from below, i.e., $(\vK^*,\vP^*,\vt^*) \leq (\vK^{H,*},\vP^{H,*},\vt^{H,*})$.
\end{lemma}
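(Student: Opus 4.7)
The plan is to exploit the fact that $\Psi^T$ and $\Psi^{H,T}$ differ only in their $\vP$-component, which is pointwise comparable, together with the monotonicity of both maps on the complete lattice $\bbd^T$. The key preliminary step is to establish the pointwise inequality $\Psi^{H,T}\ge \Psi^T$ on $\bbd^T$. Since the two maps share the components $\Psi^T_{\vK}$ and $\Psi^T_{\vt}$, this reduces to checking the corresponding inequality for the $\vP$-component, and for every $(t,\vt)$ one has
\[
\Psi^{H,T}_{\vP,i}(t,\vt) \;=\; \ind{\tau_i > t} \;=\; \P(\tau_i > t\mid \fcal_t)\;\geq\; \P(\tau_i > T \mid \fcal_t) \;=\; \Psi^{T}_{\vP,i}(t,\vt), \qquad t\le T,
\]
because $\{\tau_i>T\}\subseteq\{\tau_i>t\}\in \fcal_t$.

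For the first assertion, given a mark-to-market fixed point $x^\ast:=(\vK^\ast,\vP^\ast,\vt^\ast)$, the inequality above yields $\Psi^{H,T}(x^\ast)\ge \Psi^T(x^\ast)=x^\ast$, so $x^\ast$ is a pre-fixed point of $\Psi^{H,T}$. I would then apply Tarski's fixed point theorem to the monotone map $\Psi^{H,T}$ on the complete sub-lattice $\{x\in\bbd^T : x\ge x^\ast\}$ (which $\Psi^{H,T}$ maps to itself, by monotonicity and the pre-fixed-point property) to produce an HPA fixed point $x^{H,\ast}\ge x^\ast$. Constructively, iterating $\Psi^{H,T}$ forward from $x^\ast$ generates a non-decreasing sequence whose $\vt$-projection lies in the finite lattice $\{t_0,\ldots,t_\ell,T+1\}^{|\Omega|\times n}$ and hence stabilises in finitely many steps; once $\vt$ has stabilised, the remaining components are determined by $\vt$ and also stabilise, yielding the desired $x^{H,\ast}$.

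The second assertion follows by the symmetric argument: if $x^{H,\ast}=\Psi^{H,T}(x^{H,\ast})$, the same inequality gives $\Psi^T(x^{H,\ast})\le \Psi^{H,T}(x^{H,\ast})=x^{H,\ast}$, so $x^{H,\ast}$ is a post-fixed point of $\Psi^T$, and Tarski's theorem applied on the sub-lattice $\{x\in\bbd^T : x\le x^{H,\ast}\}$ provides a mark-to-market fixed point $x^\ast\le x^{H,\ast}$. The only step that requires any care is the pointwise comparison between $\Psi^{H,T}_{\vP}$ and $\Psi^T_{\vP}$; once that is in hand, the rest is a standard monotone iteration on a complete lattice, essentially identical in spirit to the proofs of Theorem~\ref{thm:1maturity-exist} and Proposition~\ref{prop:hpa-1maturity-exist}, so no step poses a real obstacle.
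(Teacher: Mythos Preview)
Your proposal is correct and follows essentially the same approach as the paper: establish the pointwise inequality $\Psi^{H,T}\ge\Psi^T$ on $\bbd^T$ (the paper simply says ``by construction'' where you spell out the inclusion $\{\tau_i>T\}\subseteq\{\tau_i>t\}\in\fcal_t$), then use monotonicity and Picard iteration/Tarski from the given fixed point to obtain a dominating (respectively, dominated) fixed point of the other map. The paper phrases the construction directly via Picard iterations on the complete lattice $\bbd^T$, whereas you invoke Tarski on the sub-lattices $\{x\ge x^\ast\}$ and $\{x\le x^{H,\ast}\}$, but these are the same argument.
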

\begin{proof}
	First, by construction, $\Psi^{H,T}(\vK,\vP,\vt) \geq \Psi^T(\vK,\vP,\vt)$ for any $(\vK,\vP,\vt) \in \bbd^T$. 
	
	Now let $(\vK^*,\vP^*,\vt^*) = \Psi^T(\vK^*,\vP^*,\vt^*) \leq \Psi^{H,T}(\vK^*,\vP^*,\vt^*)$. Recall that $\bbd^T$ is a complete lattice, therefore the limit of Picard iterations over $\Psi^{H,T}$ exists and is a fixed point. Due to monotonicity of $\Psi^{H,T}$ if immediately follows that this fixed point is bounded from below by $(\vK^*,\vP^*,\vt^*)$.
	
	Finally, let $(\vK^{H,*},\vP^{H,*},\vt^{H,*}) = \Psi^{H,T}(\vK^{H,*},\vP^{H,*},\vt^{H,*}) \geq \Psi^T(\vK^{H,*},\vP^{H,*},\vt^{H,*})$. By a similar argument as before, we can realize a fixed point of $\Psi^T$ through Picard iterations that is bounded from above by $(\vK^{H,*},\vP^{H,*},\vt^{H,*})$.
\end{proof}

\begin{remark}\label{rem:1maturity-comparison}
	The statement of Lemma~\ref{lemma:mtm-hpa-1maturity} is written generally for any clearing solutions. Notably these imply that the monotonicity holds for the greatest and least clearing solutions, e.g., $(\vK^\uparrow,\vP^\uparrow,\vt^\uparrow) \leq (\vK^{H,\uparrow},\vP^{H,\uparrow},\vt^{H,\uparrow})$ so that the greatest clearing solution under historical price accounting is uniformly better for all banks than mark-to-market accounting.
	
	In this way, one can view historical price accounting as the optimistic accounting rule while mark-to-market is the pessimistic one. From a regulatory perspective, if the system is healthy under mark-to-market accounting then it can reasonably be considered resilient to shocks as this is a ``worst case'' for fair accounting rules. 
In fact, the liquidity-only clearing system (presented in Remark~\ref{rem:illiquid}) lead to an even more optimistic accounting than the historical price accounting rule. Recalling that the liquidity-only rules relate to the static models of, e.g.,~\cite{GK10}, we find that such a system is overly optimistic compared to any dynamic accounting of system health, i.e., $(\vK^\uparrow,\vP^\uparrow,\vt^\uparrow) \leq (\vK^{H,\uparrow},\vP^{H,\uparrow},\vt^{H,\uparrow}) \leq (\vK^{L,\uparrow},\vP^{L,\uparrow},\vt^{L,\uparrow})$.
\end{remark}

\subsection{Multiple maturity setting}\label{sec:hpa-K}
Similar to the single maturity setting, in constructing a clearing solution under historical price accounting we need only alter our treatment of the future interbank assets.  That is, we consider a clearing solution via the fixed point problem
\begin{align}
	\label{eq:Kmaturity-hpa} &(\vK,\vV,\vP,\vt) = \Psi^H(\vK,\vV,\vP,\vt;\va) \\
	\nonumber    &\;\; := (\Psi_{\vK}(t_l,\vV(t_l),\vP(t_l,\cdot),\vt) \; , \; \Psi_{\vV}(t_l,\vV(t_{[l-1]\vee 0}),\vt;\va) \; , \; \Psi_{\vP}^H(t_l,t_k,\vt)_{k = [l+1]\wedge\ell}^{\ell} \; , \; \Psi_{\vt}(\vK,\vV))_{l = 0}^{\ell}
\end{align}
where, for any $i = 1,...,n$,
\[\Psi_{\vP,i}^H(t_l,t_k,\vt) = \ind{\tau_i > t_l}.\]
We wish to note that $\Psi_{\vP,i}^H(t_l,t_k,\vt)$ is independent of $t_k$; we leave the dependence here for easier comparisons to $\Psi_{\vP,i}$.
\begin{proposition}\label{prop:Kmaturity-exist-hpa}
	Fix the rebalancing strategies $\va(t_l,\hat\vK,\hat\vV) \in [\vec{0},\vec{1}]^{|\Omega_{t_l}|}$ so that they depend only on the current time $t_l$, capital $\hat\vK \in \lcal^n_{t_l}$, and cash account $\hat\vV \in \lcal_{t_l}^n$. There exists a (finite) clearing solution $(\vK^*,\vV^*,\vP^*,\vt^*) = \Psi^H(\vK^*,\vV^*,\vP^*,\vt^*)$ to \eqref{eq:Kmaturity-hpa}. Furthermore, if we have Markovian external assets, $\vx(t_l) = f(\vx(t_{l-1}),\tilde\epsilon(t_l))$ for i.i.d.\ perturbations $\tilde\epsilon$, then there exists a clearing solution such that $(\vx(t_l),\vK^*(t_l),\vV^*(t_l),\vP^*(t_l,t_k)_{k = l}^\ell,\vi^*(t_l))_{l = 0}^\ell$ is Markovian where $\vi^*(t_l) := \ind{\vt^* \geq t_l}$ is the realized solvency process.
\end{proposition}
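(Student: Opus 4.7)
The key observation is that under historical price accounting, $\Psi^H_{\vP,i}(t_l,t_k,\vt)=\ind{\tau_i>t_l}$ depends on the default times only through the current survival indicator and is independent of the maturity $t_k$. Consequently, the backward-in-time component of the clearing system disappears, and the fixed point equation \eqref{eq:Kmaturity-hpa} reduces to a forward-in-time problem that can be resolved one time step (indeed, one tree node) at a time. My plan is to substitute $P_j(t_l,t_k) = \ind{\tau_j>t_l}$ directly into $\Psi_{\vK,i}$, obtaining a simplified capital function that I denote $\vK^H(t_l;\vi(t_l))$, where $\vi(t_l) := (\ind{\tau_j>t_l})_{j=1}^n$ is the vector of survival indicators after time $t_l$. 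The clearing problem then becomes a coupled recursion in $(\vK,\vV,\vi)$, with $\vP$ and $\vt$ recovered afterwards from $\vi$.

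The construction proceeds by forward induction on $l$, implemented separately on each branch of the tree. Suppose that $(\vK(t_k),\vV(t_k),\vi(t_k))_{k\leq l-1}$ have been consistently determined; in particular, $\va(t_{l-1},\vK(t_{l-1}),\vV(t_{l-1}))$ is known, and so is $\ind{\tau_j = t_{l-1}}$ for each $j$. At time $t_l$, both $\vV(t_l;\vi(t_l))$, computed from $\Psi_{\vV}$, and $\vK^H(t_l;\vi(t_l))$, computed from the simplified formula, are componentwise non-decreasing in $\vi(t_l)$ on the feasible set $\{\vi\leq \vi(t_{l-1})\}$: for $j$ with $i_j(t_{l-1})=1$, the partial derivative in $s_l^j:=\ind{\tau_j>t_l}$ equals $L_{ji}^l\geq 0$ in $V_i$ and $(1-\beta)(L_{ji}^l + \sum_{k>l} e^{-r(t_k-t_l)} L_{ji}^k)\geq 0$ in $K_i^H$. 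Consequently, the consistency condition
\[
\vi(t_l) \;=\; \diag{\vi(t_{l-1})}\,\ind{\min\bigl(\vK^H(t_l;\vi(t_l)),\vV(t_l;\vi(t_l))\bigr)\geq \vec{0}}
\]
is an order-preserving self-map on the finite lattice $\{\vi\in\{0,1\}^n:\vi\leq \vi(t_{l-1})\}$, and Tarski's theorem yields a maximal fixed point $\vi^*(t_l)$. Setting $\tau_j := t_l$ for those $j$ with $i_j^*(t_l)=0$ but $i_j^*(t_{l-1})=1$ closes the inductive step, and iterating over $l=1,\ldots,\ell$ yields a global clearing solution $(\vK^*,\vV^*,\vP^*,\vt^*)$.

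For the Markovian statement, note that at each node $\omega_{t_l}$ the within-time construction depends only on $\vx(t_l,\omega_{t_l})$ and on the Markov state at $t_{l-1}$, namely $(\vx(t_{l-1}),\vK(t_{l-1}),\vV(t_{l-1}),\vP(t_{l-1},\cdot),\vi(t_{l-1}))$; crucially, the recovery term $\beta\sum_j L_{ji}^k\ind{\tau_j=t_{l-1}}$ in the $\vV$-recursion can be read off from $\vi(t_{l-1})-\vP(t_{l-1},\cdot)$, since under HPA we have $\vP(t_{l-1},t_k)=\ind{\vt>t_{l-1}}$ while $\vi(t_{l-1})=\ind{\vt\geq t_{l-1}}$. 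Because the selection of the maximal Tarski fixed point is a deterministic function of this data together with the independent perturbation $\tilde\epsilon(t_l)$, the joint process $(\vx,\vK^*,\vV^*,\vP^*,\vi^*)$ is Markovian by construction. The main obstacle I anticipate is the bookkeeping of default-timing information across time steps, specifically aligning the conventions $\vi(t_l)=\ind{\vt\geq t_l}$ and $\vP(t_l,\cdot)=\ind{\vt>t_l}$ so that the one-step recovery term is recoverable from the Markov state at the previous time. Once this is in hand, the monotonicity estimates above together with Tarski's theorem do the rest; no appeal to the more elaborate dynamic programming construction of Theorem~\ref{thm:Kmaturity-exist} is needed here, precisely because the HPA modification removes the forward-backward coupling.
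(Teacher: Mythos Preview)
Your proposal is correct and follows essentially the same route as the paper's own proof, which is deliberately terse: the paper simply observes that historical price accounting removes the forward dependencies so the clearing solution can be constructed forward in time, with a Tarski fixed point at each step. You have carried out exactly this program, supplying the monotonicity computations (your derivatives $L_{ji}^l$ for $V_i$ and $(1-\beta)\bigl(L_{ji}^l+\sum_{k>l}e^{-r(t_k-t_l)}L_{ji}^k\bigr)$ for $K_i^H$ are correct) and the node-by-node Markov argument that the paper leaves implicit. One small point to tidy: your two uses of $\vi(t_l)$ are inconsistent---in the forward construction you set $\vi(t_l)=\ind{\vt>t_l}$, while in the Markov paragraph you switch to the paper's convention $\vi(t_l)=\ind{\vt\geq t_l}$; since you already flag this bookkeeping as the main obstacle, just commit to the latter throughout (then $\ind{\vt=t_{l-1}}=\vi(t_{l-1})-\vP(t_{l-1},\cdot)$ holds as you need, and the consistency map reads $\vi(t_{l+1})=\diag{\vi(t_l)}\ind{\min(\vK^H,\vV)\geq\vec{0}}$).
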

\begin{proof}
	As with the proof of Theorem~\ref{thm:Kmaturity-exist}, this result can be proven constructively. In particular, due to the use of historical price accounting, this clearing solution can be constructed forward in time as there are no forward dependencies necessary. Each time can be formulated via a fixed point problem with existence via Tarski's fixed point theorem, so the desired results follow.
\end{proof}

We conclude this discussion by considering the special case where a direct comparison can be drawn between the mark-to-market setting \eqref{eq:Kmaturity} and the historical price accounting setting \eqref{eq:Kmaturity-hpa}. Notably, the strong monotonicity properties that allow mark-to-market accounting to serve as a reliable stress testing device in the single maturity setting (as discussed in Remark~\ref{rem:1maturity-comparison}), is only guaranteed to hold under the zero recovery setting $\beta = 0$.
\begin{lemma}\label{lemma:mtm-hpa-Kmaturity}
	Consider a financial system with zero recovery rate ($\beta = 0$) and such that all rebalancing strategies are independent of bank performance (e.g., $\va \equiv \vec{0}$).
	For any mark-to-market clearing solution $(\vK^*,\vV^*,\vP^*,\vt^*) = \Psi(\vK^*,\vV^*,\vP^*,\vt^*)$, there exists a historical price accounting clearing solution $(\vK^{H,*},\vV^{H,*},\vP^{H,*},\vt^{H,*}) = \Psi^H(\vK^{H,*},\vV^{H,*},\vP^{H,*},\vt^{H,*})$ that bounds $(\vK^*,\vV^*,\vP^*,\vt^*)$ from above, i.e., $(\vK^*,\vV^*,\vP^*,\vt^*) \leq (\vK^{H,*},\vV^{H,*},\vP^{H,*},\vt^{H,*})$.
	
	Conversely, for any historical price accounting clearing solution $(\vK^{H,*},\vV^{H,*},\vP^{H,*},\vt^{H,*}) = \Psi^H(\vK^{H,*},\vV^{H,*},\vP^{H,*},\vt^{H,*})$, there exists a mark-to-market clearing solution $(\vK^*,\vV^*,\vP^*,\vt^*) = \Psi(\vK^*,\vV^*,\vP^*,\vt^*)$ that bounds $(\vK^{H,*},\vV^{H,*},\vP^{H,*},\vt^{H,*})$ from below, i.e., $(\vK^*,\vV^*,\vP^*,\vt^*) \leq (\vK^{H,*},\vV^{H,*},\vP^{H,*},\vt^{H,*})$.
\end{lemma}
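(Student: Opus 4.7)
The plan is to exploit the fact that, under the two stipulated assumptions---zero recovery ($\beta=0$) and rebalancing strategies $\va$ independent of bank performance---both operators $\Psi$ and $\Psi^H$ become monotone on $\bbd$ in the product order, with the pointwise domination $\Psi\leq \Psi^H$. The conclusion then follows by a Picard iteration argument, mirroring Lemma~\ref{lemma:mtm-hpa-1maturity} from the single-maturity case.

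First I would verify coordinate-wise monotonicity of both operators. The default-time map $\Psi_{\vt}(\vK,\vV)$ is increasing in $(\vK,\vV)$ since larger capital or cash pushes out the crossing of zero; the solvency maps $\Psi_{\vP}$ and $\Psi_{\vP}^H$ are both increasing in $\vt$; the cash-account map $\Psi_{\vV}$ is increasing in its lagged input $\tilde{\vV}$ (because $1+R_i\geq 0$ whenever $\alpha_i\in[0,1]$) and in $\vt$ through the indicators $\ind{\tau_j>t_l}$; and the linear capital map $\Psi_{\vK}$ is increasing in $(\tilde{\vV},\tilde{\vP})$ and in $\vt$ through $\ind{\tau_j\geq t_l}$. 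Both assumptions are used here: zero recovery eliminates the spike terms $\beta\sum_k L_{ji}^k\ind{\tau_j=t_{l-1}}$, which would otherwise destroy monotonicity in $\vt$, while performance-independence of $\va$ keeps the coefficient $1+R_i$ exogenous to the iterate so that $\Psi_{\vV}$ is genuinely monotone in $\tilde{\vV}$.

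Next I would establish $\Psi\leq\Psi^H$ pointwise on $\bbd$. The operators agree in the $\vK,\vV,\vt$ coordinates and differ only in $\vP$, where, for $t_k\geq t_l$,
\[
\Psi_{\vP,i}(t_l,t_k,\vt)=\P(\tau_i>t_k\mid \fcal_{t_l})\;\leq\; \ind{\tau_i>t_l}=\Psi_{\vP,i}^H(t_l,t_k,\vt),
\]
since on $\{\tau_i\leq t_l\}$ both sides vanish while on $\{\tau_i>t_l\}$ the left side is bounded by $1$. Because $\vP$ enters the remaining updates only through $\Psi_{\vK}$, and monotonically so, this componentwise inequality propagates to $\Psi\leq \Psi^H$ on all of $\bbd$.

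Given a MtM fixed point $x^\ast=\Psi(x^\ast)$, domination gives $x^\ast\leq \Psi^H(x^\ast)$. Setting $x^{(0)}:=x^\ast$ and $x^{(n+1)}:=\Psi^H(x^{(n)})$, monotonicity of $\Psi^H$ yields an increasing sequence. Its $\vt$-coordinate lives in the finite lattice $\{t_0,\ldots,t_\ell,T+1\}^{|\Omega|\times n}$, is non-decreasing, and hence stabilizes in finitely many steps; once $\vt$ stabilizes, $\vP^H$ is determined (a function of $\vt$ alone), $\vV$ propagates forward time-by-time to a stable value, and finally $\vK$ stabilizes, producing an HPA fixed point $x^{H,\ast}\geq x^\ast$. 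The reverse direction is symmetric: from an HPA fixed point $x^H$, domination gives $\Psi(x^H)\leq x^H$, and downward Picard iteration of $\Psi$ (also monotone) yields a decreasing sequence stabilizing at a MtM fixed point $\leq x^H$. The main obstacle is the bookkeeping in step one---tracking precisely how each indicator interacts with the ordering on $\vt$ and pinpointing where each of the two hypotheses is structurally needed (absent either, $\Psi_{\vK}$ or $\Psi_{\vV}$ fails to be monotone and the whole Picard scheme collapses).
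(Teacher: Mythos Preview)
Your proposal is correct and follows essentially the same route as the paper: the paper's proof simply asserts that under $\beta=0$ and performance-independent $\va$ both $\Psi$ and $\Psi^H$ are monotone on $\bbd$, that $\Psi\leq\Psi^H$ pointwise, and then invokes the comparison/Picard argument from Lemma~\ref{lemma:mtm-hpa-1maturity}. You supply the details the paper omits---in particular, you pinpoint exactly where each hypothesis is used (the $\beta\ind{\tau_j=t_{l-1}}$ spikes for zero recovery, and the exogeneity of $1+R_i$ for the rebalancing assumption) and you give a finite-stabilisation argument for the iteration via the $\vt$-coordinate rather than appealing to completeness of $\bbd$, which is arguably cleaner here since the $\vK,\vV$ coordinates of $\bbd$ are not a priori bounded.
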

\begin{proof}
	Under the assumed properties both $\Psi$ and $\Psi^H$ are monotonic mappings in the complete lattice $\bbd$. Furthermore, by construction, $\Psi^H(\vK,\vV,\vP,\vt) \geq \Psi(\vK,\vV,\vP,\vt)$ for any $(\vK,\vV,\vP,\vt) \in \bbd$. Therefore, by a comparable comparison principle argument as in the proof of Lemma~\ref{lemma:mtm-hpa-1maturity}, the desired results follow. 
\end{proof}

\begin{remark}
	Though a local comparison can be made between the mark-to-market and historical price accounting systems, the total comparison as provided within Lemma~\ref{lemma:mtm-hpa-Kmaturity} fails in the general case due to: (i) the positive impact defaults have on short-term liquidity when $\beta > 0$ and (ii) the variable impact that rebalancing has on both the capital and cash accounts.
\end{remark}

\section{Proofs}\label{sec:proofs}
\subsection{Proofs for the single maturity setting of Section~\ref{sec:1maturity}}\label{sec:proofs-1maturity}
\subsubsection{Proof of Theorem~\ref{thm:1maturity-exist}}
\begin{proof}
The result follows from a direct application of Tarski's fixed point theorem since $\Psi^T$ is monotonic in the complete lattice $\bbd^T$. 
\end{proof}

\subsubsection{Proof of Proposition \ref{prop:volatility}}
\begin{proof}
	In the notation of Section \ref{sec:setting-tree}, at any time $t \in \{0,\dt,\ldots, T-\dt\}$ and for each bank $k=1,\ldots,n$, we can write
	\[
	\tilde{x}_k(t+\dt , \omega^n_{t+\dt , (n+1)(i-1)+j}) = \hat{x}_k^{(j)}(t,\omega^n_{t,i}),
	\]
	for a given function $\hat{x}_k^{(j)}$, for every $i=1,\ldots, (n+1)^{t/\dt}$ and $j=1,\ldots,n+1$. Here $i$ corresponds to the node reached at time $t$ and $j$ corresponds to the branch taken to time $t+\dt$. Similarly, we can write
	\[
	P^{(j)}_k(t+\dt,\omega^n_{t+\dt , (n+1)(i-1)+j} )=\hat{P}^{(j)}_k(t ,\omega^n_{t,i}).
	\]
	By construction, for any given event $\omega^n_{t,i} \in \fcal_t^n$, the vectors $(\hat{x}^{(1)}_k(t,\omega^{n}_{t,i}),\ldots,\hat{x}^{(n+1)}_k(t,\omega^{n}_{t,i}))$, for $k=1\ldots,n$, and the vector $(1,\ldots,1)$ are linearly independent in $\bbr^{n+1}$. Write $\hat{\vx}^{(j)}=(\hat{x}^{(j)}_1,\ldots, \hat{x}^{(j)}_n)$ and $\hat{\vP}^{(j)}=(\hat{P}^{(j)}_1,\ldots,\hat{P}^{(j)}_n)$ for $j=1,\ldots,n+1$.  For $t\in\{0,\dt, \ldots , T-\dt\}$, the aforementioned linear independence allows us to define
	\begin{equation}\label{eq:theta}
		\boldsymbol{\theta}(t , \omega):=\left(\begin{array}{c}
			\hat{\vx}^{(2)}(t,\omega_{t,i}^n)-\hat{\vx}^{(1)}(\omega^n_{t,i})\\
			\vdots\\
			\hat{\vx}^{(n+1)}(t,\omega_{t,i}^n)-\hat{\vx}^{(1)}(t,\omega^n_{t,i})
		\end{array}\right)^{\!-1}\left(\begin{array}{c}
			\hat{\vP}^{(2)}(t,\omega_{t,i}^n)-\hat{\vP}^{(1)}(t,\omega_{t,i}^n)\\
			\vdots\\
			\hat{\vP}^{(n+1)}(t,\omega_{t,i}^n)-\hat{\vP}^{(1)}(t,\omega_{t,i}^n)
		\end{array}\right),
	\end{equation}
	whenever $\omega \in \succ(\omega^n_{t,i})$, where we note that the right-hand side of \eqref{eq:theta} is $\fcal_t^n$-measurable. It now follows from \eqref{eq:theta} that, for any given $\omega^n_{t,i}\in \fcal^n_t$ and $\omega \in \succ(\omega^n_{t,i})$, we have
	\[
	\hat{\vP}^{(j)}(t,\omega^n_{t,i})- (\hat{\vx}^{(j)}(t,\omega^n_{t,i})-\vx(t,\omega^n_{t,i}))	\boldsymbol{\theta}(t,\omega) =\hat{\vP}^{(1)}(t,\omega^n_{t,i})- (\hat{\vx}^{(1)}(t,\omega^n_{t,i})-\vx(t,\omega^n_{t,i}))	\boldsymbol{\theta}(t,\omega),
	\]
	for each $j=1,\ldots,n+1$. Consequently, $\vP(t+\dt) - \Delta \tilde{\vx}(t) 	\boldsymbol{\theta}(t)$ is known conditional on $\fcal^n_{t}$. Since $\tilde{\vx}$ is a martingale and $\theta(t)$ is $\fcal^n_t$-measurable, this implies that $\vP(t+\dt) -  \Delta \tilde{\vx}(t) 	\boldsymbol{\theta}(t)$ equals $\E[\vP(t+\dt) \,|\,\fcal^n_{t}]$, and hence the martingale property of $\vP$ gives
	\[
	\Delta P_k(t)=  P_k(t + \dt) - \E[P_k(t+\dt) \,|\, \fcal^n_{t}] = 	\boldsymbol{\theta}_k(t)^\T \Delta \tilde{\vx}(t),
	\]
	for each $k=1,\ldots,n$, where $\boldsymbol{\theta}_k$ denotes the $k^\text{th}$ column of $\boldsymbol{\theta}$. The claim now follows from $\vK(t) = \Psi^T_{\vK}(t,\vP(t))$ in \eqref{eq:1maturity}, by inserting the above expression for $\Delta P_k(t)$.
\end{proof}

\subsubsection{Proof of Proposition~\ref{prop:1maturity-recursion}}
\begin{proof}
Let $(\vK,\vP,\vt) \in \bbd^T$ be a fixed point of~\eqref{eq:1maturity}.  This is a fixed point of~\eqref{eq:1maturity-recursion} if and only if $P_i(t_l,\omega_{t_l}) = \bar\Psi_{\vP,i}^T(t_l,\vP(t_{[l+1]\wedge\ell}),\vt,\omega_{t_l})$.
\begin{itemize}
\item At $l = \ell$: $P_i(T,\omega_T) = \P(\tau_i > T | \fcal_T)(\omega_T) = \ind{\tau_i(\omega_T) > T}$ for every $\omega_T \in \Omega_T$ by construction of $\fcal_T = \fcal$.
\item At $l < \ell$: 
    Fix $\omega_{t_l} \in \Omega_{t_l}$,
    \begin{align*}
    P_i(t_l,\omega_{t_l}) &= \P(\tau_i > T | \fcal_{t_l})(\omega_{t_l}) = \frac{\P(\tau_i > T , \omega_{t_l})}{\P(\omega_{t_l})} = \sum_{\omega_{t_{l+1}} \in \succ(\omega_{t_l})} \frac{\P(\tau_i > T , \omega_{t_{l+1}})}{\P(\omega_{t_l})} \\
    &= \sum_{\omega_{t_{l+1}} \in \succ(\omega_{t_l})} \frac{\P(\omega_{t_{l+1}}) \P(\tau_i > T | \fcal_{t_{l+1}})(\omega_{t_{l+1}})}{\P(\omega_{t_l})} = \sum_{\omega_{t_{l+1}} \in \succ(\omega_{t_l})} \frac{\P(\omega_{t_{l+1}}) P(t_{l+1},\omega_{t_{l+1}})}{\P(\omega_{t_l})}.
    \end{align*}
\end{itemize}

Let $(\vK,\vP,\vt) \in \bbd^T$ be a fixed point of~\eqref{eq:1maturity-recursion}.  This is a fixed point of~\eqref{eq:1maturity} if and only if $P_i(t_l) = \Psi_{\vP,i}^T(t_l,\vt)$ almost surely very every time $l = 0,1,...,\ell$.
\begin{itemize}
\item At $l = \ell$: $P_i(T,\omega_T) = \ind{\tau_i(\omega_T) > T} = \P(\tau_i > T | \fcal_T)(\omega_T)$ for every $\omega_T \in \Omega_T$ by construction of $\fcal_T = \fcal$.
\item At $l < \ell$: Assume $P_i(t_{l+1}) = \Psi_{\vP,i}^T(t_{l+1},\vt)$ at time $t_{l+1}$ almost surely. Fix $\omega_{t_l} \in \Omega_{t_l}$,
    \begin{align*}
    P_i(t_l,\omega_{t_l}) &= \sum_{\omega_{t_{l+1}} \in \succ(\omega_{t_l})} \frac{\P(\omega_{t_{l+1}}) P(t_{l+1},\omega_{t_{l+1}})}{\P(\omega_{t_l})} = \sum_{\omega_{t_{l+1}} \in \succ(\omega_{t_l})} \frac{\P(\omega_{t_{l+1}}) \P(\tau_i > T | \fcal_{t_{l+1}})(\omega_{t_{l+1}})}{\P(\omega_{t_l})} \\
    &= \sum_{\omega_{t_{l+1}} \in \succ(\omega_{t_l})} \frac{\P(\tau_i > T , \omega_{t_{l+1}})}{\P(\omega_{t_l})} = \frac{\P(\tau_i > T , \omega_{t_l})}{\P(\omega_{t_l})} = \P(\tau_i > T | \fcal_{t_l})(\omega_{t_l}).
    \end{align*}
\end{itemize}
\end{proof}

\subsubsection{Proof of Proposition~\ref{prop:1maturity-dpp-define}}
\begin{proof}
As with Theorem~\ref{thm:1maturity-exist}, this result follows readily from Tarski's fixed point theorem.
\end{proof}

\subsubsection{Proof of Proposition~\ref{prop:1maturity-dpp}}
\begin{proof}
Define $(\vK,\vP)$ to be the realized solution from $\hat\Psi^T(0,\vec{1})$.  Let $\vt := \Psi_{\vt}^T(\vK)$ be the associated default times and $\vi(t_l,\omega_{t_l}) := \prod_{k = 0}^{l-1} \ind{\vK(t_k,\omega_{t_l}) \geq \vec{0}}$ be the realized (auxiliary) solvency process at time $t$ and in state $\omega_{t_l} \in \Omega_{t_l}$ (and $\vi(0,\omega_0) = \vec{1}$).  
(We wish to note that $\vi(t_l) \in \lcal_{t_{[l-1]^+}}^n$ and as such could be indexed by the preceding states $\omega_{t_{[l-1]^+}} \in \Omega_{t_{[l-1]^+}}$ instead; we leave the use of $\omega_{t_l}$ as we find it is clearer notationally.)
First, we will show that $(\vK,\vP,\vt)$ is a clearing solution of~\eqref{eq:1maturity} via the representation~\eqref{eq:1maturity-recursion}, i.e., $(\vK,\vP,\vt) = \bar\Psi^T(\vK,\vP,\vt)$. 
Second, we will show that this solution must be the maximal clearing solution as proven to exist in Theorem~\ref{thm:1maturity-exist}.

\begin{enumerate}
\item By construction of $\hat\Psi^T$, $(\vK,\vP,\vt) = \bar\Psi^T(\vK,\vP,\vt)$ if and only if $\vP(t_l) = \bar\Psi^T_{\vP}(t_l,\vP(t_{[l+1]\wedge\ell}),\vt)$ for every time $l \in \{0,1,...,\ell\}$.
At maturity, $\vP(T) = \bar\Psi^T_{\vP}(T,\vP(T),\vt)$ trivially by construction of $\vt$.
Consider now $l < \ell$ and assume $\vP(t_{l+1}) = \hat\Psi^T_{\vP}(t_{l+1},\iota(t_{l+1}))= \bar\Psi^T_{\vP}(t_l,\vP(t_{l+1}),\vt)$.   By construction, $\vP(t_l,\omega_{t_l}) = \bar\Psi^T_{\vP}(t_{l+1},\vP(t_{l+1}),\vt,\omega_{t_l})$ and the result is proven.  
\item Now assume there exists some clearing solution $(\vK^\dagger,\vP^\dagger,\vt^\dagger) \gneq (\vK,\vP,\vt)$. Then we can rewrite the form of $(\vK^\dagger,\vP^\dagger)$ as:
\begin{align*}  
(\vK^\dagger,\vP^\dagger) &= (\Psi^T_{\vK}(t_l,\vP^\dagger(t_l)) , \bar\Psi^T_{\vP}(t_l,\vP^\dagger(t_{[l+1]\wedge\ell}),\Psi^T_{\tau}(\vK^\dagger)))_{l = 0}^\ell
\end{align*}
through the use of the clearing formulation $\bar\Psi^T$ and explicitly applying $\vt^\dagger = \Psi^T_{\vt}(\vK^\dagger)$.  Following the logic of the prior section of this proof, it must follow that
\begin{equation*}
\vP^\dagger(t_l,\omega_{t_l}) = \begin{cases} \diag{\ind{\inf_{k < l} \vK(t_k,\omega_{t_l}) \geq \vec{0}}} \sum_{\omega_{t_{l+1}} \in \succ(\omega_{t_l})} \frac{\P(\omega_{t_{l+1}}) \vP(t_{l+1},\omega_{t_{l+1}})}{\P(\omega_{t_l})} &\text{if } l < \ell \\ \ind{\inf_{k \in [0,\ell]} \vK(t_k,\omega_T) \geq \vec{0}} &\text{if } l = \ell \end{cases}
\end{equation*}
for every time $t_l$ and state $\omega_{t_l} \in \Omega_{t_l}$.
That is, $(\vK^\dagger(t_l),\vP^\dagger(t_l)) = \hat\Psi^T(t_l,\ind{\inf_{k < l} \vK(t_k,\omega_{t_l}) \geq \vec{0}})_{\omega_{t_l} \in \Omega_{t_l}}$ satisfies all of the fixed point problems within the construction of $\hat\Psi^T$ at all times $t_l$.  
We will complete this proof via backwards induction with, to simplify notation, $\vi(t_l) = \ind{\inf_{k < l} \vK^\dagger(t_k) \geq \vec{0}}$.
Consider maturity $T$, it must follow that $(\vK^\dagger(T),\vP^\dagger(T)) \leq \hat\Psi^T(T,\vi(T))$ by the definition of the fixed point operator $\FIX$.
Consider some time $t_l < T$ and assume $(\vK^\dagger(t_{l+1}),\vP^\dagger(t_{l+1})) \leq \hat\Psi^T(t_{l+1},\vi(t_{l+1}))$.  By the backward recursion used within $\hat\Psi^T_{\vP}(t_l,\vi(t_l))$, it follows that
\begin{align*}
\vP^\dagger(t_l,\vi(t_l)) &= \diag{\vi(t_l)}\left[\sum_{\omega_{t_{l+1}} \in \succ(\omega_{t_l})} \frac{\P(\omega_{t_{l+1}})\vP(t_{l+1},\omega_{t_{l+1}})}{\P(\omega_{t_l})}\right]_{\omega_{t_l} \in \Omega_{t_l}} \\
&\leq \diag{\vi(t_l)}\left[\sum_{\omega_{t_{l+1}} \in \succ(\omega_{t_l})} \frac{\P(\omega_{t_{l+1}})\hat\Psi^T_{\vP}(t_{l+1},\vi(t_{l+1},\omega_{t_{l+1}}),\omega_{t_{l+1}})}{\P(\omega_{t_l})}\right]_{\omega_{t_l} \in \Omega_{t_l}}.
\end{align*}
Further, by the monotonicity of $\Psi^T_{\vK}$, it would follow that $(\vK^\dagger(t_l),\vP^\dagger(t_l)) \leq \hat\Psi^T(t_l,\vi(t_l))$.
This, together with the trivial monotonicity of $\hat\Psi^T$ w.r.t.\ the solvency indicator $\vi$, forms a contradiction to the original assumption. 
\end{enumerate}
\end{proof}

\subsubsection{Proof of Corollary~\ref{lemma:markov}}
\begin{proof}
We have $\vi(t_l) = \diag{\vi(t_{l-1})}\ind{\vK(t_{l-1}) \geq \vec{0}}$.
Thus, Markovianity follows directly from~\eqref{eq:1maturity-dpp} as $(\vK(t_l),\vP(t_l)) = \hat\Psi^T(t_l,\vi(t_l);\vx(t_l)) = \hat\Psi^T(t_l,\diag{\vi(t_{l-1})}\ind{\vK(t_{l-1}) \geq \vec{0}};f(\vx(t_{l-1}),\tilde\epsilon(t_l)))$.
\end{proof}

\subsection{Proofs for the multiple maturity setting of Section~\ref{sec:Kmaturity}}\label{sec:proofs-Kmaturity}
\subsubsection{Proof of Theorem~\ref{thm:Kmaturity-exist}}\label{sec:proof_multi_exist}
\begin{proof}
We will prove the existence of a clearing solution $(\vK^*,\vV^*,\vP^*,\vt^*)$ to~\eqref{eq:Kmaturity} constructively.  Specifically, as in the dynamic programming principle formulation~\eqref{eq:1maturity-dpp} for the single maturity setting, consider the mappings $\hat\Psi$ of the time, solvent institutions, and prior cash account into the current clearing solution.  That is, we define $\hat\Psi$ by
\begin{align*}
&\hat\Psi(t_l,\hat\vK,\hat\vV,\vi) := \left(\begin{array}{c}\hat\Psi_{\vK}(t_l,\hat\vK,\hat\vV,\vi) \\ \hat\Psi_{\vV}(t_l,\hat\vK,\hat\vV,\vi) \\ \hat\Psi_{\vP}(t_l,\hat\vK,\hat\vV,\vi)\end{array}\right) \\ 
    \nonumber &= 
    \FIX_{(\tilde\vK,\tilde\vV,\tilde\vP) \in \bbd(t_l,\hat\vK,\hat\vV,\vi)} \left(\begin{array}{l} 
        \begin{array}{l} \tilde\vV + \beta(\vL^l)^\T\diag{\vi}\ind{\tilde\vK\wedge\tilde\vV < \vec{0}}\\ \qquad + \sum\limits_{k = l+1}^\ell e^{-r(t_k-t_l)} \left[(\vL^k)^\T\diag{\vi}(\beta + (1-\beta)\tilde\vP(t_k)) - \vL^k \vec{1}\right] \end{array} \\ 
        \begin{cases} (I + \diag{\vR(t_l,\va(t_{l-1},\hat\vK,\hat\vV))})\hat\vV + (\vL^l)^\T\diag{\vi}\ind{\tilde\vK\wedge\tilde\vV \geq \vec{0}} - \vL^l\vec{1} &\text{if } l > 0 \\ \vx(0) &\text{if } l = 0 \end{cases} \\ 
        \left(\begin{cases}
            \left[\sum\limits_{\omega_{t_{l+1}} \in \succ(\omega_{t_l})} \frac{\P(\omega_{t_{l+1}})\hat\Psi_{\vP,t_k}(t_{l+1},\vF_l(\tilde\vK,\tilde\vV,\vi)(\omega_{t_l}))(\omega_{t_{l+1}})}{\P(\omega_{t_l})}\right]_{\omega_{t_l} \in \Omega_{t_l}}  &\text{if } l < k \\ 
            \diag{\vi}\ind{\tilde\vK\wedge\tilde\vV \geq \vec{0}} &\text{if } l = k 
        \end{cases}\right)_{k = l}^{\ell}
    \end{array}\right) 
\end{align*}
with \[\vF_l(\tilde\vK,\tilde\vV,\vi) := \left(\tilde\vK \, , \, \tilde\vV + \beta\sum_{k = l}^{\ell} (\vL^k)^\T\diag{\vi}\ind{\tilde\vK\wedge\tilde\vV < \vec{0}} \, , \, \diag{\vi}\ind{\tilde\vK\wedge\tilde\vV \geq \vec{0}}\right)\]
for any $l = 0,1,...,\ell$ and the fixed points taken on the lattice 
\begin{align*}
\bbd(t_l,\hat\vK,\hat\vV,\vi) &:= \bbd_{\vK}(t_l,\hat\vK,\hat\vV,\vi) \times \bbd_{\vV}(t_l,\hat\vK,\hat\vV,\vi) \times [\vec{0},\vec{1}]^{|\Omega_{t_l}| \times (\ell-l+1)} \\
\bbd_{\vK}(t_l,\hat\vK,\hat\vV,\vi) &:= \left\{\vK \in \lcal_{t_l}^n \; \left| \; \vK \in \bbd_{\vV}(t_l,\hat\vK,\hat\vV,\vi) + [\vec{0},\beta(\vL^l)^\T\vi] + \sum_{k = l+1}^\ell e^{-r(t_k-t_l)} ([\beta(\vL^k)^\T\vi,(\vL^k)^\T\vi] - \vL^k\vec{1}) \right.\right\} \\
\bbd_{\vV}(t_l,\hat\vK,\hat\vV,\vi) &:= \begin{cases} 
    \left\{\vV \in \lcal_{t_l}^n \; \left| \; \vV \in (I+\diag{\vR(t_l,\va(t_{l-1},\hat\vK,\hat\vV))})\hat\vV + [\vec{0},(\vL^l)^\T\vi] - \vL^l\vec{1}\right.\right\} &\text{if } l > 0 \\ 
    \{\vx(0)\} &\text{if } l = 0. \end{cases}
\end{align*}
As the problem used to define $\hat\Psi$ is a monotonic mapping on a lattice (proven by inspection for $\hat\Psi_{\vK},\hat\Psi_{\vV}$ and by a simple induction argument for $\hat\Psi_{\vP}$), we can apply Tarski's fixed point theorem to guarantee the existence of a greatest fixed point (i.e., the mapping $\FIX$ is well-defined in this case for every feasible combination of inputs).
Let $(\vK^*,\vV^*,\vP^*)$ be the realized solution from $\hat\Psi(0,\vec{0},\vx(0),\vec{1})$ and define $\vt^*: = \Psi_{\vt}(\vK^*,\vV^*)$.  (We wish to note that the initial value for $\hat\vK$ taken here as $\vec{0}$ is an arbitrary choice as this term is never utilized.)
As in the proof of Proposition~\ref{prop:1maturity-dpp}, we will seek to demonstrate that $(\vK^*,\vV^*,\vP^*,\vt^*)$ is a fixed point to~\eqref{eq:Kmaturity} by proving that $\vP^*(t_l,t_k) = \Psi_{\vP}(t_l,t_k,\vt^*)$ for all times $t_l \leq t_k$.
First, at maturity times, $\vP^*(t_l,t_l) = \prod_{k = 0}^l \ind{\min\{\vK^*(t_l),\vV^*(t_l)\} \geq 0} = \ind{\vt^* > t_l} = \P(\vt^* > t_l | \fcal_{t_l})$ by construction.
Second, consider $l < k$ and assume $\vP^*(t_{l+1},t_k) = \Psi_{\vP}(t_{l+1},t_k,\vt^*)$.  By definition of $\hat\Psi_{\vP,t_k}$, we can construct $\vP^*(t_l,t_k)$ as follows:
\begin{align*}
P_i^*(t_l,t_k,\omega_{t_l}) &= \sum_{\omega_{t_{l+1}} \in \succ(\omega_{t_l})} \frac{\P(\omega_{t_{l+1}}) \hat\Psi_{\vP,t_k,i}(t_{l+1},\vK^*(t_l),\vV^*(t_l)+\beta\sum_{k=l}^\ell(\vL^k)^\T\ind{\vt^* = t_l},\ind{\vt^* > t_l})(\omega_{t_{l+1}})}{\P(\omega_{t_l})} \\
&= \sum_{\omega_{t_{l+1}} \in \succ(\omega_{t_l})} \frac{\P(\omega_{t_{l+1}}) P_i^*(t_{l+1},t_k,\omega_{t_{l+1}})}{\P(\omega_{t_l})} = \sum_{\omega_{t_{l+1}} \in \succ(\omega_{t_l})} \frac{\P(\omega_{t_{l+1}}) \Psi_{\vP,i}(t_{l+1},t_k,\vt^*)(\omega_{t_{l+1}})}{\P(\omega_{t_l})} \\
&= \sum_{\omega_{t_{l+1}} \in \succ(\omega_{t_l})} \frac{\P(\omega_{t_{l+1}}) \P(\tau_i^* > t_k | \fcal_{t_{l+1}})(\omega_{t_{l+1}})}{\P(\omega_{t_l})} = \sum_{\omega_{t_{l+1}} \in \succ(\omega_{t_l})} \frac{\P(\tau_i^* > t_k , \omega_{t_{l+1}})}{\P(\omega_{t_l})} \\
&= \P(\tau_i^* > t_k | \fcal_{t_l})(\omega_{t_l}) = \Psi_{\vP,i}(t_l,t_k,\vt^*)(\omega_{t_l}).
\end{align*}
Therefore, we have constructed a process that clears~\eqref{eq:Kmaturity}. To prove that the enlarged process is Markovian, as with the Corollary~\ref{lemma:markov}, we can simply observe that
\begin{align*}
&(\vK^*(t_l),\vV^*(t_l),\vP^*(t_l,t_k)_{k = l}^\ell) = \hat\Psi(t_l,\vi^*(t_l),\vV^*(t_{l-1});\vx(t_l))  \\ &\qquad \qquad= \hat\Psi(t_l,\diag{\vi^*(t_{l-1})}\ind{\vK^*(t_{l-1})\wedge\vV^*(t_{l-1}) \geq \vec{0}},\vV^*(t_{l-1});f(\vx(t_{l-1}),\tilde\epsilon(t_l))).
\end{align*}
From this, the result is immediate.
\end{proof}

\end{document}